\documentclass[12pt, reqno]{amsart}

\usepackage{amssymb,amsmath,amsthm,amsfonts,amscd,bbm}
\usepackage{enumitem}
\usepackage{pdflscape}% rotating the page
\usepackage[paper=portrait,pagesize]{typearea}
\usepackage{caption}
\usepackage{bm}
\usepackage{stmaryrd}

\usepackage{geometry} % change margin

\usepackage{ifpdf}
\ifpdf
\usepackage[pdftex]{graphicx}
\else
\usepackage[dvips]{graphicx}
\fi
\usepackage[all]{xy}
\usepackage{hyperref}
\usepackage[dvipsnames,table]{xcolor}
\usepackage{graphicx}
\usepackage[outdir=./]{epstopdf}
\usepackage{enumitem}
\usepackage{tensor}
\usepackage{tikz-cd}
\usepackage{multicol}
\usepackage{mathtools}
\usepackage{tocvsec2}
\usepackage{bbm}
\usepackage{longtable}
\usepackage{fullpage}

\newcommand{\doi}[1]{\href{https://dx.doi.org/#1}{{\small DOI:#1}}}

\newcommand{\googlebooks}[1]{(preview at \href{https://books.google.com/books?id=#1}{google books})}

\newcommand{\numdam}[1]{}
\usepackage{mathrsfs}

\usepackage{mathrsfs}
\DeclareMathAlphabet{\mathpzc}{OT1}{pzc}{m}{it}

\usepackage{imakeidx}
\makeindex[options=-s dotted, title=Index of Symbols, columns=1]

\def\semicolon{;}
\def\applytolist#1{
    \expandafter\def\csname multi#1\endcsname##1{
        \def\multiack{##1}\ifx\multiack\semicolon
            \def\next{\relax}
        \else
            \csname #1\endcsname{##1}
            \def\next{\csname multi#1\endcsname}
        \fi
        \next}
    \csname multi#1\endcsname}

\def\calc#1{\expandafter\def\csname c#1\endcsname{{\mathcal #1}}}
\applytolist{calc}QWERTYUIOPLKJHGFDSAZXCVBNM;
\def\bbc#1{\expandafter\def\csname bb#1\endcsname{{\mathbb #1}}}
\applytolist{bbc}QWERTYUIOPLKJHGFDSAZXCVBNM;
\def\bfc#1{\expandafter\def\csname bf#1\endcsname{{\mathbf #1}}}
\applytolist{bfc}QWERTYUIOPLKJHGFDSAZXCVBNM;
\def\sfc#1{\expandafter\def\csname s#1\endcsname{{\sf #1}}}
\applytolist{sfc}QWERTYUIOPLKJHGFDSAZXCVBNM;
\def\fc#1{\expandafter\def\csname f#1\endcsname{{\mathfrak #1}}}
\applytolist{fc}QWERTYUIOPLKJHGFDSAZXCVBNM;
\def\rmc#1{\expandafter\def\csname rm#1\endcsname{{\mathrm #1}}}
\applytolist{rmc}QWERTYUIOPLKJHGFDSAZXCVBNM;

\usepackage{tikz}
\usepackage{tikz-cd}

\usetikzlibrary{arrows,backgrounds,patterns.meta}
\usetikzlibrary{positioning,shadings,cd}
\usetikzlibrary{shapes}
\usetikzlibrary{backgrounds}
\usetikzlibrary{decorations,decorations.pathreplacing,decorations.markings,decorations.pathmorphing}
\usetikzlibrary{fit,calc,through}
\usetikzlibrary{external}
\usetikzlibrary{arrows}
\tikzset{vertex/.style = {shape=circle,draw,fill=black,inner sep=0pt,minimum size=5pt}}
\tikzset{edge/.style = {->,> = latex', bend right}}
\tikzset{
	super thick/.style={line width=3pt}
}
\tikzset{
    quadruple/.style args={[#1] in [#2] in [#3] in [#4]}{
        #1,preaction={preaction={preaction={draw,#4},draw,#3}, draw,#2}
    }
}
\tikzstyle{snake}=[decorate, decoration={snake, segment length=1mm, amplitude=.5mm}]
\tikzset{squiggly/.style={decorate, decoration=snake}}
\tikzstyle{knot}=[preaction={super thick, white, draw}]
\tikzstyle{shaded}=[fill=red!10!blue!20!gray!30!white]
\tikzstyle{unshaded}=[fill=white]
\tikzstyle{empty box}=[circle, draw, thick, fill=white, opaque, inner sep=2mm]
\tikzstyle{annular}=[scale=.7, inner sep=1mm, baseline]
\tikzstyle{rectangular}=[scale=.75, inner sep=1mm, baseline=-.1cm]
\tikzstyle{mid>}=[decoration={markings, mark=at position 0.5 with {\arrow{>}}}, postaction={decorate}]
\tikzstyle{far>}=[decoration={markings, mark=at position 0.65 with {\arrow{>}}}, postaction={decorate}]
\tikzstyle{mid<}=[decoration={markings, mark=at position 0.5 with {\arrow{<}}}, postaction={decorate}]
\tikzstyle{over}=[double, draw=white, super thick, double=]
\tikzstyle{snake}=[decorate, decoration={snake, segment length=1mm, amplitude=.3mm}]
\tikzstyle{saw}=[decorate, decoration={saw, segment length=.7mm, amplitude=.25mm}]

\tikzstyle{coupon}=[draw, very thick, rectangle, rounded corners=5pt]
\tikzset{Rightarrow/.style={double equal sign distance,>={Implies},->},
triplecd/.style={-,preaction={draw,Rightarrow}},
quadruplecd/.style={preaction={draw,Rightarrow,
shorten >=0pt
},
shorten >=1pt,
-,double,double
distance=0.2pt}}
\tikzset{
    tripleline/.style args={[#1] in [#2] in [#3]}{
        #1,preaction={preaction={draw,#3},draw,#2}
    }
}
\tikzstyle{triple}=[tripleline={[line width=.15mm,black] in
      [line width=.7mm,white] in
      [line width=1mm,black]}] 
\tikzset{
    quadrupleline/.style args={[#1] in [#2] in [#3] in [#4]}{
        #1,preaction={preaction={preaction={draw,#4},draw,#3}, draw,#2}
    }
}
\tikzstyle{quadruple}=[quadrupleline={[line width=.3mm,white] in
      [line width=.6mm,black] in
      [line width=1.2mm,white] in
      [line width=1.5mm,black]}]

\newcommand{\roundNbox}[6]{
	\draw[rounded corners=5pt, very thick, #1] ($#2+(-#3,-#3)+(-#4,0)$) rectangle ($#2+(#3,#3)+(#5,0)$);
	\coordinate (ZZa) at ($#2+(-#4,0)$);
	\coordinate (ZZb) at ($#2+(#5,0)$);
	\node at ($1/2*(ZZa)+1/2*(ZZb)$) {#6};
}

\newcommand{\levinHex}[5][]{
\coordinate (center) at (#2, #3);
\ifthenelse{\equal{#1}{}}{}{
\coordinate (pointD) at (canvas polar cs:angle=-30,radius=.866*#4cm);
\coordinate (pointE) at (canvas polar cs:angle=-30,radius=.288*#4cm);
\draw[#1] ($(center)+(pointD)$) -- +($#4*(-.333, .333)$);
}
\foreach \hexSideCounter in {1,2,3,4,5,6} {
\coordinate (pointA) at (canvas polar cs:angle={60*(\hexSideCounter - 1)},radius=#4cm);
\coordinate (pointB) at (canvas polar cs:angle={60*\hexSideCounter},radius=#4cm);
\coordinate (pointC) at (canvas polar cs:angle={60*(\hexSideCounter - 1)},radius=1.5*#4cm);
\draw[#5] ($(center)+(pointA)$) -- ($(center)+(pointB)$); \draw[#5] ($(center)+(pointA)$) -- ($(center)+(pointC)$);
}
}
\newcommand{\levinHexRow}[6][]{
\foreach \hexColumnCounter in {1,...,#4}
{
\levinHex[#1]{{#2 + (1.5 * #5)*(\hexColumnCounter -1)}}{{#3 + (.866 * #5)*(\hexColumnCounter -1)}}{#5}{#6}
}
}
\newcommand{\levinHexGrid}[7][]{
\foreach \a in {1,...,#4}
{
\levinHexRow[#1]{#2 + (1.5 * #6)*(\a - 1)}{#3 - (.866 * #6)*(\a - 1)}{#5}{#6}{#7}
}
}

\newcommand{\tikzmath}[2][]
     {\vcenter{\hbox{\begin{tikzpicture}[#1]#2
                     \end{tikzpicture}}}
     }

\newcommand{\levinHexOpen}[6][]{
\coordinate (center) at (#2, #3);
\ifthenelse{\equal{#1}{}}{}{
\coordinate (pointD) at (canvas polar cs:angle=-30,radius=.866*#4cm);
\coordinate (pointE) at (canvas polar cs:angle=-30,radius=.288*#4cm);
\draw[#1] ($(center)+(pointD)$) -- +($#4*(-.333, .333)$);
}
\foreach \hexSideCounter in {1,2,3,4,5,6} {
\coordinate (pointA) at (canvas polar cs:angle={60*(\hexSideCounter - 1)},radius=#4cm);
\coordinate (pointB) at (canvas polar cs:angle={60*\hexSideCounter},radius=#4cm);
\coordinate (pointC) at (canvas polar cs:angle={60*(\hexSideCounter - 1)},radius=1.5*#4cm);
\ifthenelse{#6=\hexSideCounter}{}{\draw[#5] ($(center)+(pointA)$) -- ($(center)+(pointB)$);}
\draw[#5] ($(center)+(pointA)$) -- ($(center)+(pointC)$);
}
}

\theoremstyle{plain}
\newtheorem{thm}{Theorem}[section]
\newtheorem*{thm*}{Theorem}
\newtheorem{thmalpha}{Theorem}

\newtheorem{cor}[thm]{Corollary}

\newtheorem*{cor*}{Corollary}

\newtheorem*{conj*}{Conjecture}
\newtheorem{lem}[thm]{Lemma}
\newtheorem*{lem*}{Lemma}

\newtheorem*{quest*}{Question}
\newtheorem*{claim*}{Claim}

\theoremstyle{definition}
\newtheorem{defn}[thm]{Definition}
\newtheorem{ansatz}[thmalpha]{Ansatz}

\newtheorem{ex}[thm]{Example}
\newtheorem{sub-ex}[thm]{Sub-Example}
\newtheorem{counter-ex}[thm]{Counter-Example}
\newtheorem{rem}[thm]{Remark}
\newtheorem*{rem*}{Remark}

\newtheorem{warn}[thm]{Warning}  
     
\usepackage{xcolor}
\definecolor{dark-red}{rgb}{0.7,0.25,0.25}
\definecolor{dark-blue}{rgb}{0.15,0.15,0.55}
\definecolor{medium-blue}{rgb}{0,0,.8}
\definecolor{DarkGreen}{RGB}{0,150,0}
\definecolor{rho}{named}{red}
\definecolor{OIorange}{HTML}{E69F00}
\definecolor{OIskyblue}{HTML}{56B4E9}
\definecolor{OIbluishgreen}{HTML}{009E73}
\definecolor{OIyellow}{HTML}{F0E442}
\definecolor{OIblue}{HTML}{0072B2}
\definecolor{OIvermillion}{HTML}{D55E00}
\definecolor{OIreddishpurple}{HTML}{CC79A7}
\definecolor{OIblack}{HTML}{000000}
\hypersetup{
   colorlinks, linkcolor={purple},
   citecolor={medium-blue}, urlcolor={medium-blue}
}
\definecolor{lightgray}{gray}{0.8}
\newcommand{\coev}{\operatorname{coev}}
\newcommand{\Dome}{\operatorname{Dome}}
\newcommand{\End}{\operatorname{End}}

\newcommand{\Forget}{\operatorname{Forget}}

\newcommand{\Hom}{\operatorname{Hom}}

\newcommand{\id}{\operatorname{id}}
\newcommand{\im}{\operatorname{im}}
\newcommand{\Irr}{\operatorname{Irr}}

\newcommand{\rev}{\operatorname{rev}}

\newcommand{\tr}{\operatorname{tr}}
\newcommand{\Tr}{\operatorname{Tr}}
\newcommand{\Tube}{\operatorname{Tube}}

\newcommand{\Stab}{\operatorname{Stab}}

\newcommand{\Bim}{\mathsf{Bim}}
\newcommand{\Hilb}{\mathsf{Hilb}}
\newcommand{\Mod}{\mathsf{Mod}}

\newcommand{\Rep}{\mathsf{Rep}}
\newcommand{\UFC}{\mathsf{UFC}}
\newcommand{\UmFC}{\mathsf{UmFC}}

\DeclareMathOperator{\loc}{loc}

\renewcommand{\MR}[1]{}

\newcommand{\set}[2]{\left\{#1 \middle| #2\right\}}
\newcommand{\ket}[1]{\left|#1\right\rangle}

\title{Non-invertible symmetry enriched string net topological orders}
\author{Luisa Eck, Peter Huston, Kyle Kawagoe, and David Penneys}
\date{\today}

\begin{document}

\begin{abstract}
We propose a definition of a non-invertible symmetry enriched topological order (NI-SETO), and we implement our definition for string net models.
We do so in two ways, using full inclusions of unitary fusion categories (UFCs), as well as anyon condensation.
In both cases, the NI-SETO is a relative center of UFCs.
All NI-SETOs can be realized in either model, where we can use enriched UFCs to get chiral examples on the boundary of a 3D Walker-Wang model representing the anomaly.
We describe several examples of NI-SETOs and compute the qualitative symmetry action on anyons and symmetry defects using tube algebra techniques.
\end{abstract}

\maketitle
\tableofcontents

%%%%%%%%%%%%%%%%%%%%%%%%%%%%%%%%%%%%%%%%%%%%%%%%%%%%
%%%%%%%%%%%%%%%%%%%%%%%%%%%%%%%%%%%%%%%%%%%%%%%%%%%%
%%%%%%%%%%%%%%%%%%%%%%%%%%%%%%%%%%%%%%%%%%%%%%%%%%%%
\section{Introduction}

Symmetry is an important tool in the study of phases of matter \cite{10.1016/B978-0-08-010586-4.50034-1}.
Typically, the notion of symmetry is represented by the mathematical notion of a group, consisting of invertible self-maps.
At this time, it is well accepted that various quantum mathematical objects, like topological phases of matter, naturally live in higher categories, which afford richer notions of \emph{quantum symmetry}.
That is, the self-maps, or \emph{endomorphisms}, of an object in a higher category naturally form a higher monoidal category, which can act as the symmetries of the original system.

Topological orders with a group-like global symmetry are called \emph{symmetry enriched topological orders} (SETOs) \cite{1410.4540}.
If the underlying topological order is characterized by the unitary modular tensor category (UMTC) $\cC$, the SETO is represented by a \emph{$G$-crossed braided fusion category} whose trivially graded component is $\cC$.
Gauging the global on-site symmetry yields a new UMTC, and our original SETO can be viewed as a `half-way point' between the underlying UMTC $\cC$ and the gauged theory.

Since topological orders admit non-invertible quantum symmetries, it is natural to ask: {\textbf{What is a non-invertible version of an SETO?}

To propose an answer to this question, we recall that UMTCs/anyon theories $\cD,\cC$ separated by a gapped domain wall are Witt equivalent; indeed, by the folding trick, such a gapped domain wall corresponds to a gapped boundary from $\cD\boxtimes \overline{\cC}$ to the vacuum, and thus determines a Witt equivalence.
Such Witt equivalences, and how they implement phase transitions, were recently worked out via the theory of generalized orbifold data \cite{MR4537311,2101.02482,MR4798129,MR4804348} and generalized condensations \cite{1905.09566,MR4444089}.
Starting with the UMTC/anyon theory $\cC$, one can implement a Witt equivalence by condensing a nice algebra object in the fusion 2-category $\Mod(\cC)$.
In this article, we posit that the wall excitations on such a Witt-equivalence may be viewed as a \emph{non-invertible symmetry enriched topological order} (NI-SETO).

\begin{ansatz}
\label{ansatz:niseto}
A NI-SETO with trivially graded component $\cC$ corresponds to a rigid, separable, unital $E_1$-algebra $\cS\in\Mod(\cC)$ whose unit map $\cC\to \cS$ is a \emph{fully faithful 1-morphism}.\footnote{The notion of a fully faithful 1-morphism in a 2-category was introduced in \cite[\S1.2.2]{1812.11933}: 
the unit $U: \cC\to \cS$ is called fully faithful if for every $\cM\in \Mod(\cC)$, the postcomposition functor $U_*:\Hom(\cM\to \cC)\to \Hom(\cM\to \cS)$ is fully faithful.
The notion of a rigid $E_1$-algebra $\cS$ was discussed in detail in \cite[\S6.1]{2410.05120}, and basically corresponds to the rigidity of $|\cS|$.
}
\end{ansatz}

Similar to how twist defects give a $G$-crossed braided category for an on-site $G$-symmetry \cite{1410.4540}, we get a UFC of \emph{twist defects}
$$
|\cS| := \Hom(\cC\to \cS)
$$
using the graphical calculus construction of \cite{MR4535015}, which comes equipped with a canonical braided central functor $\cC\to Z(|\cS|)$.\footnote{More should be true: $|\cdot|$ should be an equivalence of 3-categories
$$
\{\text{nice }E_1-\text{algebras in }\Mod(\cC)\}\cong \UmFC^\cC,
$$
the 3-category of $\cC$-\emph{enriched unitary multifusion categories} \cite{MR4640433}; see \cite[Lem.~2.23]{MR4654609} and  \cite[Thm.~5.1.2]{MR4600461}.
}
Since $\cC$ is modular, we have a factorization $Z(|\cS|)\cong \cC\boxtimes \overline{\cD}$ for some UMTC $\cD$ \cite{MR1990929}, and thus $|\cS|$ implements a Witt equivalence to a $\cD$ anyon theory (see also \cite[Thm.~2.18]{MR4498161}).

\[
\tikzmath{
\fill[blue!20] (-1,-1) rectangle (0,1);
\fill[red!20] (1,-1) rectangle (0,1);
\draw[thick] (0,-1) node[below]{\scriptsize Wall excitations $|\cS|$}-- (0,1);
\node[blue] at (-.5,0) {$\scriptstyle\cC$};
\node[red] at (.5,0) {$\scriptstyle\cD$};
}
\qquad\qquad
\tikzmath{
\draw[snake,thick,->] (0,0) --node[above]{\scriptsize Folding trick} (2,0);
}
\qquad\qquad
\tikzmath{
\fill[blue!20] (-1,-1) rectangle (1,1);
\filldraw[fill=red!20, thick] (-.2,1) -- (-.2,0) arc(-180:0:.2cm) -- (.2,1);
%\draw[thick] (0,-1) node[below]{$\scriptstyle |\cS|$}-- (0,1);
\node[blue] at (-.5,0) {$\scriptstyle\cC$};
\node[red] at (0,.2) {$\scriptstyle\cD$};
\node at (0,-1.25) {\scriptsize Twist defects $|\cS|$};
}
\]

Morally speaking, $|\cS|$ is a generalization of the notion of center of a bimodule category explored in \cite{MR1151906,MR2587410}, so $|\cS|$ behaves like a \emph{relative center}. 
We make this precise in our two implementations of Ansatz~\ref{ansatz:niseto} below.
The condition that the unit of $\cS$ is fully faithful is exactly the condition that the composite
\begin{equation}
\label{eq:FullyFaithfulComposite}
\cC\to Z(|\cS|) \to |\cS|
\end{equation}
is fully faithful,
so that the anyon theory is a full subcategory of $|\cS|$.\footnote{\label{footnote:localizeFF} If the unit $U: \cC\to \cS$ is fully faithful, then 
$$
\cC=\Hom(\cC\to \cC) \xrightarrow{U_*} \Hom(\cC\to \cS)=|\cS|
$$
is fully faithful, and this functor is exactly the composite \eqref{eq:FullyFaithfulComposite}.
Conversely, $U_*$ is fully faithful at $\cC$ if and only if the evaluation $\epsilon_\cC :U^L_*U_*\to \id_{\End(\cC)}$ is an isomorphism.
But since $\epsilon: (U^LU)_*\to 1_{\Hom(-\to\cC)}=(\epsilon_{\cC})_*$, by the Yoneda Lemma, $\epsilon: U^LU\to 1_\cC$ is an isomorphism in $\Mod(\cC)$, whence $U$ is fully faithful.
} 
This ensures that the `trivial' NI-symmetry acts as the identity on $\cC$ itself.

We implement Ansatz~\ref{ansatz:niseto} in two ways:
\begin{itemize}
\item 
Given a full inclusion $\cX\subset \cY$ of unitary fusion categories (UFCs), we may view $\cY\in \Bim(\cX)$ as such an $E_1$-algebra.
Under the equivalence $\Bim(\cX)\cong\Mod(Z(\cX))$, $|\cY|=Z_\cX(\cY)$, the \emph{relative center}.\footnote{
Recall that $Z(Z_\cX(\cY))\cong Z(\cX)\boxtimes \overline{Z(\cY)}$, and thus $Z_\cX(\cY)$ witnesses the desired Witt equivalence between $Z(\cX)$ and $Z(\cY)$.
}
When $\cY$ is $G$-graded, this relative center is a $G$-crossed braided extension of $Z(\cX)$ \cite{MR2587410} (see also \cite{MR4281262}), recovering the usual SETO story \cite{1410.4540}.

In \S\ref{sec:Vanilla} below, we give a commuting projector lattice model to realize this NI-SETO, obtained from the usual string net model for $\cX$ by adding the simple objects in $\Irr(\cY)\setminus\Irr(\cX)$ as new edge labels, along with describing how one fuses twist defects. 
We compute how the NI-symmetry is seen by sweeping a domain wall in \S\ref{subsec:domainwallwrap}, and we calculate the NI-symmetry action on anyons using tube algebra techniques generalizing \cite{1711.07982} for the \emph{relative tube algebra} $\Tube_\cX(\cY)$ of our full inclusion.
\begin{equation}
\label{eq:SweepDomainWall}
\tikzmath{
\fill[blue!20] (.5,.5) rectangle (1.5,1);
\draw[step=.5] (-.25,-.25) grid (2.25,2.25);
\fill[blue!20] (.98,.51) rectangle (1.02,.8);
\draw[thick, orange, knot, snake] (-.25,.25) -- (2.25,.25);
}
\qquad\qquad
\rightsquigarrow
\qquad\qquad
\tikzmath{
\fill[blue!20] (.5,.5) rectangle (1.5,1);
\draw[step=.5] (-.25,-.25) grid (2.25,2.25);
\draw[thick, orange] (.5,.5) rectangle (1.5,1);
\fill[blue!20] (.98,.51) rectangle (1.02,.8);
\draw[thick, orange, knot, snake] (-.25,1.75) -- (2.25,1.75);
\draw[thick, orange] (1,1) rectangle (1,1.75);
}
\end{equation}

This construction also works on the boundary of an invertible 4D TQFT, which we implement by making our 2D string-net the boundary of a 3D Walker-Wang model associated to a UMTC $\cB$ representing the anomaly as in \cite{MR4640433,2305.14068}.
This allows us to realize NI-SETOs in which $\cC$ is chiral.

\item 
Given a UFC $\cY$ and a condensable algebra $A\in Z(\cY)$, we consider the UFC $\cY_A$ of $A$-modules in $\cY$.\footnote{Endowing $\cY_A$ with a monoidal structure requires us to fix an equivalence ${}_A\cY\cong \cY_A$, i.e., a way to promote every right $A$-module to an $A-A$ bimodule by using the half-braiding with $\cY$.
There are two canonical choices; see \cite[\S3.3]{MR3039775}.}
We may view ${}_A\cY_A$, the $A-A$ bimodules in $\cY$, as such an $E_1$-algebra in $\Bim(\cY_A)$.
In this case, we have a full inclusion $\cY_A\hookrightarrow {}_A\cY_A$, and we may identify
$$
|{}_A\cY_A| 
\cong
\Hom_{\cY_A-\cY_A}(\cY_A\to {}_A\cY \boxtimes_\cY \cY_A)
\cong
\End_{\cY_A-\cY}(\cY_A)
\underset{\text{(\cite[Ex.~III.2]{MR4640433})}}{\cong}
Z(\cY)_A,
$$
the $A$-modules in $Z(\cY)$, which includes $Z(\cY_A)=Z(\cY)_A^{\loc}$ as a full subcategory \cite[Thm.~3.20]{MR3039775}.
In other words, we have $Z_{\cY_A}({}_A\cY_A)\cong Z(\cY)_A$.

From the tensor category point of view, the inclusion $\cY_A\subseteq{}_A\cY_A$ is just an example of a full inclusion.
However, we can also associate such an inclusion with a different string net model construction which captures anyon condensation within the string net model associated to $\cY$ \cite{MR4642306}, involving an $A$-\emph{tube algebra} to measure local modules.
In \S\ref{sec:Condensation} below, we describe how this model realizes a NI-SETO, including a description of how one sees the NI-symmetry action via sweeping domain walls.\footnote{The fact that $\cY_A\subseteq{}_A\cY_A$ is a full inclusion which implements the condensation of $A\in Z(\cY)\cong Z({}_A\cY_A)$ was used in \cite{10.1007/JHEP05_2025_156} in order to study anyon condensation via a lattice model similar to, but simpler than, the one introduced in \cite{MR4642306}. 
It would be interesting to translate the details of the non-invertible symmetry actions in \S\ref{sec:Condensation} into the fluxon condensation framework of that work.}
This construction generalizes existing work in which similar anyon condensation models were used to study invertible SETO \cite{PhysRevB.94.235136}.
Again, our construction also works for a $\cB$-enriched UFC $\cY$ on a 2D boundary of a 3D Walker-Wang model, provided the condensable algebra $A$ lies in $Z^\cB(\cY)$, the M\"uger centralizer/enriched center \cite{MR3725882}.
\end{itemize}

Above, we see two situations that are morally dual to each other:
\begin{itemize}
\item 
A fully faithful tensor functor $\cX\hookrightarrow \cY$ (full inclusion), and
\item 
A dominant tensor functor $\cY \twoheadrightarrow \cY_A$ (condensation).
\end{itemize}
In \S\ref{sec:Implementing} below, we show that every $E_1$ algebra satisfying the requirements of Ansatz~\ref{ansatz:niseto} is equivalent to algebras of both of these forms, so that all NI-SETOs enjoy both kinds of lattice model implementation.
In particular, these constructions show us how to realize every $\cS$ satisfying Ansatz~\ref{ansatz:niseto} as a split condensation algebra.

%%%%%%%%%%%%%%%%%%%%%%%%%%%%%%%%%%%%%%%%%%%%%%
\subsection{Novel features of NI-SETOs}

One distinguishing feature of NI-SETOs is that the symmetry action can transform a single anyon into a direct sum of excitations, including anyons and twist defects. 
A qualitatively similar phenomenon occurs in 1+1D CFTs, where non-invertible duality lines can transform a local field into a nonlocal one
\cite{PhysRevB.3.3918}, see also \cite[Fig.~4]{Shao:2023gho}.
This raises the following immediate concerns (a.k.a.~features!):
\begin{enumerate}
    \item The energy cost of a domain wall scales with length, whereas the energy cost of an anyon is subextensive, so the symmetry cannot commute with the Hamiltonian.
    \item 
    Anyon states have no signature of the location of the string operators used to create them, so a NI-symmetry action must somehow pick where to place the domain wall.
    \item 
    The trivial anyon $1$ can be mapped to a direct sum which includes nontrivial anyons, e.g.~$1 \mapsto 1+e$.
\end{enumerate}

The first concern may be alleviated by realizing that there is not just one NI-symmetry action and Hamiltonian, but an entire \emph{family} of NI-symmetry actions and Hamiltonians parameterized by the potential locations of domain walls.
Although the dependence of the symmetry action on domain wall locations is new, this family of Hamiltonians was already considered in the case of group symmetry.
Indeed, in the standard group SETO setting \cite{1410.4540}, $g$-symmetry defect states are obtained by twisting the Hamiltonian by $g$ along a finite path and considering excitations localized at one end of the resulting domain wall.
Applying the $h$-symmetry to this state results in an $hgh^{-1}$ defect which we model with an $hgh^{-1}$ twisted Hamiltonian instead.
In other words, it is already standard practice to describe these excitations with a family of Hamiltonians which are not strictly invariant under the symmetry.
Our proposal for these non-invertible symmetries simply extends this principle to actions that can transform anyons in $Z(\cX)$ to defects in $Z_\cX(\cY)$. 

Since the symmetry actions now depend on a choice of potential domain wall locations, we have also resolved concern (2).

Resolving concern (3) is a bit trickier, since it relies on the definition of a localized excitation, which violates only a few terms of our local Hamiltonian.
If we view the anyon 1 as the ground state, which does not violate any term in the Hamiltonian, then the ground state is symmetric under a NI-symmetry action.
This can be seen from the fact that the domain wall implementing the NI-symmetry action passes freely along the lattice.
However, if we allow for a location which potentially violates the Hamiltonian as in \eqref{eq:SweepDomainWall} above, we see that as tube algebra representations, wrapping a domain wall can take the trivial representation corresponding to the anyon 1 to nontrivial anyons.
In particular, the tensor product of $\Tube_\cX(\cY)$ representations is not the ordinary Hilbert space tensor product cf.~\cite{MR4808260}, and this tensor product generally does not admit a braiding.
Thus, the symmetry action on a many-body state must depend \emph{a priori} on some additional choice, such as domain wall locations.

In other words, while in the group symmetry case, the untwisted Hamiltonian was invariant under the symmetry action, for NI-symmetries, the untwisted Hamiltonian is not invariant under the symmetry action.
The NI-symmetry action will generally not commute with the untwisted Hamiltonian at the endpoints of potential domain walls, resulting in nontrivial excitations appearing from the vacuum.

%%%%%%%%%%%%%%%%%%%%%%%%%%%%%%%%%%%%%%%%%%%%%%%%%
\subsection{Implementing the ansatz by full inclusions and condensations}
\label{sec:Implementing}

In this section, we explain why Ansatz \ref{ansatz:niseto} can always be implemented both by fully faithful (enriched) inclusions and by (enriched) condensations.

\subsubsection{Realization by full inclusions}

Every $\cS\in \Mod(\cC)$ can be realized by a $\cC$-enriched UFC $|\cS|:= \Hom_{\cC-\cC}(\cC\to \cS)$ such that the composite
$$
\cC\to Z^\cC(|\cS|) \to |\cS|
$$
is fully faithful.
Thus, every NI-SETO $\cS$ comes from a full inclusion, namely $\cC\subseteq|\cS|$ of $\overline{\cC}$-enriched fusion categories.

More is true: for every UMTC $\cB$, $\cB$-enriched fusion category $\cX$, and NI-SETO $\cS\in\Mod(Z^{\cB}(\cX))$, there is a full inclusion $\cX\subseteq\cY$ with $Z^{\cB}_{\cX}(\cY)=|\cS|$.
Indeed, we define $\cY$ to be the image of $\cS$ under the equivalence
$\Mod(Z^{\cB}(\cX))\cong\Bim^{\cB}(\cX)$ from \cite[Prop.~II.13]{MR4640433} so that
\begin{equation}
\label{eq:RealizationAsHoms}
|\cS|=\Hom^\cB_{\cX-\cX}(\cX\to \cY)=Z_\cX^\cB(\cY).
\end{equation}

\subsubsection{From full inclusions to condensations}

Given a fully faithful inclusion $\cX\subset \cY$ of $\cB$-enriched UFCs, by dualizability in $\UmFC^\cB$ applied to \eqref{eq:RealizationAsHoms},
$$
|\cS|\cong\End^\cB_{\cX-\cY}(\cY).
$$
By \cite[Const.~II.8]{MR4640433}, 
\[
Z(\End^{\cB}_{\cX-\cY}(\cY))
\cong 
Z^{\cB}(\cX)\boxtimes Z^{\cB}(\cY)^{\rev}.
\]
Now consider the canonical Lagrangian algebra $L$ in $Z^{\cB}(\cX)\boxtimes Z^{\cB}(\cY)^{\rev}$ corresponding to the forgetful functor $Z(\End^{\cB}_{\cX-\cY}(\cY))\to \End^{\cB}_{\cX-\cY}(\cY)$.
By \cite[\S3]{MR3022755}, $L$ corresponds to a triple $(A,B,\Phi)$, where $A\boxtimes 1\in Z^{\cB}(\cX)$ and $1\boxtimes B\in Z^{\cB}(\cY)$ are the largest condensable subalgebras of $L$ in $Z^{\cB}(\cX)\boxtimes\Hilb$ and $\Hilb\boxtimes Z^{\cB}(\cY)$ respectively, and $\Phi:Z^{\cB}(\cX)_A^{\loc}\to Z^{\cB}(\cY)_B^{\loc}$ is a braided equivalence.
In fact, $A=U^LU$, so by Footnote~\ref{footnote:localizeFF}, $A=1$, and we simply have a braided monoidal equivalence $\Phi:Z^{\cB}(\cX)\to Z^{\cB}(\cX)_B^{\loc}$.
Consequently, the anyon condensation $\cY_B\subseteq\cY$ produces the same $\cS\in\Mod(Z^{\cB}(\cY_B))\cong\Mod(Z^{\cB}(\cX))$.

Indeed, the equivalence $\Mod(Z^{\cB}(\cY_B))\cong\Bim^{\cB}(\cY_B)$ sends $\cS$ to ${}_B\cY_B$, the multifusion category of $B-B$ bimodules in $\cY$.
Since $\cY_B$ is a full subcategory of ${}_B\cY_B$,\footnote{
Here, there are two ways to include $\cY_B$ into ${}_B\cY_B$ corresponding to $\alpha$-induction \cite{MR1815993}. Indeed, the tensor product on $\cY_B$ is defined by restricting the monoidal product $\otimes_B$ of ${}_B\cY_B$ along one of these inclusions, which then becomes the monoidal inclusion of $\cY_B$ as a full subcategory.
}
this shows that anyon condensations are actually a special case of full inclusions.

\begin{rem}
One consequence of this construction
is that the forgetful functor $Z^{\cB}(\cY)\to Z^{\cB}_{\cX}(\cY)$ is always dominant.
Indeed,
\[Z^{\cB}_{\cX}(\cY)\cong\End^{\cB}_{\cX-\cY}(\cY)\cong(Z^{\cB}(\cX)\boxtimes Z^{\cB}(\cY))_L\cong Z^{\cB}(\cY)_B\]
where the last equality relies on the fact that $U^LU\cong 1$.    
Thus every twist defect comes from an anyon in the gauged theory.
\end{rem}

%%%%%%%%%%%%%%%%%%%%%%%%%%%%%%%%%%%%%%%%%%%%%%%%%%%%
\subsection{Conclusions and outlook}

We have given Ansatz~\ref{ansatz:niseto} as a candidate definition of non-invertible symmetry enrichment of a (2+1)D topological order.
However, our approach leaves open important questions about the formalization of the notion of non-invertible symmetry enrichment.
In particular, our definition is incomplete, as it does not include a structure which abstractly captures the collection of symmetries which are acting ({e.g.}~a group, hypergroup \cite{Bischoff:2016jmy}, hopf algebra, hopf monad \cite{MR3905557}, etc.), independent of their realization as symmetries of a particular UMTC.
It is unsatisfying to say that the algebra object $\cS\in\Mod(\cB)$ is acting, because this does not let us compare actions of the same abstract collection of non-invertible symmetries on different UMTCs.
In contrast, for a group $G$, there are already definitions of $G$-action which make sense for any UMTC.
One is that of a $G$-crossed braided extension; another is that of a functor from $\Rep(G)$.
Such definitions allow for the formulation of natural notions like $G$-symmetric domain wall between different $G$-symmetry enriched topological orders, something Ansatz~\ref{ansatz:niseto} does not appear to support.
We aim to show that Ansatz~\ref{ansatz:niseto} is the shadow of natural notions of non-invertible symmetry action, such as that of a monoidal functor $\mathfrak{G}\to \Mod(\cB)$, in future work.

%%%%%%%%%%%%%%%%%%%%%%%%%%%%%%%%%%%%%%%%%%%%%%%%%%%%
\subsection{Acknowledgments}

The authors would like to thank
Maissam Barkeshli,
Clement Delcamp,
Rajath Radhakrishnan,
Corey Jones,
Yuan-Ming Lu,
Milo Moses, 
Shu-Heng Shao,
Dom Williamson,
and Xinping Yang
for helpful conversations.
During the preparation of this manuscript, the authors learned of an independent treatment of some of the concepts in this article in joint work of Clement Delcamp and Dom Williamson, which should appear close to this work.

This research project began at the 2025 program on Generalized Symmetries: High-Energy, Condensed Matter and Mathematics at the Kavli Institute for Theoretical Physics (KITP), supported by NSF PHY-2309135.
The authors would like to thank KITP and the co-organizers of the program for their hospitality.
DP was supported by NSF DMS 2154389.
PH was supported by EPSRC Programme Grant {No.}~EP/W007509/1. 
LE was supported by
the Walter Burke Institute for Theoretical Physics at Caltech.
% : ``Combinatorial Representation Theory: Discovering the Interfaces of Algebra with Geometry and Topology''.

%%%%%%%%%%%%%%%%%%%%%%%%%%%%%%%%%%%%%%%%%%%%%%%%%%%%
%%%%%%%%%%%%%%%%%%%%%%%%%%%%%%%%%%%%%%%%%%%%%%%%%%%%
%%%%%%%%%%%%%%%%%%%%%%%%%%%%%%%%%%%%%%%%%%%%%%%%%%%%
\section{String nets and defects for non-invertible SETO}
\label{sec:Vanilla}

We begin by reviewing topological defects for string nets from \cite{MR2942952}.
We use the skein theoretic unitary tensor category approach from \cite{MR3204497,2305.14068} as opposed to the 6j-symbol approach from \cite{PhysRevB.71.045110,MR2726654,PhysRevB.103.195155}.
Given a UFC $\cX$ and an $\cX-\cX$ bimodule $\cM$, we introduce a topological defect, and we describe how the \emph{relative tube algebra} $\Tube_\cX(\cM)$ can be used to identify the twist defects at the end of the topological defect, as $\Mod(\Tube_\cX(\cM))\cong \Hom_{\cX-\cX}(\cX\to \cM)$.
By the folding trick (see \cite[p19]{2305.14068}), the relative tube algebra is Morita equivalent to the weak Hopf $\rmC^*$ strip algebra used in \cite{MR2942952} to study boundary excitations when $\cM = \cN\boxtimes \cN^{\rm op}$ where $\cN$ is an $\cX$-module category corresponding to a gapped boundary from $Z(\cX)$ to $\Hilb$.
As a sanity check, we note that dualizability in $\UFC$ implies that
$$
\End({}_\cX\cN) \cong 
\Hom_{\cX-\cX}(\cX\to \cN\boxtimes \cN^{\rm op}) 
=
\Hom_{\cX-\cX}(\cX\to \cM). 
$$

The novel part of this section is our analysis of the case in which $\cM=\cY$ is a fusion category containing $\cX$ as a full subcategory.
We discuss how to fuse twist defects using the graphical calculus of strings on tubes from \cite{MR3578212} and the Day convolution on $\Mod(\Tube_\cX(\cY))$ generalizing the discussion from \cite[\S{III.B}]{MR4808260}.
In this case,
$$
\Mod(\Tube_\cX(\cY))\cong \Hom_{\cX-\cX}(\cX\to \cY)\cong Z_\cX(\cY),
$$
the relative center.
In this sense, we say that $Z_\cX(\cY)$ is a 
\emph{non-invertible symmetry enriched} topological order.
Indeed, when $\cY$ is a $G$-graded extension of $\cX$, $Z_\cX(\cY)$ is an honest $G$-crossed braided extension of $Z(\cX)$ by \cite{MR2587410}.
In the event that $\cY = \cN\boxtimes \cN^{\rm op}$ as above, we recover the same fusion of topological wall defects from \cite{MR2942952} by the folding trick. 

%%%%%%%%%%%%%%%%%%%%%%%%%%%%%%%%%%%%%%%%%%%%%%%%%%%%
\subsection{The unitary tensor category string net model}
\label{ssec:LWReview}

We rapidly recall the unitary tensor category string net model for a UFC $\cX$ from \cite{MR3204497,2305.14068}.
The local Hilbert space assigned to a vertex in a $\bbZ^2$ lattice is
$$
\cH_v:=\cX(X\otimes X\to X\otimes X)
=
\left\{\tikzmath{
\draw (-.7,0) -- (-.3,0);
\draw (.7,0) -- (.3,0);
\draw (0,-.7) -- (0,-.3);
\draw (0,.7) -- (0,.3);
\draw[thin, cyan] (-.7,.7) -- (.7,-.7);
\draw[->, cyan, thin] (-.6,.4) -- (-.4,.6);
\draw[<-, cyan, thin] (.6,-.4) -- (.4,-.6);
\roundNbox{fill=white}{(0,0)}{.3}{0}{0}{$f$}
}
\right\}
\qquad\qquad
X:=\bigoplus_{x\in\Irr(\cX)} x.
$$
The space $\cH_v$ carries the \emph{skein module inner product} given by
$$
\langle f |g\rangle := \sum_{w,x,y,z\in\Irr(\cX)} \frac{1}{\sqrt{d_wd_xd_yd_z}}\tr_\cX((p_w\otimes p_x)f^\dag(p_y\otimes p_z)g).
$$
The projections are inserted with the quantum dimension scalars in order to make gluing isometric.
For each edge $\ell$ between neighboring sites $u,v$ in $\bbZ^2$, we have an edge projector $A_\ell$ which projects onto the subspace of composite morphisms which are composed along the edge $\ell$.
As gluing is an isometry, $A_\ell$ is an orthogonal projection.
For each square plaquette and $x\in \Irr(\cX)$, we have an operator $B_p^x$ given by
$$
\tikzmath{
\foreach \x in {0,1}{
\foreach \y in {0,1}{
\draw ($ (-.6,0) + 1.5*(\x,\y) $) -- ($ (-.3,0) + 1.5*(\x,\y) $);
\draw ($ (.6,0) + 1.5*(\x,\y) $) -- ($ (.3,0) + 1.5*(\x,\y) $);
\draw ($ (0,-.6) + 1.5*(\x,\y) $) -- ($ (0,-.3) + 1.5*(\x,\y) $);
\draw ($ (0,.6) + 1.5*(\x,\y) $) -- ($ (0,.3) + 1.5*(\x,\y) $);
\roundNbox{fill=white}{($ 1.5*(\x,\y)$)}{.3}{0}{0}{$f_{\x\y}$}
}}
}
\longmapsto
\tikzmath{
\foreach \x in {0,1}{
\foreach \y in {0,1}{
\draw ($ (-.6,0) + 1.5*(\x,\y) $) -- ($ (-.3,0) + 1.5*(\x,\y) $);
\draw ($ (.6,0) + 1.5*(\x,\y) $) -- ($ (.3,0) + 1.5*(\x,\y) $);
\draw ($ (0,-.6) + 1.5*(\x,\y) $) -- ($ (0,-.3) + 1.5*(\x,\y) $);
\draw ($ (0,.6) + 1.5*(\x,\y) $) -- ($ (0,.3) + 1.5*(\x,\y) $);
\roundNbox{fill=white}{($ 1.5*(\x,\y)$)}{.3}{0}{0}{$f_{\x\y}$}
}}
\draw[thick, cyan, mid<] (.3,.9) arc (180:90:.3cm);
\draw[thick, cyan, far>] (.3,.6) arc (-180:-90:.3cm);
\draw[thick, cyan, mid<] (.9,1.2) arc (90:0:.3cm);
\draw[thick, cyan, far>] (.9,.3) arc (-90:0:.3cm);
}
\qquad\qquad
\id_x =
\tikzmath{
\draw[thick,cyan,mid>] (0,-.5) -- (0,.5);
}
$$
The output of this operation must be resolved back into the local Hilbert spaces using the semisimplicity relation in $\cX$ by choosing ONBs of trivalent vertices for the spaces
$$
\cX(X\otimes x\to X)
\qquad\qquad\text{and}\qquad\qquad
\cX(x\otimes X\to X).
$$
The plaquette operator $B_p$ is then given by $\frac{1}{D_\cX}\sum_{x\in\Irr(\cX)} d_x B_p^x$, where $D_\cX=\sum_{x\in\Irr(\cX)} d_x^2$ is the \emph{global dimension} of $\cX$.

The local Hamiltonian of the system is given by
$$
H:= -\sum A_\ell - \sum B_p.
$$
The anyonic localized topological excitations of this model are described by simple objects in the Drinfeld center $Z(\cX)$.
Given an anyonic localized excitation, one can measure/determine its type by looking at the action of the tube algebra $\Tube(\cX)$.
That is, if our excited state $|\psi\rangle$ is localized in two adjacent plaquettes corresponding to violations of two $B_p$ and one $A_\ell$ term of the  Hamiltonian, 
$$
\tikzmath{
\foreach \x in {0,1,2}{
\foreach \y in {0,1}{
\draw ($ (-.7,0) + 1.4*(\x,\y) $) -- ($ (-.3,0) + 1.4*(\x,\y) $);
\draw ($ (.7,0) + 1.4*(\x,\y) $) -- ($ (.3,0) + 1.4*(\x,\y) $);
\draw ($ (0,-.7) + 1.4*(\x,\y) $) -- ($ (0,-.3) + 1.4*(\x,\y) $);
\draw ($ (0,.7) + 1.4*(\x,\y) $) -- ($ (0,.3) + 1.4*(\x,\y) $);
\roundNbox{fill=white}{($ 1.4*(\x,\y)$)}{.3}{0}{0}{$f_{\x\y}$}
}}
\fill[white] ($ 1.4*(.9,.35) $) rectangle ($ 1.4*(1.1,.65) $);
\fill[blue!30, rounded corners=3pt] (.7,.1) -- (.5,.1) -- (.1,.5) -- (.1,.9) -- (.5,1.3) -- (.9,1.3) -- (1.3,.9) -- (1.5,.9) -- (1.9,1.3) -- (2.3,1.3) -- (2.7,.9) -- (2.7,.5) -- (2.3,.1) -- (1.9,.1) -- (1.5,.5) -- (1.3,.5) -- (.9,.1) -- (.7,.1);
\node at (1.4,.7) {\scriptsize{excitation}};
}
\quad\rightsquigarrow\quad
\tikzmath{
\foreach \x in {0,1,2}{
\foreach \y in {0,1}{
\draw ($ (-.7,0) + 1.4*(\x,\y) $) -- ($ (-.3,0) + 1.4*(\x,\y) $);
\draw ($ (.7,0) + 1.4*(\x,\y) $) -- ($ (.3,0) + 1.4*(\x,\y) $);
\draw ($ (0,-.7) + 1.4*(\x,\y) $) -- ($ (0,-.3) + 1.4*(\x,\y) $);
\draw ($ (0,.7) + 1.4*(\x,\y) $) -- ($ (0,.3) + 1.4*(\x,\y) $);
\roundNbox{fill=white}{($ 1.4*(\x,\y)$)}{.3}{0}{0}{$f_{\x\y}$}
}}
\fill[white] ($ 1.4*(.9,.23) $) rectangle ($ 1.4*(1.1,.65) $);
\fill[blue!30, rounded corners=3pt] (.7,.1) -- (.5,.1) -- (.1,.5) -- (.1,.9) -- (.5,1.3) -- (.9,1.3) -- (1.3,.9) -- (1.5,.9) -- (1.9,1.3) -- (2.3,1.3) -- (2.7,.9) -- (2.7,.5) -- (2.3,.1) -- (1.9,.1) -- (1.5,.5) -- (1.3,.5) -- (.9,.1) -- (.7,.1);
%\node at (1.4,.7) {\scriptsize{excitation}};
}
\quad\rightsquigarrow\quad
\tikzmath{
\foreach \x in {0,1,2}{
\foreach \y in {0,1}{
\draw ($ (-.7,0) + 1.4*(\x,\y) $) -- ($ (-.3,0) + 1.4*(\x,\y) $);
\draw ($ (.7,0) + 1.4*(\x,\y) $) -- ($ (.3,0) + 1.4*(\x,\y) $);
\draw ($ (0,-.7) + 1.4*(\x,\y) $) -- ($ (0,-.3) + 1.4*(\x,\y) $);
\draw ($ (0,.7) + 1.4*(\x,\y) $) -- ($ (0,.3) + 1.4*(\x,\y) $);
\roundNbox{fill=white}{($ 1.4*(\x,\y)$)}{.3}{0}{0}{$f_{\x\y}$}
}}
\fill[white] ($ 1.4*(.9,.23) $) rectangle ($ 1.4*(1.1,.65) $);
\filldraw[thick,blue,fill=blue!30, rounded corners=3pt] (.7,.1) -- (.5,.1) -- (.1,.5) -- (.1,.9) -- (.5,1.3) -- (.9,1.3) -- (1.3,.9) -- (1.5,.9) -- (1.9,1.3) -- (2.3,1.3) -- (2.7,.9) -- (2.7,.5) -- (2.3,.1) -- (1.9,.1) -- (1.5,.5) -- (1.3,.5) -- (.9,.1) -- (.7,.1);
\draw (1.4,.9) -- (1.4,.7);
\fill[blue] (1.4,.9) circle (.05cm);
}
$$
we may apply a projection $p_1$ to $|\psi\rangle$ at exactly one of the local Hilbert spaces neighboring $\ell$.\footnote{One can also skip applying $p_1$ and instead act with a weak $\rmC^*$ Hopf algebra Morita equivalent to $\Tube(\cX)$ as in \cite[\S2.4]{2305.14068}.}
This produces a state which lies in a representation (right module, as the action is by precomposition) of the tube algebra, and the anyon type corresponds to the isotypic component of the representation.
We refer the reader to \cite{2305.14068} for more details.

There is an elegant description of $\Tube(\cX)$ as $\End_{Z(\cX)}(\Tr(X))$ by the following adjunction isomorphisms:
\begin{equation}
\label{eq:tubeViaTrace}
\begin{aligned}
\bigoplus_{x,y,c\in\Irr(\cX)}
\cX(c\otimes x\to y\otimes c)
&\cong
\bigoplus_{c\in\Irr(\cX)}
\cX(c\otimes X\otimes \overline{c}\to X)
\cong
\cX(F\Tr(X)\to X)
\\&\cong
Z(\cX)(\Tr(X)\to \Tr(X))
\end{aligned}
\end{equation}
Here, $\Tr:\cX\to Z(\cX)$ is the unitary adjoint \cite{MR4750417} of the forgetful functor $F: Z(\cX)\to \cX$, denoted $F\dashv^\dag \Tr$.
The functor $\Tr$ is often denoted $I$ in the literature \cite{MR1966525}, but it has the structure of a \emph{categorified trace} \cite{MR3578212}, which affords a powerful graphical calculus of strings on tubes \cite{MR4528312}, which are allowed to branch and braid.
In particular, since $F\dashv^\dag \Tr$ and $F$ is monoidal, $\Tr$ comes equipped with a lax monoidal structure denoted $\mu_{\bullet,\bullet}$ represented by a pair of pants;
of more use to us is the oplax structure given by $ \mu^\dag_{\bullet,\bullet}$.
$$
\tikzmath{
\draw[thick, blue] (0,-.8) --node[left]{$\scriptstyle X$} (0,-.3);
\draw[thick, blue] (0,.8) --node[left]{$\scriptstyle X$} (0,.3);
\roundNbox{}{(0,0)}{.3}{0}{0}{$f$}
}
\overset{\Tr}{\longmapsto}
\tikzmath{
\draw[thick] (.5,-.6) -- (.5,1);
\draw[thick] (-.5,-.6) -- (-.5,1);
\draw[thick] (0,1) ellipse (.5 and .2);
\draw[thick] (.5,-.6) arc(0:-180:.5 and .2);
\draw[thick,dotted] (.5,-.6) arc(0:180:.5 and .2);
\draw[thick, blue] (0,-.8) --node[left]{$\scriptstyle X$} (0,-.3);
\draw[thick, blue] (0,.8) --node[left]{$\scriptstyle X$} (0,.3);
\roundNbox{}{(0,0)}{.3}{0}{0}{$f$}
}
\qquad\qquad\qquad
\mu_{X,X}^\dag
=
\tikzmath{
\draw[thick] (0,0) arc (0:-180:.3 and .1);
\draw[thick, dotted] (0,0) arc (0:180:.3 and .1);
\draw[thick] (-.5,1.4) arc (0:-180:.3 and .1);
\draw[thick] (-.5,1.4) arc (0:180:.3 and .1);
\draw[thick] (.5,1.4) arc (0:-180:.3 and .1);
\draw[thick] (.5,1.4) arc (0:180:.3 and .1);
\draw[thick] (-.1,1.4)  -- (-.1,1.2) arc (0:-180:.2cm) -- (-.5,1.4);
\draw[thick, blue] (-.2,-.1) to[out=90,in=-90] (.2,1.3);
\draw[thick, blue] (-.4,-.1) to[out=90,in=-90] (-.8,1.3);
\draw[thick] (0,0) to[out=90,in=-90] (.5,1.4);
\draw[thick] (-.6,0) to[out=90,in=-90] (-1.1,1.4);
}
:
\Tr(X\otimes X) \to \Tr(X)\otimes \Tr(X)
$$
Since $\cX$ is spherical, strings labelled by objects of $\cX$ are allowed to wrap around the back of the tube.
An explicit $*$-isomorphism $\Tube(\cX)\to \End_{Z(\cX)}(\Tr(X))$ was depicted graphically in \cite[\S4.2]{MR3663592}:
\begin{equation}
\label{eq:IsoTube(X)->End(Tr(X))}
\tikzmath{
\draw[thick, blue] (-.2,.3) --node[left]{$\scriptstyle X$} (-.2,.7);
\draw[thick, cyan] (.2,.3) --node[right]{$\scriptstyle c$} (.2,.7);
\draw[thick, cyan] (-.2,-.3) --node[left]{$\scriptstyle c$} (-.2,-.7);
\draw[thick, blue] (.2,-.3) --node[right]{$\scriptstyle X$} (.2,-.7);
\roundNbox{}{(0,0)}{.3}{.1}{.1}{$f$}
}
\longmapsto
\sum_{y,z\in \Irr(\cX)}
\sqrt{d_c}
\tikzmath{
\draw (-.6,-.9) node[below]{$\scriptstyle y$} -- (-.6,.9)  node[above]{$\scriptstyle z$};
\draw (.6,-.9) node[below]{$\scriptstyle\overline{y}$} -- (.6,.9)  node[above]{$\scriptstyle \overline{z}$};
\draw[thick, cyan] (-.6,-.6) -- (-.25,-.25);
\draw[thick, cyan] (.6,.6) -- (.25,.25);
\draw[thick, blue] (0,.3) -- (0,.9) node[above]{$\scriptstyle X$};
\draw[thick, blue] (0,-.3) -- (0,-.9) node[below]{$\scriptstyle X$};
\roundNbox{fill=white}{(0,0)}{.3}{0}{0}{$f$}
\node[cyan] at (.3,.5) {$\scriptstyle c$};
\node[cyan] at (-.3,-.5) {$\scriptstyle c$};
\filldraw[fill=yellow] (.6,.6) circle (.05cm);
\filldraw[fill=yellow] (-.6,-.6) circle (.05cm);
}
=
\,\,
\tikzmath{
\draw[thick] (.5,-.6) -- (.5,1);
\draw[thick] (-.5,-.6) -- (-.5,1);
\draw[thick] (0,1) ellipse (.5 and .2);
\draw[thick] (.5,-.6) arc(0:-180:.5 and .2);
\draw[thick,dotted] (.5,-.6) arc(0:180:.5 and .2);
\draw[thick, blue] (0,-.8) --node[left]{$\scriptstyle X$} (0,-.3);
\draw[thick, blue] (0,.8) --node[left]{$\scriptstyle X$} (0,.3);
\draw[thick, cyan] (.5,.2) arc (0:-180:.5 and .2);
\draw[thick, cyan, dotted] (.5,.2) arc (0:180:.5 and .2);
\roundNbox{fill=white}{(0,0)}{.3}{0}{0}{$f$}
}
\,.
\end{equation}
In the above expression on the right, the shaded nodes correspond to summing over an orthonormal basis (ONB) for the skein module and its adjoint, which is, of course, independent of the choice of ONB.
Of particular importance here is the \emph{fusion relation} in $\cX$ afforded by semisimplicity, depicted below, as well as the \emph{I=H relation} (see \cite[Lem.~2.16]{MR3663592}).
$$
\tikzmath{
\draw (-.3,-.5) --node[left]{$\scriptstyle a$} (-.3,.5);
\draw (.3,-.5) --node[right]{$\scriptstyle b$} (.3,.5);
}
=
\sum_{c\in\Irr(\cX)}
\tikzmath{
\draw (-.3,-.5) --node[left]{$\scriptstyle a$} (0,-.2);
\draw (-.3,.5) --node[left]{$\scriptstyle a$} (0,.2);
\draw (.3,-.5) --node[right]{$\scriptstyle b$} (0,-.2);
\draw (.3,.5) --node[right]{$\scriptstyle b$} (0,.2);
\draw (0,-.2) --node[right]{$\scriptstyle c$} (0,.2);
\filldraw[fill=yellow] (0,.2) circle (.05cm);
\filldraw[fill=yellow] (0,-.2) circle (.05cm);
}
$$

This graphical calculus allows us to describe the tensor product of $\Tube(\cX)$-representations via \emph{Day convolution} using the (unitary) Yoneda Lemma as in \cite{MR4808260}.
That is, for  $w,z\in Z(\cX)$, the fusion of the $\Tube(\cX)$-representations 
$\cK_w:=Z(\cX)(\Tr(X)\to w)$ and $\cK_z:=Z(\cX)(\Tr(X)\to z)$ is given by
$$
\tikzmath{
\draw[thick] (0,0) arc (0:-180:.3 and .1);
\draw[thick, dotted] (0,0) arc (0:180:.3 and .1);
\draw[thick] (-.5,1) arc (0:-180:.3 and .1);
\draw[thick, dotted] (-.5,1) arc (0:180:.3 and .1);
\draw[thick] (.5,1) arc (0:-180:.3 and .1);
\draw[thick, dotted] (.5,1) arc (0:180:.3 and .1);
\draw[thick] (-.1,1) arc (0:-180:.2cm);
\draw[thick] (-.5,1) -- (-.5,1.3) arc (0:180:.3cm) -- (-1.1,1);
\draw[thick] (.5,1) -- (.5,1.3) arc (0:180:.3cm) -- (-.1,1);
\node at (-.8,1.3) {$\scriptstyle\cK_w$};
\node at (.2,1.3) {$\scriptstyle\cK_z$};
\draw[thick, DarkGreen, snake] (-.8,1.6) --node[left]{$\scriptstyle w$} (-.8,2);
\draw[thick, DarkGreen, snake] (.2,1.6) --node[left]{$\scriptstyle z$} (.2,2);
\draw[thick] (0,0) to[out=90,in=-90] (.5,1);
\draw[thick] (-.6,0) to[out=90,in=-90] (-1.1,1);
\draw[thick, blue] (-.3,-.1) -- (-.3,.4);
\draw[thick, blue] (-.3,.4) to[out=45,in=-90] (.2,.9);
\draw[thick, blue] (-.3,.4) to[out=135,in=-90] (-.8,.9);
\filldraw[blue] (-.3,.4) circle (.05cm);
}
\quad
:=
\bigoplus_{a,b\in\Irr(\cX)}
Z(\cX)(\Tr(a)\to w)
\otimes
\cX(X\to a\otimes b)
\otimes
Z(\cX)(\Tr(b)\to z).
$$

In the language of the lattice model, one can visualize this Day convolution product as zooming out on the lattice.
Indeed, consider the space of states $\{|\psi\rangle\}$ which host localized anyonic excitations $a,b$ in distinct non-overlapping rectangles.
This means that the states $|\psi\rangle$ violate terms of the Hamiltonian localized to these rectangles, and that the above procedure produces a vector in an $a,b$ isotypic $\Tube(\cX)$-representation for $a,b\in Z(\cX)$ respectively.
Take a larger rectangle which contains both rectangles where $a,b$ are localized.
We then consider the space of states obtained by projecting all the internal edges of this larger rectangle to $1_\cX$ except for a tree connecting up the rectangles for $a,b$ and one edge at the top of the larger rectangle, as in the right hand side of the following diagram.
$$
\tikzmath{
\draw[step=.5] (-.25,-.25) grid (5.25,3.25);
\filldraw[fill=blue!30, opacity=.5] (.5,.5) rectangle (4.5,2.5);
\filldraw[fill=blue!30] (1,1) rectangle (2,1.5);
\node at (1.5,1.25) {$\scriptstyle a$};
\filldraw[fill=blue!30] (3,1) rectangle (4,1.5);
\node at (3.5,1.25) {$\scriptstyle b$};
}
\qquad\rightsquigarrow\qquad
\tikzmath{
\draw[step=.5] (-.25,-.25) grid (5.25,3.25);
\filldraw[fill=blue!15] (.5,.5) rectangle (4.5,2.5);
\filldraw[fill=blue!30] (1,1) rectangle (2,1.5);
\node at (1.5,1.25) {$\scriptstyle a$};
\filldraw[fill=blue!30] (3,1) rectangle (4,1.5);
\node at (3.5,1.25) {$\scriptstyle b$};
\draw (1.5,1.5) -- (1.5,2) -- (3.5,2) -- (3.5,1.5);
\draw (2.5,2) -- (2.5,2.5);
}
$$
This space of states is the Day convolution product of the $\Tube(\cX)$-modules corresponding to $a,b$, which is therefore the $\Tube(\cX)$-module corresponding to $a\otimes b$.
Indeed, the $\Tube(\cX)$-action on the right hand side is given by the following diagram, which is exactly the one from \cite[\S{III.B}]{MR4808260}.
$$
\tikzmath{
\draw[step=.5] (-.25,-.25) grid (5.25,3.25);
\filldraw[fill=white] (.5,.5) rectangle (4.5,2.5);
\filldraw[fill=blue!15, thick, draw=cyan, rounded corners=5pt] (.75,.75) rectangle (4.25,2.25);
\filldraw[fill=blue!30] (1,1) rectangle (2,1.5);
\node at (1.5,1.25) {$\scriptstyle a$};
\filldraw[fill=blue!30] (3,1) rectangle (4,1.5);
\node at (3.5,1.25) {$\scriptstyle b$};
\draw (1.5,1.5) -- (1.5,2) -- (3.5,2) -- (3.5,1.5);
\draw (2.5,2) -- (2.5,2.5);
\filldraw[cyan] (2.5,2.25) circle (.05cm);
}
=
\tikzmath{
\draw[step=.5] (-.25,-.25) grid (5.25,3.25);
\filldraw[fill=white] (.5,.5) rectangle (4.5,2.5);
\filldraw[fill=blue!15, thick, draw=cyan, rounded corners=5pt] (2.5,2.25) -- (1.25,2.25) -- (1.25,1.75) -- (.75,1.75) -- (.75,.75) -- (2.25,.75) -- (2.25,1.75) -- (2.75,1.75) -- (2.75,.75) -- (4.25,.75) -- (4.25,1.75) -- (3.75,1.75) -- (3.75,2.25) -- (2.5,2.25);
\filldraw[fill=blue!30] (1,1) rectangle (2,1.5);
\node at (1.5,1.25) {$\scriptstyle a$};
\filldraw[fill=blue!30] (3,1) rectangle (4,1.5);
\node at (3.5,1.25) {$\scriptstyle b$};
\draw (1.5,1.5) -- (1.5,2) -- (3.5,2) -- (3.5,1.5);
\draw (2.5,2) -- (2.5,2.5);
\filldraw[cyan] (2.5,2.25) circle (.05cm);
}
=
\cdots
$$

%%%%%%%%%%%%%%%%%%%%%%%%%%%%%%%%%%%%%%%%%%%%%%%%%%%%
\subsection{Defects for string nets and twist defects}
\label{sec:GeneralizedTwistDefects}

Following \cite{MR2942952}, given a fully faithful inclusion of UFCs $\cX\subset \cY$, we can form a 1D $\cY$-defect line.
$$
\tikzmath{
\draw[step=.5] (-.25,-.25) grid (2.25,2.25);
\draw[thick, orange] (1,2.25) -- (1,1);
\filldraw[orange] (1,1) circle (.05cm);
}
$$
Along the 1D $\cY$-defect line, the local Hilbert space changes to
$$
\cK_v:=\cY(X\otimes \textcolor{orange}{Y}\to \textcolor{orange}{Y}\otimes X)
=
\left\{\tikzmath{
\draw (-.7,0) -- (-.3,0);
\draw (.7,0) -- (.3,0);
\draw[thick, orange] (0,-.7) -- (0,-.3);
\draw[thick, orange] (0,.7) -- (0,.3);
\draw[thin, cyan] (-.7,.7) -- (.7,-.7);
\draw[->, cyan, thin] (-.6,.4) -- (-.4,.6);
\draw[<-, cyan, thin] (.6,-.4) -- (.4,-.6);
\roundNbox{fill=white}{(0,0)}{.3}{0}{0}{$f$}
}
\right\}
\qquad\qquad
Y:=\bigoplus_{y\in\Irr(\cY)} y.
$$
%and similarly for $\cM$.
At the end of defect, we may insert the local Hilbert space
$$
\cL_v:=
\cY(X\otimes X\to \textcolor{orange}{Y}\otimes X)
=
\left\{\tikzmath{
\draw (-.7,0) -- (-.3,0);
\draw (.7,0) -- (.3,0);
\draw (0,-.7) -- (0,-.3);
\draw[thick, orange] (0,.7) -- (0,.3);
\draw[thin, cyan] (-.7,.7) -- (.7,-.7);
\draw[->, cyan, thin] (-.6,.4) -- (-.4,.6);
\draw[<-, cyan, thin] (.6,-.4) -- (.4,-.6);
\roundNbox{fill=white}{(0,0)}{.3}{0}{0}{$f$}
}
\right\}.
$$

These spaces again carry the skein module inner products, where we must introduce the appropriate projections and normalize by the square roots of their quantum dimensions.
Along the defect line, we have a similar gluing operator $A_\ell$, and our plaquette operators $B_p^x$ are defined similarly for $x\in\Irr(\cX)$ using chosen ONBs of trivalent vertices for the spaces
$$
\cY(x\otimes Y\to Y)
\qquad\qquad
\text{and}
\qquad\qquad
\cY(Y\otimes x\to Y),
$$

By \cite{MR2942952} (see also \cite{MR4640433}), the topological excitations which live at the vertex correspond to the relative center 
$$
\Hom_{\cX-\cX}(\cX\to \cY) \cong Z_\cX(\cY).
$$
Precisely, these vertex excitations are localized on the edge above the vertex and the two neighboring plaquettes, as these correspond to the violated terms of the Hamiltonian.
$$
\tikzmath{
\foreach \x in {0,1,2}{
\foreach \y in {0,1}{
\draw ($ (-.7,0) + 1.4*(\x,\y) $) -- ($ (-.3,0) + 1.4*(\x,\y) $);
\draw ($ (.7,0) + 1.4*(\x,\y) $) -- ($ (.3,0) + 1.4*(\x,\y) $);
\draw ($ (0,-.7) + 1.4*(\x,\y) $) -- ($ (0,-.3) + 1.4*(\x,\y) $);
\draw ($ (0,.7) + 1.4*(\x,\y) $) -- ($ (0,.3) + 1.4*(\x,\y) $);
\roundNbox{fill=white}{($ 1.4*(\x,\y)$)}{.3}{0}{0}{$f_{\x\y}$}
}}
\draw[thick, orange] ($ 1.4*(1,0) $) -- ($ (0,.7) + 1.4*(1,1) $);
\fill[white] ($ 1.4*(.9,.35) $) rectangle ($ 1.4*(1.1,.65) $);
\fill[blue!30, rounded corners=3pt] (.7,.1) -- (.5,.1) -- (.1,.5) -- (.1,.9) -- (.5,1.3) -- (.9,1.3) -- (1.3,.9) -- (1.5,.9) -- (1.9,1.3) -- (2.3,1.3) -- (2.7,.9) -- (2.7,.5) -- (2.3,.1) -- (1.9,.1) -- (1.5,.5) -- (1.3,.5) -- (.9,.1) -- (.7,.1);
\node at (1.4,.7) {\scriptsize{excitation}};
\roundNbox{orange, fill=white}{($ 1.4*(1,0)$)}{.3}{0}{0}{\textcolor{orange}{$f_{10}$}}
\roundNbox{orange, fill=white}{($ 1.4*(1,1)$)}{.3}{0}{0}{\textcolor{orange}{$f_{11}$}}
}
\quad\rightsquigarrow\quad
\tikzmath{
\foreach \x in {0,1,2}{
\foreach \y in {0,1}{
\draw ($ (-.7,0) + 1.4*(\x,\y) $) -- ($ (-.3,0) + 1.4*(\x,\y) $);
\draw ($ (.7,0) + 1.4*(\x,\y) $) -- ($ (.3,0) + 1.4*(\x,\y) $);
\draw ($ (0,-.7) + 1.4*(\x,\y) $) -- ($ (0,-.3) + 1.4*(\x,\y) $);
\draw ($ (0,.7) + 1.4*(\x,\y) $) -- ($ (0,.3) + 1.4*(\x,\y) $);
\roundNbox{fill=white}{($ 1.4*(\x,\y)$)}{.3}{0}{0}{$f_{\x\y}$}
}}
\draw[thick, orange] ($ 1.4*(1,0) $) -- ($ (0,.7) + 1.4*(1,1) $);
\fill[white] ($ 1.4*(.9,.23) $) rectangle ($ 1.4*(1.1,.65) $);
\fill[blue!30, rounded corners=3pt] (.7,.1) -- (.5,.1) -- (.1,.5) -- (.1,.9) -- (.5,1.3) -- (.9,1.3) -- (1.3,.9) -- (1.5,.9) -- (1.9,1.3) -- (2.3,1.3) -- (2.7,.9) -- (2.7,.5) -- (2.3,.1) -- (1.9,.1) -- (1.5,.5) -- (1.3,.5) -- (.9,.1) -- (.7,.1);
%\node[orange] at (1.4,.7) {\scriptsize{excitation}};
\roundNbox{fill=white}{($ 1.4*(1,0)$)}{.3}{0}{0}{$f_{10}$}
\roundNbox{orange, fill=white}{($ 1.4*(1,1)$)}{.3}{0}{0}{\textcolor{orange}{$f_{11}$}}
}
\quad\rightsquigarrow\quad
\tikzmath{
\foreach \x in {0,1,2}{
\foreach \y in {0,1}{
\draw ($ (-.7,0) + 1.4*(\x,\y) $) -- ($ (-.3,0) + 1.4*(\x,\y) $);
\draw ($ (.7,0) + 1.4*(\x,\y) $) -- ($ (.3,0) + 1.4*(\x,\y) $);
\draw ($ (0,-.7) + 1.4*(\x,\y) $) -- ($ (0,-.3) + 1.4*(\x,\y) $);
\draw ($ (0,.7) + 1.4*(\x,\y) $) -- ($ (0,.3) + 1.4*(\x,\y) $);
\roundNbox{fill=white}{($ 1.4*(\x,\y)$)}{.3}{0}{0}{$f_{\x\y}$}
}}
\fill[white] ($ 1.4*(.9,.23) $) rectangle ($ 1.4*(1.1,.65) $);
\filldraw[thick,cyan,fill=blue!30, rounded corners=3pt] (.7,.1) -- (.5,.1) -- (.1,.5) -- (.1,.9) -- (.5,1.3) -- (.9,1.3) -- (1.3,.9) -- (1.5,.9) -- (1.9,1.3) -- (2.3,1.3) -- (2.7,.9) -- (2.7,.5) -- (2.3,.1) -- (1.9,.1) -- (1.5,.5) -- (1.3,.5) -- (.9,.1) -- (.7,.1);
\draw (1.4,.9) -- (1.4,.7);
\fill[orange] (1.4,.9) circle (.05cm);
\draw[thick, orange] (1.4,.7) -- ($ (0,.7) + 1.4*(1,1) $);
\roundNbox{fill=white}{($ 1.4*(1,0)$)}{.3}{0}{0}{$f_{10}$}
\roundNbox{orange, fill=white}{($ 1.4*(1,1)$)}{.3}{0}{0}{\textcolor{orange}{$f_{11}$}}
}
$$
As before, we may apply a projection $p_1$ to $|\psi\rangle$ at the local Hilbert space below this edge $\ell$.
Since $\cX\subset \cY$ is a full subcategory, any tensorand 
$$
f_{10}\in \cY(X\otimes X\to 1_\cY\otimes \cX) = \cX(X\otimes X\to X)
$$ 
in the linear combination describing $|\psi\rangle$ lies in $\cX$.
The resulting space of excitations then hosts an action of the  \emph{relative tube algebra}, which is Morita equivalent to the `strip algebra' from \cite{MR2942952} (see also \cite{MR3975865}).
Hence the action of this algebra allows us to measure excitations at the end of the defect line.

\begin{defn}
The \emph{relative tube algebra} $\Tube_\cX(\cY)$ of the inclusion $\cX\subset \cY$ is 
\begin{align*}
\bigoplus_{\substack{x\in\Irr(\cX)
\\ y,z\in\Irr(\cY)}} \cY(x\otimes y \to z\otimes x)
&\cong
\bigoplus_{x\in\Irr(\cX)}
\cY(x\otimes Y\otimes \overline{x}\to Y)
\cong
\cY(F\Tr_\cX(Y)\to Y)
\\&\cong
Z_\cX(\cY)(\Tr_\cX(Y)\to \Tr_\cX(Y)).
\end{align*}
Here, $\Tr_\cX: \cY \to Z_\cX(\cY)$ is the unitary adjoint \cite{MR4750417} of the forgetful functor $F: Z_\cX(\cY)\to\cY$, which admits the structure of a \emph{relative} categorified trace, i.e., there are canonical \emph{relative traciator} unitary natural isomorphisms
$$
\Tr_\cX(x\otimes y) \to \Tr_\cX(y\otimes x)
\qquad\qquad\qquad
\forall x\in\cX,\, y\in\cY.
$$
\end{defn}

\begin{ex}
When $\cY$ is a $G$-graded extension, $Z_\cX(\cY)$ is a $G$-crossed braided extension of $Z(\cX)$ \cite{MR2587410}.
In this case, the $\cY$-defect line should be viewed as the direct sum of all invertible $g$-defect lines in the usual SETO story \cite{1410.4540}.
\end{ex}

\begin{rem}
\label{rem:ConstructionForBimodule}
The construction of this section can be adapted for any $\cX-\cX$ bimodule category $\cM$, where the twist defects would be given by $Z_\cX(\cM)\cong\Hom_{\cX-\cX}(\cX\to \cM)$.
However, we will not have a fusion operation on twist defects without a monoidal structure on $\cM$.
We require that $\cM$ comes equipped with a \emph{unitary trace} in the sense of \cite{MR3019263,MR4598730}, leading to skein module inner products as in \cite[\S{3.1}]{2506.19969}.
$$
\tikzmath{
\draw[step=.5] (-.25,-.25) grid (2.25,2.25);
\draw[thick, orange] (1,2.25) -- (1,.75);
\filldraw[orange] (1,1) circle (.05cm);
\fill[white] (.98,.51) rectangle (1.02,.8);
}
\qquad
\cK_v:=\cM(X\rhd \textcolor{orange}{M}\to \textcolor{orange}{M}\lhd X)
=
\left\{\tikzmath{
\draw (-.7,0) -- (-.3,0);
\draw (.7,0) -- (.3,0);
\draw[thick, orange] (0,-.7) -- (0,-.3);
\draw[thick, orange] (0,.7) -- (0,.3);
\draw[thin, cyan] (-.7,.7) -- (.7,-.7);
\draw[->, cyan, thin] (-.6,.4) -- (-.4,.6);
\draw[<-, cyan, thin] (.6,-.4) -- (.4,-.6);
\roundNbox{fill=white}{(0,0)}{.3}{0}{0}{$f$}
}
\right\}
\qquad
M:=\bigoplus_{m\in\Irr(\cM)} m
$$
For general $\cM$, there is no obvious canonical choice of Hilbert space to put at the end of the defect line.
Thus in our model, the end of the defect line always hosts a point excitation at the defect site.
(In \cite{1912.01760}, the authors work with pointed bimodules, where the pointing identifies a particular twist defect to be considered as the local ground states.)
\end{rem}

%%%%%%%%%%%%%%%%%%%%%%%%%%%%%%%%%%%%%%%%%%%%%%%%%%%%
\subsection{Fusion of twist defects}
\label{sec:FusionOfTwistDefects}

We can use a modified graphical calculus of strings on tubes for $\Tr_\cX: \cY\to Z_\cX(\cY)$ to describe the tensor product of $\Tube_\cX(\cY)$-representations using Day convolution and the unitary Yoneda Lemma.
Again, we draw string diagrams from $\cY$ on tubes, but now only strings from $\cX\subset\cY$ are allowed to travel around the back of the tube.
$$
\Tube_\cX(\cY)
\ni
\tikzmath{
\draw[thick, orange] (-.2,.3) --node[left]{$\scriptstyle Y$} (-.2,.7);
\draw[thick, cyan] (.2,.3) --node[right]{$\scriptstyle x$} (.2,.7);
\draw[thick, cyan] (-.2,-.3) --node[left]{$\scriptstyle x$} (-.2,-.7);
\draw[thick, orange] (.2,-.3) --node[right]{$\scriptstyle Y$} (.2,-.7);
\roundNbox{}{(0,0)}{.3}{.1}{.1}{$f$}
}
\longleftrightarrow
%\overset{\Tr_\cX}{\longmapsto}
\,\,
\tikzmath{
\draw[thick] (.5,-.6) -- (.5,1);
\draw[thick] (-.5,-.6) -- (-.5,1);
\draw[thick] (0,1) ellipse (.5 and .2);
\draw[thick] (.5,-.6) arc(0:-180:.5 and .2);
\draw[thick,dotted] (.5,-.6) arc(0:180:.5 and .2);
\draw[thick, orange] (0,-.8) --node[left]{$\scriptstyle Y$} (0,-.3);
\draw[thick, orange] (0,.8) --node[left]{$\scriptstyle Y$} (0,.3);
\draw[thick, cyan] (.5,.2) arc (0:-180:.5 and .2);
\draw[thick, cyan, dotted] (.5,.2) arc (0:180:.5 and .2);
\roundNbox{fill=white}{(0,0)}{.3}{0}{0}{$f$}
}
\in\End_{Z_\cX(\cY)}(\Tr_\cX(Y))
$$
Given $w,z\in Z_\cX(\cY)$, fusion of the $\Tube_\cX(\cY)$-representations
$\cK_w:=Z_\cX(\cY)(\Tr_\cX(Y)\to w)$ 
and
$\cK_z:=Z_\cX(\cY)(\Tr_\cX(Y)\to z)$ 
is given by
$$
\tikzmath{
\draw[thick] (0,0) arc (0:-180:.3 and .1);
\draw[thick, dotted] (0,0) arc (0:180:.3 and .1);
\draw[thick] (-.5,1) arc (0:-180:.3 and .1);
\draw[thick, dotted] (-.5,1) arc (0:180:.3 and .1);
\draw[thick] (.5,1) arc (0:-180:.3 and .1);
\draw[thick, dotted] (.5,1) arc (0:180:.3 and .1);
\draw[thick] (-.1,1) arc (0:-180:.2cm);
\draw[thick] (-.5,1) -- (-.5,1.3) arc (0:180:.3cm) -- (-1.1,1);
\draw[thick] (.5,1) -- (.5,1.3) arc (0:180:.3cm) -- (-.1,1);
\node at (-.8,1.3) {$\scriptstyle\cK_w$};
\node at (.2,1.3) {$\scriptstyle\cK_z$};
\draw[thick, DarkGreen, snake] (-.8,1.6) --node[left]{$\scriptstyle w$} (-.8,2);
\draw[thick, DarkGreen, snake] (.2,1.6) --node[left]{$\scriptstyle z$} (.2,2);
\draw[thick] (0,0) to[out=90,in=-90] (.5,1);
\draw[thick] (-.6,0) to[out=90,in=-90] (-1.1,1);
\draw[thick, orange] (-.3,-.1) -- (-.3,.4);
\draw[thick, orange] (-.3,.4) to[out=45,in=-90] (.2,.9);
\draw[thick, orange] (-.3,.4) to[out=135,in=-90] (-.8,.9);
\filldraw[orange] (-.3,.4) circle (.05cm);
}
\quad
:=
\bigoplus_{a,b\in\Irr(\cY)}
Z_\cX(\cY)(\Tr_\cX(a)\to w)
\otimes
\cY(Y\to a\otimes b)
\otimes
Z_\cX(\cY)(\Tr_\cX(b)\to z).
$$

On the lattice, we visualize this Day convolution product similarly as before, where now we include our $\cY$-defect line.
$$
\tikzmath{
\draw[step=.5] (-.25,-.25) grid (5.25,3.25);
\filldraw[fill=blue!30] (1,1) rectangle (2,1.5);
\filldraw[fill=blue!30, opacity=.5] (.5,.5) rectangle (4.5,2.5);
\node at (1.5,1.25) {$\scriptstyle a$};
\filldraw[fill=blue!30] (3,1) rectangle (4,1.5);
\filldraw[orange] (1.5,1.5) circle (.05);
\filldraw[orange] (3.5,1.5) circle (.05);
\node at (3.5,1.25) {$\scriptstyle b$};
\draw[thick,orange] (1.5,1.5) -- (1.5,2) -- (2.5,2) -- (2.5,3.25);
\draw[thick,orange] (3.5,1.5) -- (3.5,2) -- (2.5,2);
}
\qquad\rightsquigarrow\qquad
\tikzmath{
\draw[step=.5] (-.25,-.25) grid (5.25,3.25);
\filldraw[fill=blue!15] (.5,.5) rectangle (4.5,2.5);
\filldraw[fill=blue!30] (1,1) rectangle (2,1.5);
\node at (1.5,1.25) {$\scriptstyle a$};
\filldraw[fill=blue!30] (3,1) rectangle (4,1.5);
\node at (3.5,1.25) {$\scriptstyle b$};
\filldraw[orange] (1.5,1.5) circle (.05);
\filldraw[orange] (3.5,1.5) circle (.05);
\draw[thick,orange] (1.5,1.5) -- (1.5,2) -- (3.5,2) -- (3.5,1.5);
\draw[thick,orange] (2.5,2) -- (2.5,3.25);
}
$$
Note that the edges in black are labelled by objects in $\cX$ (especially on the boundaries of the locations of the $a,b$ anyons!), and the edges in orange are labelled by objects in $\cY$.
The action of the relative tube algebra $\Tube_\cX(\cY)$ on the Day convolution is similar to the previous action.
$$
\tikzmath{
\draw[step=.5] (-.25,-.25) grid (5.25,3.25);
\filldraw[fill=white] (.5,.5) rectangle (4.5,2.5);
\filldraw[fill=blue!15, thick, draw=cyan, rounded corners=5pt] (.75,.75) rectangle (4.25,2.25);
\filldraw[fill=blue!30] (1,1) rectangle (2,1.5);
\node at (1.5,1.25) {$\scriptstyle a$};
\filldraw[fill=blue!30] (3,1) rectangle (4,1.5);
\node at (3.5,1.25) {$\scriptstyle b$};
\filldraw[orange] (1.5,1.5) circle (.05);
\filldraw[orange] (3.5,1.5) circle (.05);
\draw[thick,orange] (1.5,1.5) -- (1.5,2) -- (3.5,2) -- (3.5,1.5);
\draw[thick,orange] (2.5,2) -- (2.5,3.25);
\filldraw[orange] (2.5,2.25) circle (.05cm);
}
=
\tikzmath{
\draw[step=.5] (-.25,-.25) grid (5.25,3.25);
\filldraw[fill=white] (.5,.5) rectangle (4.5,2.5);
\filldraw[fill=blue!15, thick, draw=cyan, rounded corners=5pt] (2.5,2.25) -- (1.25,2.25) -- (1.25,1.75) -- (.75,1.75) -- (.75,.75) -- (2.25,.75) -- (2.25,1.75) -- (2.75,1.75) -- (2.75,.75) -- (4.25,.75) -- (4.25,1.75) -- (3.75,1.75) -- (3.75,2.25) -- (2.5,2.25);
\filldraw[fill=blue!30] (1,1) rectangle (2,1.5);
\node at (1.5,1.25) {$\scriptstyle a$};
\filldraw[fill=blue!30] (3,1) rectangle (4,1.5);
\node at (3.5,1.25) {$\scriptstyle b$};
\filldraw[orange] (1.5,1.5) circle (.05);
\filldraw[orange] (3.5,1.5) circle (.05);
\draw[thick,orange] (1.5,1.5) -- (1.5,2) -- (3.5,2) -- (3.5,1.5);
\draw[thick,orange] (2.5,2) -- (2.5,3.25);
\filldraw[orange] (2.5,2.25) circle (.05cm);
}
=
\cdots
$$

In the language of higher categories, we can view the above diagram as the implementation of the equivalence 
$$
\Bim(\cX)\ni \cM \longmapsto \Hom_{\cX-\cX}(\cX\to \cM)\in \Mod(Z(\cX))
$$
from \cite[Thm.~7.14]{MR2678824}:
the fusion category $\cY$ viewed as an $E_1$-algebra object in $\Bim(\cX)$ is mapped to the $E_1$-algebra $Z_\cX(\cY)$ in $\Mod(Z(\cX))$.

%%%%%%%%%%%%%%%%%%%%%%%%%%%%%%%%%%%%%%%%%%%%%%%%%%%%
%%%%%%%%%%%%%%%%%%%%%%%%%%%%%%%%%%%%%%%%%%%%%%%%%%%%
%%%%%%%%%%%%%%%%%%%%%%%%%%%%%%%%%%%%%%%%%%%%%%%%%%%%
\section{Symmetry action from tube algebra}

\subsection{Domain wall wrapping symmetry action}
\label{subsec:domainwallwrap}

As a (possibly non-invertible) symmetry action on an anyon $a\in Z(\cX)$ of the $\cX$ string-net, we consider wrapping the $a$ anyon in a domain wall corresponding to the $\cX-\cX$ bimodule $\cY$.
\begin{equation}
\tikzmath{
\fill[blue!20] (.5,.5) rectangle (1.5,1);
\draw[step=.5] (-.25,-.25) grid (2.25,2.25);
\fill[blue!20] (.98,.51) rectangle (1.02,.8);
\draw[thick, orange, knot, snake] (-.25,.25)  -- (2.25,.25);
}
\qquad\qquad
\rightsquigarrow
\qquad\qquad
\tikzmath{
\fill[blue!20] (.5,.5) rectangle (1.5,1);
\draw[step=.5] (-.25,-.25) grid (2.25,2.25);
\draw[thick, orange] (.5,.5) rectangle (1.5,1);
\fill[blue!20] (.98,.51) rectangle (1.02,.8);
\draw[thick, orange, knot, snake] (-.25,1.75) -- (2.25,1.75);
\draw[thick, orange] (1,1) rectangle (1,1.75);
}
\tag{\ref{eq:SweepDomainWall}}
\end{equation}

To determine which anyons and twist defects appear after wrapping the domain wall, we use tube algebra methods.
The initial anyon $a \in Z(\cX)$ corresponds to an idempotent $P_a \in \Tube(\cX)$. 
Applying the full symmetry action means wrapping by the domain wall determined by the $\cX-\cX$ bimodule $\cY$.
We therefore stack $P_a\in\Tube(\cX)$ above $\Tube(\cY)$, and let $\Tube_{\cX}(\cY)$ act on $\Tube(\cY)$ from below to obtain a new $\Tube_{\cX}(\cY)$ representation corresponding to the image of $a$ under the symmetry action.
\begin{align*}
    P_a = \tikzmath[scale=0.85]{
    \draw[thick] (.6,-.6) -- (.6,1);
    \draw[thick] (-.6,-.6) -- (-.6,1);
    \draw[thick] (0,1) ellipse (.6 and .2);
    \draw[thick] (.6,-.6) arc(0:-180:.6 and .2);
    \draw[thick,dotted] (.6,-.6) arc(0:180:.6 and .2);
    \draw[] (0,-.8) node[above left] {$\scriptstyle X$} -- (0,-.25);
    \draw[] (0,.8) node[below left] {$\scriptstyle X$} -- (0,.25); % weirdly (0,.3) isn't low enough, although (0,-.3) is high enough. I guess $f$ is a weird character?
    \draw[thick, cyan] (.6,.2) arc (0:-180:.6 and .2);
    \draw[thick, cyan, dotted] (.6,.2) arc (0:180:.6 and .2);
    \node[blue, right] at (.6,0.2) {\small $\in \Tube(\cX)$};
    \roundNbox{fill=white}{(0,0)}{.25}{0}{0}{$\scriptstyle f$}}
    \mapsto \quad \operatorname{span}\left\{ \; \tikzmath[scale=0.7]{\draw[thick] (.6,-.5) -- (.6,2.3);
    \draw[thick] (-.6,-.5) -- (-.6,2.3);
    \draw[thick] (0,2.3) ellipse (.6 and .2);
    \draw[thick] (.6,0.9) arc(0:-180:.6 and .2);
    \draw[thick,dotted] (.6,0.9) arc(0:180:.6 and .2);
    \draw[thick] (.6,-.5) arc(0:-180:.6 and .2);
    \draw[thick,dotted] (.6,-.5) arc(0:180:.6 and .2);
    \draw[thick, orange] (0,-.7) -- (0,-.2);
    \draw[] (0,1.4) -- (0,2.1);
    \draw[decoration={brace,mirror,raise=8pt},decorate, thick, blue]
    (0.55,1.) -- (0.55,2.3);
    \node[blue, right] at (1.,1.6) {\footnotesize $P_a \in \Tube(\cX)$}; 
    \draw[decoration={brace,mirror,raise=8pt},decorate, thick, orange]
    (0.55,-0.5) -- (0.55,0.8);
    \node[orange, right] at (1.,0.2) {\footnotesize $T \in \Tube(\cY)$}; 
    \draw[] (0,1.3) -- (0,.2);
    \draw[thick, magenta] (.6,.2) arc (0:-180:.6 and .2);
    \draw[thick, magenta, dotted] (.6,.2) arc (0:180:.6 and .2);
    \draw[thick, cyan] (.6,1.6) arc (0:-180:.6 and .2);
    \draw[thick, cyan, dotted] (.6,1.6) arc (0:180:.6 and .2);
    \roundNbox{fill=white}{(0,0)}{.26}{0}{0}{$\scriptstyle g$}
    \roundNbox{fill=white}{(0,1.4)}{.26}{0}{0}{$\scriptstyle f$}}\right\} = P_a\Tube(\cY)
\end{align*}
In other words, we apply the inclusion 
\begin{equation}
    \label{eq:fullInclusionRelativeTensor}
    \Tube(\cX)\hookrightarrow\Tube(\cX)\otimes_{\Tube_{\cX}(\cY)}\Tube(\cY)\otimes_{\Tube_{\cX}(\cY)}\Tube_{\cX}(\cY)\cong\Tube(\cY)1_{\Tube(\cX)}
\end{equation}
to $P_a$.
The final simplification in \eqref{eq:fullInclusionRelativeTensor} is possible since $\Tube(\cX)\subseteq\Tube_{\cX}(\cY)\subseteq\Tube(\cY)$ is a chain of inclusions of $\dag$-subalgebras.
The inclusion $\Tube(\cX)\hookrightarrow\Tube_{\cX}(\cY)$ is nonunital, resulting in the compression $\Tube(\cY)1_{\Tube(\cX)}$, which is just the subspace of $\Tube(\cY)$ in which the labels on the top vertical strings must lie in $\cX$. 

To detect the twist defect and anyon types appearing in the resulting excitation, we can stack from below with an idempotent $P_b \in \Tube_\cX(\cY)$:
\begin{equation}
    \label{eq:fullInclusionTubeStack}
    P_b \Tube(\cY) P_a  = \operatorname{span}\left\{
    \tikzmath[scale=0.7]{\draw[thick] (.6,-1.9) -- (.6,2.3);
    \draw[thick] (-.6,-1.9) -- (-.6,2.3);
    \draw[thick] (0,2.3) ellipse (.6 and .2);
    \draw[thick] (.6,0.9) arc(0:-180:.6 and .2);
    \draw[thick,dotted] (.6,0.9) arc(0:180:.6 and .2);
    \draw[thick] (.6,-.5) arc(0:-180:.6 and .2);
    \draw[thick,dotted] (.6,-.5) arc(0:180:.6 and .2);
    \draw[thick, orange] (0,-1.4) -- (0,-.2);
    \draw[] (0,1.4) -- (0,2.1);
    \draw[decoration={brace,mirror,raise=8pt},decorate, thick, blue]
    (0.55,1.) -- (0.55,2.3);
    \node[blue, right] at (1.,1.6) {\footnotesize $P_a \, \in \, \Tube(\cX)$}; 
    \draw[decoration={brace,mirror,raise=8pt},decorate, thick, orange]
    (0.55,-0.5) -- (0.55,0.8);
    \node[orange, right] at (1.,0.2) {\footnotesize $T \, \in \, \Tube(\cY)$}; 
    \draw[decoration={brace,mirror,raise=8pt},decorate, thick, olive]
    (0.55,-1.9) -- (0.55,-0.6);
    \node[olive, right] at (1.,-1.2) {\footnotesize $P_b \, \in \, \Tube_\cX(\cY)$}; 
    \draw[] (0,1.3) -- (0,.2);
    \draw[thick, magenta] (.6,.2) arc (0:-180:.6 and .2);
    \draw[thick, magenta, dotted] (.6,.2) arc (0:180:.6 and .2);
    \draw[thick, cyan] (.6,1.6) arc (0:-180:.6 and .2);
    \draw[thick, cyan, dotted] (.6,1.6) arc (0:180:.6 and .2);
    \node[magenta, left] at (-0.6,0.2) {$\scriptstyle y \in \cY$};
    \node[olive, left] at (-0.6,-1.3) {$\scriptstyle x \in \cX$};
    %\node[orange, below] at (0,-2.1) {$\scriptstyle z \in \cY$};
    \draw[thick, olive] (0.6,-1.2) arc (0:-180:.6 and .2);
    \draw[thick, olive, dotted] (0.6,-1.2) arc (0:180:.6 and .2);
    \draw[thick] (0.6,-1.9) arc (0:-180:.6 and .2);
    \draw[thick, dotted] (0.6,-1.9) arc (0:180:.6 and .2);
    \draw[thick, orange] (0,-1.4) -- (0,-2.1);
    \roundNbox{fill=white}{(0,0)}{.26}{0}{0}{$\scriptstyle g$}
    \roundNbox{fill=white}{(0,1.4)}{.26}{0}{0}{$\scriptstyle f$}
    \roundNbox{fill=white}{(0,-1.4)}{.26}{0}{0}{$\scriptstyle k$}
    }
    \right\}
\end{equation}
If the result is nonzero, it means the anyon or twist defect $b \in Z_\cX(\cY)$ appears in the wrapped domain wall excitation. For non-invertible symmetries, the multiplicity $N^d_{a b}$ with which $b$ appears can be larger than one. It can be computed from the dimension of the vector space obtained by stacking the three tubes as above. 
The symmetry action can therefore be written as 
\begin{equation}
 \label{eq:symMultCentral}
 a \mapsto \sum_j N_{a b_j} b_j, \quad \text{for } b_j \in Z_\cX(\cY) \; \text{ with } 
 N_{a,b_j} = \frac{\dim\left(P_{b_j} \Tube(\cY) P_a\right)}{\sqrt{\dim(\Tube(\cX)P_a)}\sqrt{\dim(P_{b_j}\Tube_{\cX}(\cY))}}.
\end{equation}
Alternatively, one can select rank $1$ projectors $p_a\in P_a\Tube(\cX)$ and $p_b\in P_b\Tube_{\cX}(\cY)$, in which case the multiplicity becomes\footnote{
A rank $1$ projector in $P_a\Tube(\cX)$ is a projector which selects one of the internal states of an $a$ anyon, {i.e.}~projects onto a $1D$ subspace of the corresponding $\Tube(\cX)$ irreducible representation. Such a $p_a$ is not central (unless $P_a\Tube(\cX)$ is already one-dimensional), since tube algebra elements in $P_a\Tube(\cX)$ can exchange the various internal states.
In particular, $p_aP_a=p_a$.}
\begin{equation}
 \label{eq:symMultRank1}
 N_{a,b_j}=\dim(p_{b_j}\Tube(\cY)p_a).
\end{equation}

\begin{rem}
    As noted above, it is also possible to consider twist defects as inputs to the symmetry action.
    To do so, we need only choose $a\in Z_{\cX}(\cY)$ and consider the corresponding $P_a\in\Tube_{\cX}(\cY)$.
    The procedure for computing the image of $a$ under the symmetry action is otherwise unchanged.
\end{rem}

%%%%%%%%%%%%%%%%%%%%%%%%%%%%%%%%%%%%%%%%%%
\subsubsection{Why this works: induction/restriction}
\label{ssec:indRes}
The above diagrammatic calculation can also be interpreted via induction and restriction functors.
Recall that we have an inclusion $Z(\cX)\hookrightarrow Z_\cX(\cY)$ and a dominant forgetful functor $Z(\cY)\twoheadrightarrow Z_\cX(\cY)$.
Dually, we have a projection $\Tube_\cX(\cY)\twoheadrightarrow \Tube(\cX)$ and an inclusion $\Tube_\cX(\cY)\hookrightarrow \Tube(\cY)$, as taking modules is contravariant.
Thus the above calculation is exactly taking a $\Tube(\cX)$-module (anyon in $Z(\cX)$), thinking of it in $Z_\cX(\cY)$ via the inclusion, inducing it up to $Z(\cY)$ by stacking with $\Tube(\cY)$, and then restricting it again to $Z_\cX(\cY)$ by looking at the action an idempotent in $\Tube_\cX(\cY)$.
In short, we are computing the composition
\begin{equation}
    \label{eq:fullInclusionIndRes}
    Z(\cX)\hookrightarrow Z_{\cX}(\cY)\to Z(\cY)\twoheadrightarrow Z_{\cX}(\cY),
\end{equation}
where the induction functor $Z_{\cX}(\cY)\to Z(\cY)$ is adjoint to the restriction $Z(\cY)\twoheadrightarrow Z_{\cX}(\cY)$.

To see that \eqref{eq:fullInclusionIndRes} is the correct functor to implement the full generalized symmetry action, we interpret the domain wall given by the $\cX-\cX$ bimodule $\cY$ as the stacking of two domain walls, given by the bimodules ${}_\cX \cY \boxtimes_\cY \cY_\cX$.
We know that $Z(\cX)$ and $Z(\cY)$ are the correct anyon theories for $\cX$ and $\cY$ string-net bulk regions, that $Z_{\cX}(\cY):=\End_{\cX-\cY}(\cY)$ is the bimodule multifusion category of wall excitations, and that inclusion and restriction $Z(\cX)\hookrightarrow Z_{\cX}(\cY)\twoheadleftarrow Z(\cY)$ are the correct functors describing the wall excitations obtained when bringing anyons from either bulk to the domain wall by \cite{MR2942952}.
The overall process of tunneling an anyon from the $Z(\cX)$ bulk, through the first $Z(\cX)-Z(\cY)$ domain wall into the $Z(\cY)$ bulk, and then onto the second $Z(\cY)-Z(\cX)$ domain wall is therefore given by \eqref{eq:fullInclusionIndRes}, {cf.}~\cite[Prop.~IV.5]{MR4640433}.

\subsubsection{Computing actions of individual symmetries}
\label{sssec:individualFull}
In the case where $\cY$ is a $G$-graded extension of $\cX$, {i.e.}~when our symmetries are all invertible, wrapping by the $\cX-\cX$ bimodule $\cY$ is the sum of the actions of each $g\in G$.
The simple summands of $\cY$ as an $\cX-\cX$ bimodule are precisely the $g$-graded components $\cY_g$ of our $G$-graded extension, and wrapping by $\cY_g$ implements the $g$-symmetry action.
Therefore, in addition to computing the effect of wrapping by $\cY$, we would like to be able to compute the action of wrapping by a simple $\cX-\cX$ bimodule summand of $\cY$, which we think of as applying an individual non-invertible symmetry generator.

We may decompose 
\[\cY\cong\bigoplus_{j=0}^{n-1}\cY_j,\]
where each $\cY_j$ is an indecomposable $\cX-\cX$ bimodule and $\cY_0=\cX1_{\cY}\cX=\cX$.
Then $\Tube(\cY)$ also splits into $n$ $\Tube_{\cX}(\cY)-\Tube_{\cX}(\cY)$ bimodules
\[\Tube_{\cY_j}(\cY):=\operatorname{span}\set{
\tikzmath[scale=0.85]{
    \draw[thick] (.6,-.6) -- (.6,1);
    \draw[thick] (-.6,-.6) -- (-.6,1);
    \draw[thick] (0,1) ellipse (.6 and .2);
    \draw[thick] (.6,-.6) arc(0:-180:.6 and .2);
    \draw[thick,dotted] (.6,-.6) arc(0:180:.6 and .2);
    \draw[thick,orange] (0,-.8) node[above left] {$\scriptstyle y$} -- (0,-.25);
    \draw[thick,orange] (0,.8) node[below left] {$\scriptstyle z$} -- (0,.25);
    \draw[thick, magenta] (.6,.2) arc (0:-180:.6 and .2);
    \draw[thick, magenta, dotted] (.6,.2) arc (0:180:.6 and .2);
     \node[magenta, left] at (-0.6,0.1) {$\scriptstyle y_j$};
    % \node[blue, right] at (.6,0.2) {\small $\in \Tube(\cX)$};
    \roundNbox{fill=white}{(0,0)}{.25}{0}{0}{$\scriptstyle f$}}
\;}{\, y,z\in\Irr(\cY),\, y_j\in\Irr(\cY_j)}\]

Stacking with $\Tube_{\cY_j}(\cY)$ rather than the full $\Tube(\cY)$ implements the symmetry action of $\cY_j$.

Since $Z_{\cX}(\cY):=\Hom_{\cX-\cX}(\cX\to\cY)$, we also have a grading
\[Z_{\cX}(\cY):=\bigoplus_{j=0}^{n-1}Z_{\cX}(\cY_j)\]
where $Z_{\cX}(\cY_j)$ is the center of a bimodule category as in \cite{MR1151906,MR2587410}.
In the $G$-graded case, this is just $G$-grading on $Z_{\cX}(\cY)$, which is a $G$-crossed braided extension of $Z(\cX)$ \cite{MR2587410}.
As in \eqref{eq:symMultCentral}, the symmetry action of $\cY_j$ on anyons can be written as
\[a\mapsto \sum_k N^j_{a,b_k}b_k,\quad\text{ for }b_k\in Z_{\cX}(\cY_j)\]
with 
\[N^j_{a,b_k}=\frac{\dim(P_{b_k}\Tube_{\cY_j}(\cY)P_a)}{\sqrt{\dim(\Tube(\cX)P_a)}\sqrt{\dim(\Tube_{\cX}(\cY_j)P_{b_k})}}\]
Alternatively, we may pick rank $1$ projectors $p_a$ and $p_{b_k}$ as above and obtain
\begin{equation}
 \label{eq:indivSymMultRank1}
 N^j_{a,b_k}=\dim(p_{b_k}\Tube_{\cY_j}(\cY)p_a)
\end{equation}
analogous to \eqref{eq:symMultRank1}.

\begin{rem}
 Since $\cY_0\cong\cX$ as an $\cX-\cX$ bimodule, $\Tube_{\cY_0}(\cY)=\Tube_{\cX}(\cY)$, and 
 \[P_a\Tube_{\cY_0}(\cY)=P_a\Tube_{\cX}(\cY)=P_a\Tube(\cX)=\Tube_{\cX}(\cY)P_a\] 
 We therefore see that the trivial symmetry sends each anyon to itself, as expected.
\end{rem}

\subsection{\texorpdfstring{$G$}{G}-graded example with \texorpdfstring{$\cY=$}{Y=} Ising}

When $\cY$ is a $G$-graded extension of $\cX$, the above construction essentially reduces to the computation of group symmetry actions in \cite{1711.07982}. They denote the linear combination of domain wall tubes stacked onto the anyon idempotent as 
$$
B^h_a = \left(\sum_i c_i T^h_i \right) P_a,
$$
where the tubes $T^h_i\in\Tube_{\cY_h}(\cY)$ have circumference labels in the graded component $\cY_h$ for some $h \in G$. 
After normalizing so that $(B^h_a)^\dagger B^h_a=P_a$, the composite
\[
    P_{h(a)}:=B^h_a(B^h_a)^\dagger
\]
is an idempotent in $\Tube(\cX)$ and hence identifies the permuted anyon $h(a)\in Z(\cX)$. Equivalently, following our method, among the projectors $P_b$, exactly one has
$P_b B^h_a \neq 0$, identifying $h(a)=b$.

For $\cX=\Hilb(\bbZ/2)$, the tube algebra idempotents for the toric code anyons in $Z(\cX)$ are well-known and given by
\begin{align*}
    P_1 &= \frac{1}{2} \left(
    \tikzmath[scale=0.8]{\draw[thick] (.5,-.6) -- (.5,1);
    \draw[thick] (-.5,-.6) -- (-.5,1);
    \draw[thick] (0,1) ellipse (.5 and .2);
    \draw[thick] (.5,-.6) arc(0:-180:.5 and .2);
    \draw[thick,dotted] (.5,-.6) arc(0:180:.5 and .2);} + 
    \tikzmath[scale=0.8]{\draw[thick] (.5,-.6) -- (.5,1);
    \draw[thick] (-.5,-.6) -- (-.5,1);
    \draw[thick] (0,1) ellipse (.5 and .2);
    \draw[thick] (.5,-.6) arc(0:-180:.5 and .2);
    \draw[thick,dotted] (.5,-.6) arc(0:180:.5 and .2);
    \draw[thick, cyan] (.5,.2) arc (0:-180:.5 and .2);
    \draw[thick, cyan, dotted] (.5,.2) arc (0:180:.5 and .2);
    \node[cyan] at (.0,-.2) {$\scriptstyle \psi$};
}\right), \quad 
    P_e = \frac{1}{2} \left(
    \tikzmath[scale=0.8]{\draw[thick] (.5,-.6) -- (.5,1);
    \draw[thick] (-.5,-.6) -- (-.5,1);
    \draw[thick] (0,1) ellipse (.5 and .2);
    \draw[thick] (.5,-.6) arc(0:-180:.5 and .2);
    \draw[thick,dotted] (.5,-.6) arc(0:180:.5 and .2);} - 
    \tikzmath[scale=0.8]{\draw[thick] (.5,-.6) -- (.5,1);
    \draw[thick] (-.5,-.6) -- (-.5,1);
    \draw[thick] (0,1) ellipse (.5 and .2);
    \draw[thick] (.5,-.6) arc(0:-180:.5 and .2);
    \draw[thick,dotted] (.5,-.6) arc(0:180:.5 and .2);
    \draw[thick, cyan] (.5,.2) arc (0:-180:.5 and .2);
    \draw[thick, cyan, dotted] (.5,.2) arc (0:180:.5 and .2);
    \node[cyan] at (.0,-.2) {$\scriptstyle \psi$};
}\right), \\
    P_m &=\frac{1}{2} \left(
    \tikzmath[scale=0.8]{\draw[thick] (.5,-.6) -- (.5,1);
    \draw[thick] (-.5,-.6) -- (-.5,1);
    \draw[thick] (0,1) ellipse (.5 and .2);
    \draw[thick] (.5,-.6) arc(0:-180:.5 and .2);
    \draw[thick,dotted] (.5,-.6) arc(0:180:.5 and .2);
    \draw[thick, cyan] (0,-0.8) -- (0,0.8);
    \node[cyan, right] at (-0.1,0) {$\scriptstyle \psi$};
    } + 
    \tikzmath[scale=0.8]{\draw[thick] (.5,-.6) -- (.5,1);
    \draw[thick] (-.5,-.6) -- (-.5,1);
    \draw[thick] (0,1) ellipse (.5 and .2);
    \draw[thick] (.5,-.6) arc(0:-180:.5 and .2);
    \draw[thick,dotted] (.5,-.6) arc(0:180:.5 and .2);
    \draw[thick, cyan] (.5,.) arc (0:-90:.5 and .2);
    \draw[thick, cyan, dotted] (.5,0.) arc (0:180:.5 and .2);
    \draw[thick, cyan] (0,-0.2) -- (0,-0.8);
    \draw[thick, cyan] (-.5,0.) arc (180:90:.5 and .4);
    \draw[thick, cyan] (0,0.4) -- (0,0.8);
    %\node[cyan] at (.0,-.2) {$\scriptstyle \psi$};
}\right), \quad 
    P_{f} = \frac{1}{2} \left(
    \tikzmath[scale=0.8]{\draw[thick] (.5,-.6) -- (.5,1);
    \draw[thick] (-.5,-.6) -- (-.5,1);
    \draw[thick] (0,1) ellipse (.5 and .2);
    \draw[thick] (.5,-.6) arc(0:-180:.5 and .2);
    \draw[thick,dotted] (.5,-.6) arc(0:180:.5 and .2);
    \draw[thick, cyan] (0,-0.8) -- (0,0.8);
    \node[cyan, right] at (-0.1,0) {$\scriptstyle \psi$};
    } - 
    \tikzmath[scale=0.8]{\draw[thick] (.5,-.6) -- (.5,1);
    \draw[thick] (-.5,-.6) -- (-.5,1);
    \draw[thick] (0,1) ellipse (.5 and .2);
    \draw[thick] (.5,-.6) arc(0:-180:.5 and .2);
    \draw[thick,dotted] (.5,-.6) arc(0:180:.5 and .2);
    \draw[thick, cyan] (.5,.) arc (0:-90:.5 and .2);
    \draw[thick, cyan, dotted] (.5,0.) arc (0:180:.5 and .2);
    \draw[thick, cyan] (0,-0.2) -- (0,-0.8);
    \draw[thick, cyan] (-.5,0.) arc (180:90:.5 and .4);
    \draw[thick, cyan] (0,0.4) -- (0,0.8);
    %\node[cyan] at (.0,-.2) {$\scriptstyle \psi$};
}\right).
\end{align*}
We calculate the symmetry action for $\cY=\cT\cY(\bbZ/2,\pm)$ a $\bbZ/2$-Tambara-Yamagami fusion category,\footnote{One may think of $\cT\cY(\bbZ/2,\pm)$ as an Ising category $\mathsf{Ising}_\pm$ where the $\pm$ indicates the Frobenius-Schur indicator \cite{MR2381536} of the non-invertible $\sigma$ anyon, but with the braiding forgotten.
Note that both $\cT\cY(\bbZ/2,\pm)$ agree as $\cX-\cX$ bimodule categories, as the sign $\pm$ only appears in $[F_\sigma^{\sigma\sigma\sigma}]$ \cite{MR1659954}.}
which has simple objects $1,\sigma,\psi$ and fusion rules determined by
$$
\psi\otimes \psi \cong 1
\qquad\qquad
\text{ and } 
\qquad\qquad
\sigma \otimes \sigma \cong 1\oplus \psi.
$$
To compute the action of the nontrivial symmetry, we must stack with a linear combination of domain wall tubes with $\sigma \in \cY_{1}$ running around, {i.e.} an element of $\Tube_{\cY_1}(\cY)$.
In the graphical calculus, we have
$$
\sum_i c_i T_i = c_1 \; 
 \tikzmath[scale=0.8]{\draw[thick] (.5,-.6) -- (.5,1);
    \draw[thick] (-.5,-.6) -- (-.5,1);
    \draw[thick] (0,1) ellipse (.5 and .2);
    \draw[thick] (.5,-.6) arc(0:-180:.5 and .2);
    \draw[thick, dotted] (.5,-.6) arc(0:180:.5 and .2);
    \draw[thick, violet] (.5,.2) arc (0:-180:.5 and .2);
    \draw[thick, violet, dotted] (.5,.2) arc (0:180:.5 and .2);
    \node[violet] at (.0,-.2) {$\scriptstyle \sigma$};
}
+ c_2 \;
\tikzmath[scale=0.8]{\draw[thick] (.5,-.6) -- (.5,1);
    \draw[thick] (-.5,-.6) -- (-.5,1);
    \draw[thick] (0,1) ellipse (.5 and .2);
    \draw[thick] (.5,-.6) arc(0:-180:.5 and .2);
    \draw[thick,dotted] (.5,-.6) arc(0:180:.5 and .2);
    \draw[thick, violet] (.5,.) arc (0:-90:.5 and .2);
    \draw[thick, violet, dotted] (.5,0.) arc (0:180:.5 and .2);
    \draw[thick, cyan] (0,-0.2) -- (0,-0.8);
    \draw[thick, violet] (-.5,0.) arc (180:90:.5 and .4);
    \draw[thick, violet] (0,0.4) -- (0,-0.2);
    \draw[thick, cyan] (0,0.4) -- (0,0.8);
    \node[cyan, right] at (-.1,0.6) {$\scriptstyle \psi$};
    \node[violet] at (0.2,0) {$\scriptstyle \sigma$};
} +  c_3 \;
\tikzmath[scale=0.8]{\draw[thick] (.5,-.6) -- (.5,1);
    \draw[thick] (-.5,-.6) -- (-.5,1);
    \draw[thick] (0,1) ellipse (.5 and .2);
    \draw[thick] (.5,-.6) arc(0:-180:.5 and .2);
    \draw[thick,dotted] (.5,-.6) arc(0:180:.5 and .2);
    \draw[thick, cyan] (0,-0.) -- (0,-0.8);
    \draw[thick, violet] (.5,.2) arc (0:-180:.5 and .2);
    \draw[thick, violet, dotted] (.5,.2) arc (0:180:.5 and .2);
    %\node[violet] at (0.2,0) {$\scriptstyle \sigma$};
} + 
 c_4 \;
\tikzmath[scale=0.8]{\draw[thick] (.5,-.6) -- (.5,1);
    \draw[thick] (-.5,-.6) -- (-.5,1);
    \draw[thick] (0,1) ellipse (.5 and .2);
    \draw[thick] (.5,-.6) arc(0:-180:.5 and .2);
    \draw[thick,dotted] (.5,-.6) arc(0:180:.5 and .2);
    \draw[thick, cyan] (0,-0.) -- (0,0.8);
    \draw[thick, violet] (.5,.2) arc (0:-180:.5 and .2);
    \draw[thick, violet, dotted] (.5,.2) arc (0:180:.5 and .2);
    %\node[violet] at (0.2,0) {$\scriptstyle \sigma$};
} + c_5 \; \tikzmath[scale=0.8]{\draw[thick] (.5,-.6) -- (.5,1);
    \draw[thick] (-.5,-.6) -- (-.5,1);
    \draw[thick] (0,1) ellipse (.5 and .2);
    \draw[thick] (.5,-.6) arc(0:-180:.5 and .2);
    \draw[thick,dotted] (.5,-.6) arc(0:180:.5 and .2);
    \draw[thick, violet] (.5,.) arc (0:-90:.5 and .2);
    \draw[thick, violet, dotted] (.5,0.) arc (0:180:.5 and .2);
    \draw[thick, violet] (0,-0.2) -- (0,-0.8);
    \draw[thick, violet] (-.5,0.) arc (180:90:.5 and .4);
    \draw[thick, violet] (0,0.4) -- (0,0.8);
} + c_6 \; \tikzmath[scale=0.8]{\draw[thick] (.5,-.6) -- (.5,1);
    \draw[thick] (-.5,-.6) -- (-.5,1);
    \draw[thick] (0,1) ellipse (.5 and .2);
    \draw[thick] (.5,-.6) arc(0:-180:.5 and .2);
    \draw[thick,dotted] (.5,-.6) arc(0:180:.5 and .2);
    \draw[thick, violet] (.5,.) arc (0:-90:.5 and .2);
    \draw[thick, violet, dotted] (.5,0.) arc (0:180:.5 and .2);
    \draw[thick, violet] (0,-0.2) -- (0,-0.8);
    \draw[thick, violet] (-.5,0.) arc (180:90:.5 and .4);
    \draw[thick, cyan] (0,0.4) -- (0,-0.2);
    \draw[thick, violet] (0,0.4) -- (0,0.8);
} 
$$
We start by stacking these domain wall tubes onto the idempotent for the $e$ anyon: 
\begin{align*}
    &P_e \mapsto \Big(\sum_i c_i T_i \Big) \frac{1}{2} \left(
    \tikzmath[scale=0.65]{\draw[thick] (.5,-.6) -- (.5,1);
    \draw[thick] (-.5,-.6) -- (-.5,1);
    \draw[thick] (0,1) ellipse (.5 and .2);
    \draw[thick] (.5,-.6) arc(0:-180:.5 and .2);
    \draw[thick,dotted] (.5,-.6) arc(0:180:.5 and .2);} - 
    \tikzmath[scale=0.65]{\draw[thick] (.5,-.6) -- (.5,1);
    \draw[thick] (-.5,-.6) -- (-.5,1);
    \draw[thick] (0,1) ellipse (.5 and .2);
    \draw[thick] (.5,-.6) arc(0:-180:.5 and .2);
    \draw[thick,dotted] (.5,-.6) arc(0:180:.5 and .2);
    \draw[thick, cyan] (.5,.2) arc (0:-180:.5 and .2);
    \draw[thick, cyan, dotted] (.5,.2) arc (0:180:.5 and .2);
    \node[cyan] at (.0,-.2) {$\scriptstyle \psi$};
}\right) \hspace{-0.1cm} = \frac{c_1}{2} \left(  \tikzmath[scale=0.65]{\draw[thick] (.5,-.6) -- (.5,1);
    \draw[thick] (-.5,-.6) -- (-.5,1);
    \draw[thick] (0,1) ellipse (.5 and .2);
    \draw[thick] (.5,-.6) arc(0:-180:.5 and .2);
    \draw[thick, dotted] (.5,-.6) arc(0:180:.5 and .2);
    \draw[thick, violet] (.5,.2) arc (0:-180:.5 and .2);
    \draw[thick, violet, dotted] (.5,.2) arc (0:180:.5 and .2);
    \node[violet] at (.0,-.2) {$\scriptstyle \sigma$};
} - \tikzmath[scale=0.65]{\draw[thick] (.5,-.6) -- (.5,1);
    \draw[thick] (-.5,-.6) -- (-.5,1);
    \draw[thick] (0,1) ellipse (.5 and .2);
    \draw[thick] (.5,-.6) arc(0:-180:.5 and .2);
    \draw[thick, dotted] (.5,-.6) arc(0:180:.5 and .2);
    \draw[thick, violet] (.5,-.1) arc (0:-180:.5 and .2);
    \draw[thick, violet, dotted] (.5,-0.1) arc (0:180:.5 and .2);
    \node[violet] at (.0,-.5) {$\scriptstyle \sigma$};
    \draw[thick, cyan] (.5,0.4) arc (0:-180:.5 and .2);
    \draw[thick, cyan, dotted] (.5,0.4) arc (0:180:.5 and .2);
    \node[cyan] at (.0,0.4) {$\scriptstyle \psi$};
} \right) + \frac{c_3}{2} \left( \tikzmath[scale=0.65]{\draw[thick] (.5,-.6) -- (.5,1);
    \draw[thick] (-.5,-.6) -- (-.5,1);
    \draw[thick] (0,1) ellipse (.5 and .2);
    \draw[thick] (.5,-.6) arc(0:-180:.5 and .2);
    \draw[thick,dotted] (.5,-.6) arc(0:180:.5 and .2);
    \draw[thick, cyan] (0,-0.) -- (0,-0.8);
    \draw[thick, violet] (.5,.2) arc (0:-180:.5 and .2);
    \draw[thick, violet, dotted] (.5,.2) arc (0:180:.5 and .2);
    %\node[violet] at (0.2,0) {$\scriptstyle \sigma$};
} - \tikzmath[scale=0.65]{\draw[thick] (.5,-.6) -- (.5,1);
    \draw[thick] (-.5,-.6) -- (-.5,1);
    \draw[thick] (0,1) ellipse (.5 and .2);
    \draw[thick] (.5,-.6) arc(0:-180:.5 and .2);
    \draw[thick,dotted] (.5,-.6) arc(0:180:.5 and .2);
    \draw[thick, cyan] (0,-0.3) -- (0,-0.8);
    \draw[thick, violet] (.5,-.1) arc (0:-180:.5 and .2);
    \draw[thick, violet, dotted] (.5,-.1) arc (0:180:.5 and .2);
    \draw[thick, cyan] (.5,0.4) arc (0:-180:.5 and .2);
    \draw[thick, cyan, dotted] (.5,0.4) arc (0:180:.5 and .2);
    %\node[violet] at (0.2,0) {$\scriptstyle \sigma$};
} \right) = c_3 \;
\tikzmath[scale=0.65]{\draw[thick] (.5,-.6) -- (.5,1);
    \draw[thick] (-.5,-.6) -- (-.5,1);
    \draw[thick] (0,1) ellipse (.5 and .2);
    \draw[thick] (.5,-.6) arc(0:-180:.5 and .2);
    \draw[thick,dotted] (.5,-.6) arc(0:180:.5 and .2);
    \draw[thick, cyan] (0,-0.) -- (0,-0.8);
    \draw[thick, violet] (.5,.2) arc (0:-180:.5 and .2);
    \draw[thick, violet, dotted] (.5,.2) arc (0:180:.5 and .2);
}
\end{align*}
In the last step, we used that the second tube in the $c_3$ term simplifies to the first one, up to a minus sign coming from the F-symbol $[F^{\sigma \psi \sigma}_\psi] = -1$. 
\cite[Def.~3.1]{MR1659954}.

The resulting tube has zero overlap with the anyon idempotents $P_1$, $P_e$ and $P_f$, but nonzero overlap with $P_m$: 
\begin{align*}
&P_m \; \tikzmath[scale=0.7]{\draw[thick] (.5,-.6) -- (.5,1);
    \draw[thick] (-.5,-.6) -- (-.5,1);
    \draw[thick] (0,1) ellipse (.5 and .2);
    \draw[thick] (.5,-.6) arc(0:-180:.5 and .2);
    \draw[thick,dotted] (.5,-.6) arc(0:180:.5 and .2);
    \draw[thick, cyan] (0,-0.) -- (0,-0.8);
    \draw[thick, violet] (.5,.2) arc (0:-180:.5 and .2);
    \draw[thick, violet, dotted] (.5,.2) arc (0:180:.5 and .2);
}  = \frac{1}{2}  \left(
    \tikzmath[scale=0.7]{\draw[thick] (.5,-.6) -- (.5,1);
    \draw[thick] (-.5,-.6) -- (-.5,1);
    \draw[thick] (0,1) ellipse (.5 and .2);
    \draw[thick] (.5,-.6) arc(0:-180:.5 and .2);
    \draw[thick,dotted] (.5,-.6) arc(0:180:.5 and .2);
    \draw[thick, cyan] (0,-0.8) -- (0,0.8);
    \node[cyan, right] at (-0.1,0) {$\scriptstyle \psi$};
    } + 
    \tikzmath[scale=0.7]{\draw[thick] (.5,-.6) -- (.5,1);
    \draw[thick] (-.5,-.6) -- (-.5,1);
    \draw[thick] (0,1) ellipse (.5 and .2);
    \draw[thick] (.5,-.6) arc(0:-180:.5 and .2);
    \draw[thick,dotted] (.5,-.6) arc(0:180:.5 and .2);
    \draw[thick, cyan] (.5,.) arc (0:-90:.5 and .2);
    \draw[thick, cyan, dotted] (.5,0.) arc (0:180:.5 and .2);
    \draw[thick, cyan] (0,-0.2) -- (0,-0.8);
    \draw[thick, cyan] (-.5,0.) arc (180:90:.5 and .4);
    \draw[thick, cyan] (0,0.4) -- (0,0.8);
}\right)\, \tikzmath[scale=0.7]{\draw[thick] (.5,-.6) -- (.5,1);
    \draw[thick] (-.5,-.6) -- (-.5,1);
    \draw[thick] (0,1) ellipse (.5 and .2);
    \draw[thick] (.5,-.6) arc(0:-180:.5 and .2);
    \draw[thick,dotted] (.5,-.6) arc(0:180:.5 and .2);
    \draw[thick, cyan] (0,-0.) -- (0,-0.8);
    \draw[thick, violet] (.5,.2) arc (0:-180:.5 and .2);
    \draw[thick, violet, dotted] (.5,.2) arc (0:180:.5 and .2);
} =  \tikzmath[scale=0.7]{\draw[thick] (.5,-.6) -- (.5,1);
    \draw[thick] (-.5,-.6) -- (-.5,1);
    \draw[thick] (0,1) ellipse (.5 and .2);
    \draw[thick] (.5,-.6) arc(0:-180:.5 and .2);
    \draw[thick,dotted] (.5,-.6) arc(0:180:.5 and .2);
    \draw[thick, cyan] (0,-0.) -- (0,-0.8);
    \draw[thick, violet] (.5,.2) arc (0:-180:.5 and .2);
    \draw[thick, violet, dotted] (.5,.2) arc (0:180:.5 and .2);
} \ne 0, \quad P_{1,e,f} \;\tikzmath[scale=0.7]{\draw[thick] (.5,-.6) -- (.5,1);
    \draw[thick] (-.5,-.6) -- (-.5,1);
    \draw[thick] (0,1) ellipse (.5 and .2);
    \draw[thick] (.5,-.6) arc(0:-180:.5 and .2);
    \draw[thick,dotted] (.5,-.6) arc(0:180:.5 and .2);
    \draw[thick, cyan] (0,-0.) -- (0,-0.8);
    \draw[thick, violet] (.5,.2) arc (0:-180:.5 and .2);
    \draw[thick, violet, dotted] (.5,.2) arc (0:180:.5 and .2);
}  = 0.
\end{align*}
Therefore, we recover the known result that $e$ is mapped to $m$ by the symmetry action of the non-identity element of $\mathbb{Z}_2$. It can also be checked that $1$ and $f$ are invariant under the symmetry.

\subsection{Non-invertible example with \texorpdfstring{$\cY=S_3$}{Y=S3}}
\label{subsec:beyondG}

The method of Section~\ref{subsec:domainwallwrap} does not require $\cY$ to be a
$G$-graded extension of $\cX$. In the more general setting of $\cX \subset \cY$, wrapping a domain wall can
map an anyon to a direct sum of anyons and twist defects, as we will see in this example. 

As input categories, we take $\cX=\Hilb(\mathbb{Z}_2)$ as before, together with $\cY = \Hilb(S_3)$ with trivial F-symbols. We denote the group elements in $S_3$ as follows:
$$
S_3=\langle r, s | r^3=s^2=1, \; sr = r^2s \rangle.
$$ 
Elements in $\Tube(\cY)$ can be written as
\[
     \tikzmath[scale=0.8]{
    \draw[thick] (.5,-.6) -- (.5,1);
    \draw[thick] (-.5,-.6) -- (-.5,1);
    \draw[thick] (0,1) ellipse (.5 and .2);
    \draw[thick] (.5,-.6) arc(0:-180:.5 and .2);
    \draw[thick, dotted] (.5,-.6) arc(0:180:.5 and .2);
    \draw[thick, DarkGreen,
  postaction={decorate},decoration={markings, mark=at position 0.7 with {\arrow{>}}  }] (-.5,.2) arc (-180:-90:.5 and .2);
  \draw[thick, DarkGreen,
  postaction={decorate},decoration={markings, mark=at position 0.7 with {\arrow{>}}  }] (0.,0.) arc (-90:0:.5 and .2);
    \node[DarkGreen, left] at (-0.5,.2) {$\scriptstyle g$};
    \draw[thick, violet, postaction={decorate},decoration={markings, mark=at position 0.8 with {\arrow{>}}}] (0,0) -- (0,0.8);
    \node[violet, right] at (0.5,0.5) {$\scriptstyle g^{-1} h g$};
    \draw[thick, gray,postaction={decorate},decoration={markings, mark=at position 0.4 with {\arrow{>}}}] (0,-0.8) -- (0,-0.);
    \node[gray, right] at (0.5,-0.3) {$\scriptstyle h$};
    \fill[DarkGreen] (0,0) circle (0.8mm);
    \draw[thick, DarkGreen, dotted] (.5,.2) arc (0:180:.5 and .2);
} \in \Tube(\cY) \quad \text{with } g,h \in S_3.
\]

In this case, we may decompose $\cY\cong\cX\oplus\cY_1$, where $\Irr(\cY_1)=S_3\setminus\mathbb{Z}_2=\{r,r^2,rs,r^2s\}$.
The action of the nontrivial symmetry sector is therefore obtained by stacking with
a linear combination of tubes with $r$, $r^2$, $sr$, or $sr^2$ lines wrapping around:
\[
\sum_i c_i T_i = c_1 \,  \tikzmath[scale=0.72]{\draw[thick] (.5,-.6) -- (.5,1);
    \draw[thick] (-.5,-.6) -- (-.5,1);
    \draw[thick] (0,1) ellipse (.5 and .2);
    \draw[thick] (.5,-.6) arc(0:-180:.5 and .2);
    \draw[thick, dotted] (.5,-.6) arc(0:180:.5 and .2);
    \draw[thick, magenta,
  postaction={decorate},decoration={markings, mark=at position 0.7 with {\arrow{>}}  }] (-.5,.2) arc (-180:0:.5 and .2);
    \draw[thick, magenta, dotted] (.5,.2) arc (0:180:.5 and .2);
    \node[magenta] at (.0,-.25) {$\scriptstyle r$};
} \, +  c_2 \,  \tikzmath[scale=0.72]{\draw[thick] (.5,-.6) -- (.5,1);
    \draw[thick] (-.5,-.6) -- (-.5,1);
    \draw[thick] (0,1) ellipse (.5 and .2);
    \draw[thick] (.5,-.6) arc(0:-180:.5 and .2);
    \draw[thick, dotted] (.5,-.6) arc(0:180:.5 and .2);
    \draw[thick, orange,
  postaction={decorate},decoration={markings, mark=at position 0.7 with {\arrow{>}}  }] (-.5,.2) arc (-180:0:.5 and .2);
    \draw[thick, orange, dotted] (.5,.2) arc (0:180:.5 and .2);
    \node[orange] at (.0,-.25) {$\scriptstyle r^2$};} 
\, + c_3 \,  \tikzmath[scale=0.72]{
    \draw[thick] (.5,-.6) -- (.5,1);
    \draw[thick] (-.5,-.6) -- (-.5,1);
    \draw[thick] (0,1) ellipse (.5 and .2);
    \draw[thick] (.5,-.6) arc(0:-180:.5 and .2);
    \draw[thick, dotted] (.5,-.6) arc(0:180:.5 and .2);
    \draw[thick, magenta,
  postaction={decorate},decoration={markings, mark=at position 0.7 with {\arrow{>}}  }] (-.5,.2) arc (-180:-90:.5 and .2);
  \draw[thick, magenta,
  postaction={decorate},decoration={markings, mark=at position 0.7 with {\arrow{>}}  }] (0.,0.) arc (-90:0:.5 and .2);
    \node[magenta, right] at (.5,.1) {$\scriptstyle r$};
    \draw[thick, cyan, postaction={decorate},decoration={markings, mark=at position 0.8 with {\arrow{>}}}] (0,0) -- (0,0.8);
    \node[cyan, right] at (0.5,0.6) {$\scriptstyle s$};
    \draw[thick, teal,postaction={decorate},decoration={markings, mark=at position 0.4 with {\arrow{>}}}] (0,-0.8) -- (0,-0.);
    \node[teal, right] at (0.5,-0.4) {$\scriptstyle s r$};
    \fill[magenta] (0,0) circle (0.8mm);
    \draw[thick, magenta, dotted] (.5,.2) arc (0:180:.5 and .2);
} \hspace{-1mm}  + c_4 \, \tikzmath[scale=0.72]{
    \draw[thick] (.5,-.6) -- (.5,1);
    \draw[thick] (-.5,-.6) -- (-.5,1);
    \draw[thick] (0,1) ellipse (.5 and .2);
    \draw[thick] (.5,-.6) arc(0:-180:.5 and .2);
    \draw[thick, dotted] (.5,-.6) arc(0:180:.5 and .2);
    \draw[thick, orange,
  postaction={decorate},decoration={markings, mark=at position 0.7 with {\arrow{>}}  }] (-.5,.2) arc (-180:-90:.5 and .2);
  \draw[thick, orange,
  postaction={decorate},decoration={markings, mark=at position 0.7 with {\arrow{>}}  }] (0.,0.) arc (-90:0:.5 and .2);
    \node[orange, right] at (.5,.1) {$\scriptstyle r^2$};
    \draw[thick, cyan, postaction={decorate},decoration={markings, mark=at position 0.8 with {\arrow{>}}}] (0,0) -- (0,0.8);
    \node[cyan, right] at (0.5,0.6) {$\scriptstyle s$};
    \draw[thick, olive,postaction={decorate},decoration={markings, mark=at position 0.4 with {\arrow{>}}}] (0,-0.8) -- (0,-0.);
    \node[olive, right] at (0.5,-0.4) {$\scriptstyle s r^2$};
    \fill[orange] (0,0) circle (0.8mm);
    \draw[thick, orange, dotted] (.5,.2) arc (0:180:.5 and .2);
} \hspace{-2mm}
 + c_5 \,  \tikzmath[scale=0.72]{\draw[thick] (.5,-.6) -- (.5,1);
    \draw[thick] (-.5,-.6) -- (-.5,1);
    \draw[thick] (0,1) ellipse (.5 and .2);
    \draw[thick] (.5,-.6) arc(0:-180:.5 and .2);
    \draw[thick, dotted] (.5,-.6) arc(0:180:.5 and .2);
    \draw[thick, teal,
  postaction={decorate},decoration={markings, mark=at position 0.7 with {\arrow{>}}  }] (-.5,.2) arc (-180:0:.5 and .2);
    \draw[thick, teal, dotted] (.5,.2) arc (0:180:.5 and .2);
    \node[teal] at (.0,-.25) {$\scriptstyle sr$};
} \, +  c_6 \,  \tikzmath[scale=0.72]{\draw[thick] (.5,-.6) -- (.5,1);
    \draw[thick] (-.5,-.6) -- (-.5,1);
    \draw[thick] (0,1) ellipse (.5 and .2);
    \draw[thick] (.5,-.6) arc(0:-180:.5 and .2);
    \draw[thick, dotted] (.5,-.6) arc(0:180:.5 and .2);
    \draw[thick, olive,
  postaction={decorate},decoration={markings, mark=at position 0.7 with {\arrow{>}}  }] (-.5,.2) arc (-180:0:.5 and .2);
    \draw[thick, olive, dotted] (.5,.2) arc (0:180:.5 and .2);
    \node[olive] at (.0,-.25) {$\scriptstyle s r^2$};} 
\, + c_7 \,  \tikzmath[scale=0.72]{
    \draw[thick] (.5,-.6) -- (.5,1);
    \draw[thick] (-.5,-.6) -- (-.5,1);
    \draw[thick] (0,1) ellipse (.5 and .2);
    \draw[thick] (.5,-.6) arc(0:-180:.5 and .2);
    \draw[thick, dotted] (.5,-.6) arc(0:180:.5 and .2);
    \draw[thick, teal,
  postaction={decorate},decoration={markings, mark=at position 0.7 with {\arrow{>}}  }] (-.5,.2) arc (-180:-90:.5 and .2);
  \draw[thick, teal,
  postaction={decorate},decoration={markings, mark=at position 0.7 with {\arrow{>}}  }] (0.,0.) arc (-90:0:.5 and .2);
    \node[teal, right] at (.5,.1) {$\scriptstyle sr$};
    \draw[thick, cyan, postaction={decorate},decoration={markings, mark=at position 0.8 with {\arrow{>}}}] (0,0) -- (0,0.8);
    \node[cyan, right] at (0.5,0.6) {$\scriptstyle s$};
    \draw[thick, olive,postaction={decorate},decoration={markings, mark=at position 0.4 with {\arrow{>}}}] (0,-0.8) -- (0,-0.);
    \node[olive, right] at (0.5,-0.4) {$\scriptstyle s r^2$};
    \fill[teal] (0,0) circle (0.8mm);
    \draw[thick, teal, dotted] (.5,.2) arc (0:180:.5 and .2);
} \hspace{-2mm} + c_8 \, \tikzmath[scale=0.72]{
    \draw[thick] (.5,-.6) -- (.5,1);
    \draw[thick] (-.5,-.6) -- (-.5,1);
    \draw[thick] (0,1) ellipse (.5 and .2);
    \draw[thick] (.5,-.6) arc(0:-180:.5 and .2);
    \draw[thick, dotted] (.5,-.6) arc(0:180:.5 and .2);
    \draw[thick, olive,
  postaction={decorate},decoration={markings, mark=at position 0.7 with {\arrow{>}}  }] (-.5,.2) arc (-180:-90:.5 and .2);
  \draw[thick, olive,
  postaction={decorate},decoration={markings, mark=at position 0.7 with {\arrow{>}}  }] (0.,0.) arc (-90:0:.5 and .2);
    \node[olive, right] at (.5,.1) {$\scriptstyle s r^2$};
    \draw[thick, cyan, postaction={decorate},decoration={markings, mark=at position 0.8 with {\arrow{>}}}] (0,0) -- (0,0.8);
    \node[cyan, right] at (0.5,0.6) {$\scriptstyle s$};
    \draw[thick, teal,postaction={decorate},decoration={markings, mark=at position 0.4 with {\arrow{>}}}] (0,-0.8) -- (0,-0.);
    \node[teal, right] at (0.5,-0.4) {$\scriptstyle s r$};
    \fill[olive] (0,0) circle (0.8mm);
    \draw[thick, olive, dotted] (.5,.2) arc (0:180:.5 and .2);
}
\]
Here we only include tubes in the linear combination which have $1$ or $s$ as the top vertical strand so that stacking them with toric code anyons gives a nonzero result. 

Stacking these domain wall tubes beneath $P_1$ and $P_e$, we find 
\begin{align*}
     \left(\sum_i c_i T_i\right) P_a &= \frac{c_1}{2} \;  \tikzmath[scale=0.8]{\draw[thick] (.5,-.6) -- (.5,1);
    \draw[thick] (-.5,-.6) -- (-.5,1);
    \draw[thick] (0,1) ellipse (.5 and .2);
    \draw[thick] (.5,-.6) arc(0:-180:.5 and .2);
    \draw[thick, dotted] (.5,-.6) arc(0:180:.5 and .2);
    \draw[thick, magenta,
  postaction={decorate},decoration={markings, mark=at position 0.7 with {\arrow{>}}  }] (-.5,.2) arc (-180:0:.5 and .2);
    \draw[thick, magenta, dotted] (.5,.2) arc (0:180:.5 and .2);
    \node[magenta] at (.0,-.25) {$\scriptstyle r$};
} +  \frac{c_2}{2} \;  \tikzmath[scale=0.8]{\draw[thick] (.5,-.6) -- (.5,1);
    \draw[thick] (-.5,-.6) -- (-.5,1);
    \draw[thick] (0,1) ellipse (.5 and .2);
    \draw[thick] (.5,-.6) arc(0:-180:.5 and .2);
    \draw[thick, dotted] (.5,-.6) arc(0:180:.5 and .2);
    \draw[thick, orange,
  postaction={decorate},decoration={markings, mark=at position 0.7 with {\arrow{>}}  }] (-.5,.2) arc (-180:0:.5 and .2);
    \draw[thick, orange, dotted] (.5,.2) arc (0:180:.5 and .2);
    \node[orange] at (.0,-.25) {$\scriptstyle r^2$};} +
     \frac{c_5}{2} \;  \tikzmath[scale=0.8]{\draw[thick] (.5,-.6) -- (.5,1);
    \draw[thick] (-.5,-.6) -- (-.5,1);
    \draw[thick] (0,1) ellipse (.5 and .2);
    \draw[thick] (.5,-.6) arc(0:-180:.5 and .2);
    \draw[thick, dotted] (.5,-.6) arc(0:180:.5 and .2);
    \draw[thick, teal,
  postaction={decorate},decoration={markings, mark=at position 0.7 with {\arrow{>}}  }] (-.5,.2) arc (-180:0:.5 and .2);
    \draw[thick, teal, dotted] (.5,.2) arc (0:180:.5 and .2);
    \node[teal] at (.0,-.25) {$\scriptstyle sr$};
} +  \frac{c_6}{2} \;  \tikzmath[scale=0.8]{\draw[thick] (.5,-.6) -- (.5,1);
    \draw[thick] (-.5,-.6) -- (-.5,1);
    \draw[thick] (0,1) ellipse (.5 and .2);
    \draw[thick] (.5,-.6) arc(0:-180:.5 and .2);
    \draw[thick, dotted] (.5,-.6) arc(0:180:.5 and .2);
    \draw[thick, olive,
  postaction={decorate},decoration={markings, mark=at position 0.7 with {\arrow{>}}  }] (-.5,.2) arc (-180:0:.5 and .2);
    \draw[thick, olive, dotted] (.5,.2) arc (0:180:.5 and .2);
    \node[olive] at (.0,-.25) {$\scriptstyle s r^2$};} \\
    &+ \epsilon_a \left( \frac{c_1}{2} \;  \tikzmath[scale=0.8]{\draw[thick] (.5,-.6) -- (.5,1);
    \draw[thick] (-.5,-.6) -- (-.5,1);
    \draw[thick] (0,1) ellipse (.5 and .2);
    \draw[thick] (.5,-.6) arc(0:-180:.5 and .2);
    \draw[thick, dotted] (.5,-.6) arc(0:180:.5 and .2);
    \draw[thick, olive,
  postaction={decorate},decoration={markings, mark=at position 0.7 with {\arrow{>}}  }] (-.5,.2) arc (-180:0:.5 and .2);
    \draw[thick, olive, dotted] (.5,.2) arc (0:180:.5 and .2);
    \node[olive] at (.0,-.3) {$\scriptstyle r s$};
} +  \frac{c_2}{2} \;  \tikzmath[scale=0.8]{\draw[thick] (.5,-.6) -- (.5,1);
    \draw[thick] (-.5,-.6) -- (-.5,1);
    \draw[thick] (0,1) ellipse (.5 and .2);
    \draw[thick] (.5,-.6) arc(0:-180:.5 and .2);
    \draw[thick, dotted] (.5,-.6) arc(0:180:.5 and .2);
    \draw[thick, teal,
  postaction={decorate},decoration={markings, mark=at position 0.7 with {\arrow{>}}  }] (-.5,.2) arc (-180:0:.5 and .2);
    \draw[thick, teal, dotted] (.5,.2) arc (0:180:.5 and .2);
    \node[teal] at (.0,-.3) {$\scriptstyle r^2 s$};} +
    \frac{c_5}{2} \;  \tikzmath[scale=0.8]{\draw[thick] (.5,-.6) -- (.5,1);
    \draw[thick] (-.5,-.6) -- (-.5,1);
    \draw[thick] (0,1) ellipse (.5 and .2);
    \draw[thick] (.5,-.6) arc(0:-180:.5 and .2);
    \draw[thick, dotted] (.5,-.6) arc(0:180:.5 and .2);
    \draw[thick, orange,
  postaction={decorate},decoration={markings, mark=at position 0.7 with {\arrow{>}}  }] (-.5,.2) arc (-180:0:.5 and .2);
    \draw[thick, orange, dotted] (.5,.2) arc (0:180:.5 and .2);
    \node[orange] at (.0,-.3) {$\scriptstyle r^2$};
} +  \frac{c_6}{2} \;  \tikzmath[scale=0.8]{\draw[thick] (.5,-.6) -- (.5,1);
    \draw[thick] (-.5,-.6) -- (-.5,1);
    \draw[thick] (0,1) ellipse (.5 and .2);
    \draw[thick] (.5,-.6) arc(0:-180:.5 and .2);
    \draw[thick, dotted] (.5,-.6) arc(0:180:.5 and .2);
    \draw[thick, magenta,
  postaction={decorate},decoration={markings, mark=at position 0.7 with {\arrow{>}}  }] (-.5,.2) arc (-180:0:.5 and .2);
    \draw[thick, magenta, dotted] (.5,.2) arc (0:180:.5 and .2);
    \node[magenta] at (.0,-.3) {$\scriptstyle r$};}
    \right), \quad a\in\{1,e\},
\end{align*}
where $\epsilon_1=+1$ and $\epsilon_e=-1$.
Similarly, stacking them beneath $P_m$ and $P_f$ gives 
\begin{align*} 
\left(\sum_i c_i T_i\right) P_{a} =& 
\frac{c_3}{2} \;  \tikzmath[scale=0.75]{
    \draw[thick] (.5,-.6) -- (.5,1);
    \draw[thick] (-.5,-.6) -- (-.5,1);
    \draw[thick] (0,1) ellipse (.5 and .2);
    \draw[thick] (.5,-.6) arc(0:-180:.5 and .2);
    \draw[thick, dotted] (.5,-.6) arc(0:180:.5 and .2);
    \draw[thick, magenta,
  postaction={decorate},decoration={markings, mark=at position 0.7 with {\arrow{>}}  }] (-.5,.2) arc (-180:-90:.5 and .2);
  \draw[thick, magenta,
  postaction={decorate},decoration={markings, mark=at position 0.7 with {\arrow{>}}  }] (0.,0.) arc (-90:0:.5 and .2);
    \node[right, magenta] at (.5,.1) {$\scriptstyle r$};
    \draw[thick, cyan, postaction={decorate},decoration={markings, mark=at position 0.8 with {\arrow{>}}}] (0,0) -- (0,0.8);
    \node[cyan, right] at (0.5,0.6) {$\scriptstyle s$};
    \draw[thick, teal,postaction={decorate},decoration={markings, mark=at position 0.4 with {\arrow{>}}}] (0,-0.8) -- (0,-0.);
    \node[teal, right] at (0.5,-0.4) {$\scriptstyle s r$};
    \fill[magenta] (0,0) circle (0.8mm);
    \draw[thick, magenta, dotted] (.5,.2) arc (0:180:.5 and .2);
} \hspace{-1mm}  + \,\frac{c_4}{2}\; \tikzmath[scale=0.75]{
    \draw[thick] (.5,-.6) -- (.5,1);
    \draw[thick] (-.5,-.6) -- (-.5,1);
    \draw[thick] (0,1) ellipse (.5 and .2);
    \draw[thick] (.5,-.6) arc(0:-180:.5 and .2);
    \draw[thick, dotted] (.5,-.6) arc(0:180:.5 and .2);
    \draw[thick, orange,
  postaction={decorate},decoration={markings, mark=at position 0.7 with {\arrow{>}}  }] (-.5,.2) arc (-180:-90:.5 and .2);
  \draw[thick, orange,
  postaction={decorate},decoration={markings, mark=at position 0.7 with {\arrow{>}}  }] (0.,0.) arc (-90:0:.5 and .2);
    \node[orange, right] at (.5,.1) {$\scriptstyle r^2$};
    \draw[thick, cyan, postaction={decorate},decoration={markings, mark=at position 0.8 with {\arrow{>}}}] (0,0) -- (0,0.8);
    \node[cyan, right] at (0.5,0.6) {$\scriptstyle s$};
    \draw[thick, olive,postaction={decorate},decoration={markings, mark=at position 0.4 with {\arrow{>}}}] (0,-0.8) -- (0,-0.);
    \node[olive, right] at (0.5,-0.4) {$\scriptstyle s r^2$};
    \fill[orange] (0,0) circle (0.8mm);
    \draw[thick, orange, dotted] (.5,.2) arc (0:180:.5 and .2);
} \hspace{-1mm} + \frac{c_7}{2} \;  \tikzmath[scale=0.75]{
    \draw[thick] (.5,-.6) -- (.5,1);
    \draw[thick] (-.5,-.6) -- (-.5,1);
    \draw[thick] (0,1) ellipse (.5 and .2);
    \draw[thick] (.5,-.6) arc(0:-180:.5 and .2);
    \draw[thick, dotted] (.5,-.6) arc(0:180:.5 and .2);
    \draw[thick, teal,
  postaction={decorate},decoration={markings, mark=at position 0.7 with {\arrow{>}}  }] (-.5,.2) arc (-180:-90:.5 and .2);
  \draw[thick, teal,
  postaction={decorate},decoration={markings, mark=at position 0.7 with {\arrow{>}}  }] (0.,0.) arc (-90:0:.5 and .2);
    \node[right,teal] at (.5,.1) {$\scriptstyle sr$};
    \draw[thick, cyan, postaction={decorate},decoration={markings, mark=at position 0.8 with {\arrow{>}}}] (0,0) -- (0,0.8);
    \node[cyan, right] at (0.5,0.6) {$\scriptstyle s$};
    \draw[thick, olive,postaction={decorate},decoration={markings, mark=at position 0.4 with {\arrow{>}}}] (0,-0.8) -- (0,-0.);
    \node[olive, right] at (0.5,-0.4) {$\scriptstyle s r^2$};
    \fill[teal] (0,0) circle (0.8mm);
    \draw[thick, teal, dotted] (.5,.2) arc (0:180:.5 and .2);
} \hspace{-1mm}  + \,\frac{c_8}{2}\; \tikzmath[scale=0.75]{
    \draw[thick] (.5,-.6) -- (.5,1);
    \draw[thick] (-.5,-.6) -- (-.5,1);
    \draw[thick] (0,1) ellipse (.5 and .2);
    \draw[thick] (.5,-.6) arc(0:-180:.5 and .2);
    \draw[thick, dotted] (.5,-.6) arc(0:180:.5 and .2);
    \draw[thick, olive,
  postaction={decorate},decoration={markings, mark=at position 0.7 with {\arrow{>}}  }] (-.5,.2) arc (-180:-90:.5 and .2);
  \draw[thick, olive,
  postaction={decorate},decoration={markings, mark=at position 0.7 with {\arrow{>}}  }] (0.,0.) arc (-90:0:.5 and .2);
    \node[olive, right] at (.5,.1) {$\scriptstyle s r^2$};
    \draw[thick, cyan, postaction={decorate},decoration={markings, mark=at position 0.8 with {\arrow{>}}}] (0,0) -- (0,0.8);
    \node[cyan, right] at (0.5,0.6) {$\scriptstyle s$};
    \draw[thick, teal,postaction={decorate},decoration={markings, mark=at position 0.4 with {\arrow{>}}}] (0,-0.8) -- (0,-0.);
    \node[teal, right] at (0.5,-0.4) {$\scriptstyle s r$};
    \fill[olive] (0,0) circle (0.8mm);
    \draw[thick, olive, dotted] (.5,.2) arc (0:180:.5 and .2);
}
\\&+ \epsilon_a \left( \frac{c_3}{2} \;  \tikzmath[scale=0.75]{
    \draw[thick] (.5,-.6) -- (.5,1);
    \draw[thick] (-.5,-.6) -- (-.5,1);
    \draw[thick] (0,1) ellipse (.5 and .2);
    \draw[thick] (.5,-.6) arc(0:-180:.5 and .2);
    \draw[thick, dotted] (.5,-.6) arc(0:180:.5 and .2);
    \draw[thick, olive,
  postaction={decorate},decoration={markings, mark=at position 0.7 with {\arrow{>}}  }] (-.5,.2) arc (-180:-90:.5 and .2);
  \draw[thick, olive,
  postaction={decorate},decoration={markings, mark=at position 0.7 with {\arrow{>}}  }] (0.,0.) arc (-90:0:.5 and .2);
    \node[right, olive] at (.5,0.1) {$\scriptstyle rs$};
    \draw[thick, cyan, postaction={decorate},decoration={markings, mark=at position 0.8 with {\arrow{>}}}] (0,0) -- (0,0.8);
    \node[cyan, right] at (0.5,0.6) {$\scriptstyle s$};
    \draw[thick, teal,postaction={decorate},decoration={markings, mark=at position 0.4 with {\arrow{>}}}] (0,-0.8) -- (0,-0.);
    \node[teal, right] at (0.5,-0.4) {$\scriptstyle sr$};
    \fill[olive] (0,0) circle (0.8mm);
    \draw[thick, olive, dotted] (.5,.2) arc (0:180:.5 and .2);
} \hspace{-1mm} + \frac{c_4}{2} \; \tikzmath[scale=0.75]{
    \draw[thick] (.5,-.6) -- (.5,1);
    \draw[thick] (-.5,-.6) -- (-.5,1);
    \draw[thick] (0,1) ellipse (.5 and .2);
    \draw[thick] (.5,-.6) arc(0:-180:.5 and .2);
    \draw[thick, dotted] (.5,-.6) arc(0:180:.5 and .2);
    \draw[thick, teal,
  postaction={decorate},decoration={markings, mark=at position 0.7 with {\arrow{>}}  }] (-.5,.2) arc (-180:-90:.5 and .2);
  \draw[thick, teal,
  postaction={decorate},decoration={markings, mark=at position 0.7 with {\arrow{>}}  }] (0.,0.) arc (-90:0:.5 and .2);
    \node[teal, right] at (.5,.1) {$\scriptstyle r^2 s$};
    \draw[thick, cyan, postaction={decorate},decoration={markings, mark=at position 0.8 with {\arrow{>}}}] (0,0) -- (0,0.8);
    \node[cyan, right] at (0.5,0.6) {$\scriptstyle s$};
    \draw[thick, olive,postaction={decorate},decoration={markings, mark=at position 0.4 with {\arrow{>}}}] (0,-0.8) -- (0,-0.);
    \node[olive, right] at (0.5,-0.4) {$\scriptstyle sr^2$};
    \fill[teal] (0,0) circle (0.8mm);
    \draw[thick, teal, dotted] (.5,.2) arc (0:180:.5 and .2);
}
\hspace{-1mm} + \frac{c_7}{2} \;  \tikzmath[scale=0.75]{
    \draw[thick] (.5,-.6) -- (.5,1);
    \draw[thick] (-.5,-.6) -- (-.5,1);
    \draw[thick] (0,1) ellipse (.5 and .2);
    \draw[thick] (.5,-.6) arc(0:-180:.5 and .2);
    \draw[thick, dotted] (.5,-.6) arc(0:180:.5 and .2);
    \draw[thick, orange,
  postaction={decorate},decoration={markings, mark=at position 0.7 with {\arrow{>}}  }] (-.5,.2) arc (-180:-90:.5 and .2);
  \draw[thick, orange,
  postaction={decorate},decoration={markings, mark=at position 0.7 with {\arrow{>}}  }] (0.,0.) arc (-90:0:.5 and .2);
    \node[right, orange] at (.5,0.1) {$\scriptstyle r^2$};
    \draw[thick, cyan, postaction={decorate},decoration={markings, mark=at position 0.8 with {\arrow{>}}}] (0,0) -- (0,0.8);
    \node[cyan, right] at (0.5,0.6) {$\scriptstyle s$};
    \draw[thick, olive,postaction={decorate},decoration={markings, mark=at position 0.4 with {\arrow{>}}}] (0,-0.8) -- (0,-0.);
    \node[olive, right] at (0.5,-0.4) {$\scriptstyle sr^2$};
    \fill[orange] (0,0) circle (0.8mm);
    \draw[thick, orange, dotted] (.5,.2) arc (0:180:.5 and .2);
} \hspace{-1mm} + \frac{c_8}{2} \; \tikzmath[scale=0.75]{
    \draw[thick] (.5,-.6) -- (.5,1);
    \draw[thick] (-.5,-.6) -- (-.5,1);
    \draw[thick] (0,1) ellipse (.5 and .2);
    \draw[thick] (.5,-.6) arc(0:-180:.5 and .2);
    \draw[thick, dotted] (.5,-.6) arc(0:180:.5 and .2);
    \draw[thick, magenta,
  postaction={decorate},decoration={markings, mark=at position 0.7 with {\arrow{>}}  }] (-.5,.2) arc (-180:-90:.5 and .2);
  \draw[thick, magenta,
  postaction={decorate},decoration={markings, mark=at position 0.7 with {\arrow{>}}  }] (0.,0.) arc (-90:0:.5 and .2);
    \node[magenta, right] at (.5,.1) {$\scriptstyle r$};
    \draw[thick, cyan, postaction={decorate},decoration={markings, mark=at position 0.8 with {\arrow{>}}}] (0,0) -- (0,0.8);
    \node[cyan, right] at (0.5,0.6) {$\scriptstyle s$};
    \draw[thick, teal,postaction={decorate},decoration={markings, mark=at position 0.4 with {\arrow{>}}}] (0,-0.8) -- (0,-0.);
    \node[teal, right] at (0.5,-0.4) {$\scriptstyle sr$};
    \fill[magenta] (0,0) circle (0.8mm);
    \draw[thick, magenta, dotted] (.5,.2) arc (0:180:.5 and .2);
} \hspace{-2mm}
\right), \; \, a \in \{m,f\},
\end{align*}
where $\epsilon_m=+1$ and $\epsilon_f=-1$.

The excitation created by stacking the domain wall tubes with $P_1$ or $P_e$ has nonzero overlap with both $P_1$ and $P_e$:
$$
    P_b \left(\sum_i c_i T_i \right) P_a = \frac{1}{2} (c_1 + \epsilon_a c_6 + \epsilon_b c_5 + \epsilon_a \epsilon_b c_2) \left(\tikzmath[scale=0.8]{\draw[thick] (.5,-.6) -- (.5,1);
    \draw[thick] (-.5,-.6) -- (-.5,1);
    \draw[thick] (0,1) ellipse (.5 and .2);
    \draw[thick] (.5,-.6) arc(0:-180:.5 and .2);
    \draw[thick, dotted] (.5,-.6) arc(0:180:.5 and .2);
    \draw[thick, magenta,
  postaction={decorate},decoration={markings, mark=at position 0.7 with {\arrow{>}}  }] (-.5,.2) arc (-180:0:.5 and .2);
    \draw[thick, magenta, dotted] (.5,.2) arc (0:180:.5 and .2);
    \node[magenta] at (.0,-.3) {$\scriptstyle r$};} + \epsilon_a \epsilon_b \; \tikzmath[scale=0.8]{\draw[thick] (.5,-.6) -- (.5,1);
    \draw[thick] (-.5,-.6) -- (-.5,1);
    \draw[thick] (0,1) ellipse (.5 and .2);
    \draw[thick] (.5,-.6) arc(0:-180:.5 and .2);
    \draw[thick, dotted] (.5,-.6) arc(0:180:.5 and .2);
    \draw[thick, orange,
  postaction={decorate},decoration={markings, mark=at position 0.7 with {\arrow{>}}  }] (-.5,.2) arc (-180:0:.5 and .2);
    \draw[thick, orange, dotted] (.5,.2) arc (0:180:.5 and .2);
    \node[orange] at (.0,-.3) {$\scriptstyle r^2$};} + \epsilon_a \; \tikzmath[scale=0.8]{\draw[thick] (.5,-.6) -- (.5,1);
    \draw[thick] (-.5,-.6) -- (-.5,1);
    \draw[thick] (0,1) ellipse (.5 and .2);
    \draw[thick] (.5,-.6) arc(0:-180:.5 and .2);
    \draw[thick, dotted] (.5,-.6) arc(0:180:.5 and .2);
    \draw[thick, olive,
  postaction={decorate},decoration={markings, mark=at position 0.7 with {\arrow{>}}  }] (-.5,.2) arc (-180:0:.5 and .2);
    \draw[thick, olive, dotted] (.5,.2) arc (0:180:.5 and .2);
    \node[olive] at (.0,-.3) {$\scriptstyle  sr^2$};} + \epsilon_b \; \tikzmath[scale=0.8]{\draw[thick] (.5,-.6) -- (.5,1);
    \draw[thick] (-.5,-.6) -- (-.5,1);
    \draw[thick] (0,1) ellipse (.5 and .2);
    \draw[thick] (.5,-.6) arc(0:-180:.5 and .2);
    \draw[thick, dotted] (.5,-.6) arc(0:180:.5 and .2);
    \draw[thick, teal,
  postaction={decorate},decoration={markings, mark=at position 0.7 with {\arrow{>}}  }] (-.5,.2) arc (-180:0:.5 and .2);
    \draw[thick, teal, dotted] (.5,.2) arc (0:180:.5 and .2);
    \node[teal] at (.0,-.3) {$\scriptstyle sr$};} \right),
$$
with $a,b \in \{1,e\}$.
The vector spaces spanned by these stacks of tubes are one-dimensional, and so we conclude 
$$
1 \mapsto 1 \oplus e \quad \text{and} \quad e \mapsto 1 \oplus e. 
$$

The remaining two anyons $m$ and $f$ are mapped to a twist defect under the symmetry action.
Recall that an object in the relative center $Z_\cX(\cY)$ can be described as a pair $(Y,\rho)$ with $Y \in \cY$ and $\rho$ a half-braiding.
In this case, since $\cX\subseteq\cY$ comes from the subgroup inclusion $\mathbb{Z}_2\subseteq S_3$, a simple object is determined by a choice of orbit $[g]$ for the conjugacy of $\mathbb{Z}_2$ on $S_3$ and an irreducible representation $\rho\in\Irr(\Rep(\operatorname{stab}_{\mathbb{Z}_2}(g))$.  
In the present example, the relative center contains the four toric code anyons together with two twist defects $(r \oplus r^2,1)$ and $(rs \oplus r^2s,1)$.
Both twist defects have quantum dimension 2.
Their $\Tube_\cX(\cY)$ idempotents are given by 
$$
    P_{r \oplus r^2} = \tikzmath[scale=0.8]{\draw[thick] (.5,-.6) -- (.5,1);
    \draw[thick] (-.5,-.6) -- (-.5,1);
    \draw[thick] (0,1) ellipse (.5 and .2);
    \draw[thick] (.5,-.6) arc(0:-180:.5 and .2);
    \draw[thick, dotted] (.5,-.6) arc(0:180:.5 and .2);
    \draw[thick, magenta, postaction={decorate},decoration={markings, mark=at position 0.7 with {\arrow{>}}}] (0,-0.8) -- (0,0.8);
    \node[magenta, right] at (0,0) {$\scriptstyle r$};} + \tikzmath[scale=0.8]{\draw[thick] (.5,-.6) -- (.5,1);
    \draw[thick] (-.5,-.6) -- (-.5,1);
    \draw[thick] (0,1) ellipse (.5 and .2);
    \draw[thick] (.5,-.6) arc(0:-180:.5 and .2);
    \draw[thick, dotted] (.5,-.6) arc(0:180:.5 and .2);
    \draw[thick, orange, postaction={decorate},decoration={markings, mark=at position 0.7 with {\arrow{>}}}] (0,-0.8) -- (0,0.8);
    \node[orange, right] at (-0.1,0) {$\scriptstyle r^2$};}, \qquad P_{rs \oplus r^2s} = \tikzmath[scale=0.8]{\draw[thick] (.5,-.6) -- (.5,1);
    \draw[thick] (-.5,-.6) -- (-.5,1);
    \draw[thick] (0,1) ellipse (.5 and .2);
    \draw[thick] (.5,-.6) arc(0:-180:.5 and .2);
    \draw[thick, dotted] (.5,-.6) arc(0:180:.5 and .2);
    \draw[thick, olive, postaction={decorate},decoration={markings, mark=at position 0.7 with {\arrow{>}}}] (0,-0.8) -- (0,0.8);
    \node[olive, right] at (0.5,0) {$\scriptstyle rs$};} + \tikzmath[scale=0.8]{\draw[thick] (.5,-.6) -- (.5,1);
    \draw[thick] (-.5,-.6) -- (-.5,1);
    \draw[thick] (0,1) ellipse (.5 and .2);
    \draw[thick] (.5,-.6) arc(0:-180:.5 and .2);
    \draw[thick, dotted] (.5,-.6) arc(0:180:.5 and .2);
    \draw[thick, teal, postaction={decorate},decoration={markings, mark=at position 0.7 with {\arrow{>}}}] (0,-0.8) -- (0,0.8);
    \node[teal, right] at (0.5,0) {$\scriptstyle r^2 s$};}.
$$
Only the overlap with the second twist defect idempotent is nonzero, and gives 
$$
    P_{rs \oplus r^2s} \left(\sum_i c_i T_i\right) P_a = \left(\sum_i c_i T_i\right) P_a, \qquad a \in \{m,f\}.
$$
Therefore, we conclude 
$$
    m \mapsto (rs \oplus r^2s,1) \quad \text{and} \quad f \mapsto (rs \oplus r^2s,1). 
$$
This non-invertible symmetry action matches the one discussed in e.g. \cite{KNBalasubramanian:2025vum, MR3905557, 10.21468/SciPostPhys.17.3.095}. In the language of \cite{KNBalasubramanian:2025vum}, the symmetry is the condensation surface $S_1 \boxplus S_e$ corresponding to condensing the $1 \oplus e$ 1-form symmetry line.

\begin{rem}
We speculate that this symmetry generator together with the trivial generator that sends each anyon to itself should form the hypergroup of double cosets $H \setminus G / H$ with $H = \mathbb{Z}_2$ and $G=S_3$. The trivial generator corresponds to $H = \{1,s\}$ and the nontrivial generator $S$ to $HrH = \{r,r^2,rs,r^2s\}$. Their multiplication follows from multiplying the double cosets, not from iterating their action on anyons, and is given by $S^2 = 2 \cdot 1 + S$.
Indeed, the $\cX-\cX$ bimodule $\cY=\Hilb[S_3]$ splits into two summands corresponding to the two double cosets, and the monoidal product of $\cY$ implements multiplication of double cosets with the correct multiplicities.
The same example was discussed in \cite[Section 6]{10.21468/SciPostPhys.17.3.095} from the perspective of gauging the $\mathbb{Z}_2$ non-normal subgroup of a system with $S_3$ symmetry. They referred to the residual non-invertible symmetry as a coset symmetry, although, as later clarified by \cite{10.21468/SciPostPhysCore.8.4.070}, the relevant symmetry generators are more accurately described by double cosets. The focus of \cite{10.21468/SciPostPhys.17.3.095} is non-invertible symmetry fractionalization in topological orders, which we do not discuss here. 
\end{rem}

%%%%%%%%%%%%%%%%%%%%%%%%%%%%%%%%%%%%%%%%%%%%%%%%%%%%%%
%%%%%%%%%%%%%%%%%%%%%%%%%%%%%%%%%%%%%%%%%%%%%%%%%%%%%%
%%%%%%%%%%%%%%%%%%%%%%%%%%%%%%%%%%%%%%%%%%%%%%%%%%%%%%
\section{Fusion of twist defects for enriched string nets}
\label{sec:Chocolate}

We now expand the results from \S\ref{sec:Vanilla} above to the \emph{enriched setting}, which yields NI-SETOs extending chiral theories.
The main technique is to use the fact that chiral (2+1)D theories live on boundaries of invertible (3+1)D theories \cite{MR4444089}.

Let $\cX\in \UFC^\cB$ be a $\cB$-\emph{enriched fusion category}, which comes equipped with a braided central functor $\Phi^Z: \cB\to Z(\cX)$.
This central functor is exactly the data needed to attach a (2+1)D Levin-Wen model for $\cX$ to the (3+1)D Walker-Wang model for $\cB$ \cite{MR4640433,2305.14068}.
Given a $\cB$-enriched $\cX-\cX$ bimodule $\cN\in \UFC^\cB(\cX\to \cX)$, we get a (1+1)D topological defect on the (2+1)D boundary.
We can again compute the twist defects using a \emph{relative dome algebra} $\Dome^{\textcolor{red}{\cB}}_\cX(\cN)$ generalizing the relative tube algebra in \S\ref{sec:GeneralizedTwistDefects} above and the dome algebra from \cite{2305.14068}, as
$$
\Mod(\Dome^{\textcolor{red}{\cB}}_\cX(\cN))
\cong
\Hom_{\cX-\cX}^\cB(\cX\to \cN).
$$

We then consider the case that $\cN=\cY \in \UFC^\cB(\cX\to \cX)$ is a fusion category \emph{enriched extension} of $\cX$, i.e., a fusion category $\cY\in \UFC^\cB$ which contains $\cX$ as a compatible full subcategory.
This means that we have a monoidal natural isomorphism
\begin{equation}
\label{eq:EnrichedExtension}
\begin{tikzcd}
\cB 
\arrow[r,"\Phi^Z_\cY"]
\arrow[d,"\Phi^Z_\cX"]
&
Z(\cY)
\arrow[d]
\\
Z(\cX)
\arrow[r, hook]
\arrow[ur,phantom,"\overset{\gamma}{\Rightarrow}"]
&
Z_\cX(\cY),
\end{tikzcd}
\end{equation}
which is exactly the action coherence isomorphism from \cite[Def.~3.2]{MR4528312}.
We then discuss how to fuse defects in the $\cB$-enriched relative center $Z_\cX^\cB(\cY):=\Hom_{\cX-\cX}^\cB(\cX\to \cY)$, which we again think of as a NI-SETO.

%%%%%%%%%%%%%%%%%%%%%%%%%%%%%%%%%%%%%%%%%%%%%%%%%%%%
\subsection{The enriched string net model}

We now review the enriched string net model from \cite{MR4640433,2305.14068}.
We give only a cursory introduction and we refer the reader to \cite{2305.14068} for the explicit details.
We consider a cubic lattice where degrees of freedom lie on the vertices as in \cite[p2]{2305.14068}.
Writing $X=\bigoplus_{x\in\Irr(\cX)} x$,  $B=\bigoplus_{b\in\Irr(\cB)} b$, and $\Phi: \cB\to \cX$ as the composite of the central functor $\Phi^Z: \cB\to Z(\cX)$ and the forgetful functor $Z(\cX)\to \cX$, our model is given as follows.
$$
\begin{tikzpicture}
	\pgfmathsetmacro{\lattice}{1};	
	\pgfmathsetmacro{\xoffset}{.4};	
	\pgfmathsetmacro{\yoffset}{.2};	
	\pgfmathsetmacro{\extra}{.15};	
	\pgfmathsetmacro{\zextra}{.35};	
	\pgfmathsetmacro{\zoom}{6};	
	\coordinate (z1) at ($ -.6*(\zoom,0) + (0,\lattice) $);
	\coordinate (z2) at ($ (\zoom,\lattice) $);
	\coordinate (aaa) at (0,0);
	\coordinate (baa) at ($ (aaa) + (\lattice,0) $);
	\coordinate (caa) at ($ (aaa) + 2*(\lattice,0) $);
	\coordinate (aba) at ($ (aaa) + (0,\lattice) $);
	\coordinate (bba) at ($ (aaa) + (\lattice,0) + (0,\lattice) $);
	\coordinate (cba) at ($ (aaa) + 2*(\lattice,0) + (0,\lattice) $);
	\coordinate (aca) at ($ (aaa) + 2*(0,\lattice) $);
	\coordinate (bca) at ($ (aaa) + (\lattice,0) + 2*(0,\lattice) $);
	\coordinate (cca) at ($ (aaa) + 2*(\lattice,0) + 2*(0,\lattice) $);
	\coordinate (aab) at ($ (aaa) + (\xoffset,\yoffset) $);
	\coordinate (bab) at ($ (aaa) + (\lattice,0) + (\xoffset,\yoffset) $);
	\coordinate (cab) at ($ (aaa) + 2*(\lattice,0) + (\xoffset,\yoffset) $);
	\coordinate (abb) at ($ (aaa) + (0,\lattice) + (\xoffset,\yoffset) $);
	\coordinate (bbb) at ($ (aaa) + (\lattice,0) + (0,\lattice) + (\xoffset,\yoffset) $);
	\coordinate (cbb) at ($ (aaa) + 2*(\lattice,0) + (0,\lattice) + (\xoffset,\yoffset) $);
	\coordinate (acb) at ($ (aaa) + 2*(0,\lattice) + (\xoffset,\yoffset) $);
	\coordinate (bcb) at ($ (aaa) + (\lattice,0) + 2*(0,\lattice) + (\xoffset,\yoffset) $);
	\coordinate (ccb) at ($ (aaa) + 2*(\lattice,0) + 2*(0,\lattice) + (\xoffset,\yoffset) $);
	\coordinate (aac) at ($ (aaa) + 2*(\xoffset,\yoffset) $);
	\coordinate (bac) at ($ (aaa) + (\lattice,0) + 2*(\xoffset,\yoffset) $);
	\coordinate (cac) at ($ (aaa) + 2*(\lattice,0) + 2*(\xoffset,\yoffset) $);
	\coordinate (abc) at ($ (aaa) + (0,\lattice) + 2*(\xoffset,\yoffset) $);
	\coordinate (bbc) at ($ (aaa) + (\lattice,0) + (0,\lattice) + 2*(\xoffset,\yoffset) $);
	\coordinate (cbc) at ($ (aaa) + 2*(\lattice,0) + (0,\lattice) + 2*(\xoffset,\yoffset) $);
	\coordinate (acc) at ($ (aaa) + 2*(0,\lattice) + 2*(\xoffset,\yoffset) $);
	\coordinate (bcc) at ($ (aaa) + (\lattice,0) + 2*(0,\lattice) + 2*(\xoffset,\yoffset) $);
	\coordinate (ccc) at ($ (aaa) + 2*(\lattice,0) + 2*(0,\lattice) + 2*(\xoffset,\yoffset) $);
	\draw[thick, red] ($ (aac) - \extra*(\lattice,0) $) -- (cac);
	\draw[thick, red] ($ (abc) - \extra*(\lattice,0) $) -- (cbc);
	\draw[thick, red] ($ (acc) - \extra*(\lattice,0) $) -- (ccc);
	\draw[thick, red] ($ (aac) - \extra*(0,\lattice) $) -- ($ (acc) + \extra*(0,\lattice) $);
	\draw[thick, red] ($ (bac) - \extra*(0,\lattice) $) -- ($ (bcc) + \extra*(0,\lattice) $);
	\draw[thick] ($ (cac) - \extra*(0,\lattice) $) -- ($ (ccc) + \extra*(0,\lattice) $);
	\draw[thick, red, knot] ($ (aab) - \extra*(\lattice,0) $) -- (cab);
	\draw[thick, red, knot] ($ (abb) - \extra*(\lattice,0) $) -- (cbb);
	\draw[thick, red, knot] ($ (acb) - \extra*(\lattice,0) $) -- (ccb);
	\draw[thick, red, knot] ($ (aab) - \extra*(0,\lattice) $) -- ($ (acb) + \extra*(0,\lattice) $);
	\draw[thick, red, knot] ($ (bab) - \extra*(0,\lattice) $) -- ($ (bcb) + \extra*(0,\lattice) $);
	\draw[very thick, white] ($ (cab) - \extra*(0,\lattice) $) -- ($ (ccb) + \extra*(0,\lattice) $);
	\draw[thick] ($ (cab) - \extra*(0,\lattice) $) -- ($ (ccb) + \extra*(0,\lattice) $);
	\draw[thick, red, knot] ($ (aaa) - \extra*(\lattice,0) $) -- (caa);
	\draw[thick, red, knot] ($ (aba) - \extra*(\lattice,0) $) -- (cba);
	\draw[thick, red, knot] ($ (aca) - \extra*(\lattice,0) $) -- (cca);
	\draw[thick, red, knot] ($ (aaa) - \extra*(0,\lattice) $) -- ($ (aca) + \extra*(0,\lattice) $);
	\draw[thick, red, knot] ($ (baa) - \extra*(0,\lattice) $) -- ($ (bca) + \extra*(0,\lattice) $);
	\draw[very thick, white] ($ (caa) - \extra*(0,\lattice) $) -- ($ (cca) + \extra*(0,\lattice) $);
	\draw[thick] ($ (caa) - \extra*(0,\lattice) $) -- ($ (cca) + \extra*(0,\lattice) $);
	\draw[thick, red] ($ (aaa) - \zextra*(\xoffset,\yoffset) $) -- ($ (aac) + \zextra*(\xoffset,\yoffset) $);
	\draw[thick, red] ($ (aba) - \zextra*(\xoffset,\yoffset) $) -- ($ (abc) + \zextra*(\xoffset,\yoffset) $);
	\draw[thick, red] ($ (aca) - \zextra*(\xoffset,\yoffset) $) -- ($ (acc) + \zextra*(\xoffset,\yoffset) $);
	\draw[thick, red] ($ (baa) - \zextra*(\xoffset,\yoffset) $) -- ($ (bac) + \zextra*(\xoffset,\yoffset) $);
	\draw[thick, red] ($ (bba) - \zextra*(\xoffset,\yoffset) $) -- ($ (bbc) + \zextra*(\xoffset,\yoffset) $);
	\draw[thick, red] ($ (bca) - \zextra*(\xoffset,\yoffset) $) -- ($ (bcc) + \zextra*(\xoffset,\yoffset) $);
	\draw[thick] ($ (caa) - \zextra*(\xoffset,\yoffset) $) -- ($ (cac) + \zextra*(\xoffset,\yoffset) $);
	\draw[thick] ($ (cba) - \zextra*(\xoffset,\yoffset) $) -- ($ (cbc) + \zextra*(\xoffset,\yoffset) $);
	\draw[thick] ($ (cca) - \zextra*(\xoffset,\yoffset) $) -- ($ (ccc) + \zextra*(\xoffset,\yoffset) $);
	\filldraw[red] (aaa) circle (.05cm);
	\filldraw[red] (aba) circle (.05cm);
	\filldraw[red] (aca) circle (.05cm);
	\filldraw[red] (aab) circle (.05cm);
	\filldraw[red] (abb) circle (.05cm);
	\filldraw[red] (acb) circle (.05cm);
	\filldraw[red] (aac) circle (.05cm);
	\filldraw[red] (abc) circle (.05cm);
	\filldraw[red] (acc) circle (.05cm);
	\filldraw[red] (baa) circle (.05cm);
	\filldraw[red] (bab) circle (.05cm);
	\filldraw[red] (bac) circle (.05cm);
	\filldraw[red] (bba) circle (.05cm);
	\filldraw[red] (bbb) circle (.05cm);
	\filldraw[red] (bbc) circle (.05cm);
	\filldraw[red] (bca) circle (.05cm);
	\filldraw[red] (bcb) circle (.05cm);
	\filldraw[red] (bcc) circle (.05cm);
	\filldraw (caa) circle (.05cm);
	\filldraw (cba) circle (.05cm);
	\filldraw (cca) circle (.05cm);
	\filldraw (cab) circle (.05cm);
	\filldraw (cbb) circle (.05cm);
	\filldraw (ccb) circle (.05cm);
	\filldraw (cac) circle (.05cm);
	\filldraw (cbc) circle (.05cm);
	\filldraw (ccc) circle (.05cm);
	\draw[blue!50, very thin] (cbb) circle (.15cm);
		\foreach \x/\y/\s in {155/100/24, 175/120/23, 185/140/22, 200/160/23, 220/180/25, 240/200/26} 
		{\draw[dotted, blue!50] ($(cbb)+(\x:\extra)$) to[bend left=\s] ($(z2) + (\y:\lattice)$);}
		\foreach \x/\y/\s in {135/70/30,290/225/28}
		{\draw[blue!50, very thin] ($ (cbb) + (\x:\extra) $) to[bend left=\s] ($ (z2) + (\y:\lattice) $);}
			\draw[blue!50, very thin] (z2) circle (\lattice);
			\draw[thick, red] ($ (z2) - .75*(\lattice,0) $) node[above] {\scriptsize{$B$}} -- (z2);
			\draw ($ (z2) - .75*(0,\lattice) $) node[right] {\scriptsize{$X$}} -- ($ (z2) + .75*(0,\lattice) $) node[left] {\scriptsize{$X$}};
			\draw ($ (z2) - 1.25*(\xoffset,\yoffset) $) node[below] {\scriptsize{$X$}} -- ($ (z2) + 1.25*(\xoffset,\yoffset) $) node[above] {\scriptsize{$X$}};
			\filldraw (z2) circle (.05cm);
			\node at ($ (z2) -1.3*(0,\lattice) $) {$\cX(\Phi(\textcolor{red}{B})XX \to XX)$};
			\draw[blue!50, very thin] ($ (z2) + (135:\lattice) $) -- ($ (z2) +(-45:\lattice) $);
			\foreach \x in {135,-45}
			{\draw[blue!50, very thin, -stealth] ($ (z2) + (\x:.5*\lattice) - (.1,.1)$) to ($ (z2) + (\x:.5*\lattice) + (.1,.1)$);}
	\pgftransformxscale{-1}
	\draw[blue!50, very thin] (aba) circle (.15cm);
		\foreach \x/\y/\s in {155/100/24, 175/120/23, 185/140/22, 200/160/23, 220/180/25, 240/200/26} 
		{\draw[dotted, blue!50] ($(aba)+(\x:\extra)$) to[bend left=\s] ($(z1) + (\y:\lattice)$);}
		\foreach \x/\y/\s in {135/70/30,290/225/28}
		{\draw[blue!50, very thin] ($ (aba) + (\x:\extra) $) to[bend left=\s] ($ (z1) + (\y:\lattice) $);}
			\draw[blue!50, very thin] (z1) circle (\lattice);
			\draw[thick, red] ($ (z1) - .75*(\lattice,0) $) node[below] {\scriptsize{$B$}} -- ($ (z1) + .75*(\lattice,0) $) node[above] {\scriptsize{$B$}};
			\draw[thick, red] ($ (z1) - .75*(0,\lattice) $) node[right] {\scriptsize{$B$}} -- ($ (z1) + .75*(0,\lattice) $) node[left] {\scriptsize{$B$}};
			\draw[thick, red] ($ (z1) - 1.25*(\xoffset,0) + 1.25*(0,\yoffset) $) node[above] {\scriptsize{$B$}} -- ($ (z1) + 1.25*(\xoffset,0) - 1.25*(0,\yoffset) $) node[below] {\scriptsize{$B$}};
			\filldraw[thick, red] (z1) circle (.05cm);
			\node at ($ (z1) -1.3*(0,\lattice) $) {\textcolor{red}{$\cB(BBB \to BBB)$}};
			\draw[blue!50, very thin] ($ (z1) + (225:\lattice) $) -- ($ (z1) + (45:\lattice) $);
			\foreach \x in {225,45}
			{\draw[blue!50, very thin, -stealth] ($ (z1) + (\x:.5*\lattice) + (.1,-.1)$) to ($ (z1) + (\x:.5*\lattice) - (.1,-.1)$);}
\end{tikzpicture}
$$
For each edge, we have projectors which enforce that the edge labels match.
For each plaquette, we have a projector which implements the regular element.

\begin{warn}
\label{warn:Perspective}
When we draw 3D pictures, we use the perspective that the 2D edge is closer to the reader, as that is where the interesting physics lives.
However, when drawing 2D string diagrams, we swap our perspective to look at the 2D boundary \emph{through the 3D bulk}.
This has the effect of reversing the convention for half-braidings.
\end{warn}

$$
\frac{1}{D_\cB}\sum_{r\in \Irr(\cB)}
d_r\cdot
\tikzmath{
\draw[step=1.0,red,thick] (0.5,0.5) grid (2.5,2.5);
\draw[thick, red] (2,1) -- ($ (2,1) + (-.3,.3)$);
\draw[thick, red] (1,1) -- ($ (1,1) + (-.3,.3)$);
\draw[thick, red] (1,2) -- ($ (1,2) + (-.3,.3)$);
\draw[thick, red] (2,2) -- ($ (2,2) + (-.3,.3)$);
\node at (2.3,.8) {$\scriptstyle \xi_{2,1}$};
\node at (.5,2.2) {$\scriptstyle \xi_{1,2}$};
\node at (.7,.8) {$\scriptstyle \xi_{1,1}$};
\node at (2.3,2.2) {$\scriptstyle \xi_{2,2}$};
\node at (1.5,.85) {$\scriptstyle g$};
\node at (2.15,1.5) {$\scriptstyle h$};
\node at (1.5,2.15) {$\scriptstyle i$};
\node at (.85,1.5) {$\scriptstyle j$};
\draw[knot, thick, orange, rounded corners=5pt] (1.15,1.15) rectangle (1.85,1.85);
\fill[gray!60, rounded corners=5pt, opacity=.5] (1.16,1.16) rectangle (1.84,1.84);
\node[orange] at (1.3,1.5) {$\scriptstyle r$};
\draw[thick, knot, red] (2,1) -- ($ (2,1) + (.3,-.3)$);
\draw[thick, knot, red] (1,1) -- ($ (1,1) + (.3,-.3)$);
\draw[thick, knot, red] (1,2) -- ($ (1,2) + (.3,-.3)$);
\draw[thick, knot, red] (2,2) -- ($ (2,2) + (.3,-.3)$);
}
\qquad\qquad\text{and}\qquad\qquad
\frac{1}{D_\cX}\sum_{r\in \Irr(\cX)}
d_r\cdot
\tikzmath{
\draw[step=1.0,black,thin] (0.5,0.5) grid (2.5,2.5);
\draw[thick, red] (2,1) -- ($ (2,1) + (-.3,.3)$);
\draw[thick, red] (1,1) -- ($ (1,1) + (-.3,.3)$);
\draw[thick, red] (1,2) -- ($ (1,2) + (-.3,.3)$);
\draw[thick, red] (2,2) -- ($ (2,2) + (-.3,.3)$);
\node at (2.3,.8) {$\scriptstyle \xi_{2,1}$};
\node at (.5,2.2) {$\scriptstyle \xi_{1,2}$};
\node at (.7,.8) {$\scriptstyle \xi_{1,1}$};
\node at (2.3,2.2) {$\scriptstyle \xi_{2,2}$};
\node at (1.5,.85) {$\scriptstyle g$};
\node at (2.15,1.5) {$\scriptstyle h$};
\node at (1.5,2.15) {$\scriptstyle i$};
\node at (.85,1.5) {$\scriptstyle j$};
\draw[knot, thick, cyan, rounded corners=5pt] (1.15,1.15) rectangle (1.85,1.85);
\fill[gray!60, rounded corners=5pt, opacity=.5] (1.16,1.16) rectangle (1.84,1.84);
\node[cyan] at (1.3,1.5) {$\scriptstyle r$};
}
$$
Observe that the plaquette operator on the left requires the braiding on $\cB$, and the plaquette operator on the right requires the half-braiding for $\cB$ with $\cX$.

Boundary excitations are now localized on two adjacent plaquettes, and using the same procedure as before, we can look at the space of states $|\psi\rangle$ giving a representation of the tube algebra $\Tube(\cX)$.
$$
\tikzmath{
\draw[thick, red, knot, xshift=-.2cm, yshift=.2cm] (.7,.7) grid (3.3,2.3);
\foreach \x in {1,2,3}{
\foreach \y in {1,2}{
\draw[thick, red] (\x,\y) -- ($ (\x,\y) + (-.4,.4) $);
}}
\foreach \x in {.8,1.8,2.8}{
\foreach \y in {1.2,2.2}{
\fill[red] (\x,\y) circle (.05cm);
}}
\fill[blue!60, opacity=.5, rounded corners=5pt] (.9,1.5) -- (.9,1.3) -- (1.1,1.1) -- (2.9,1.1) -- (2.9,1.9) -- (2.8,2.1) -- (.9,2.1) -- (.9,1.5);
\draw[thick, knot] (.7,1) -- (3.3,1);
\draw[thick, knot] (.7,2) -- (3.3,2);
\draw[thick, knot] (1,.7) -- (1,2.3);
\draw[thick, knot] (3,.7) -- (3,2.3);
\draw[thick, knot] (2,.7) -- (2,1.3);
\draw[thick, knot] (2,2.3) -- (2,1.7);
\foreach \x in {1,2,3}{
\foreach \y in {1,2}{
\fill (\x,\y) circle (.05cm);
}}
% \node at (2.15,2.15) {$\scriptstyle v$};
% \node at (2.15,1.75) {$\scriptstyle \ell$};
}
\quad\rightsquigarrow\quad
\tikzmath{
\draw[thick, red, knot, xshift=-.2cm, yshift=.2cm] (.7,.7) grid (3.3,2.3);
\foreach \x in {1,2,3}{
\foreach \y in {1,2}{
\draw[thick, red] (\x,\y) -- ($ (\x,\y) + (-.4,.4) $);
}}
\foreach \x in {.8,1.8,2.8}{
\foreach \y in {1.2,2.2}{
\fill[red] (\x,\y) circle (.05cm);
}}
\fill[blue!60, opacity=.5, rounded corners=5pt] (.9,1.5) -- (.9,1.3) -- (1.1,1.1) -- (2.9,1.1) -- (2.9,1.9) -- (2.8,2.1) -- (.9,2.1) -- (.9,1.5);
\draw[thick, knot] (.7,1) -- (3.3,1);
\draw[thick, knot] (.7,2) -- (3.3,2);
\draw[thick, knot] (1,.7) -- (1,2.3);
\draw[thick, knot] (3,.7) -- (3,2.3);
\draw[thick, knot] (2,.7) -- (2,1);
\draw[thick, knot] (2,2.3) -- (2,1.7);
\foreach \x in {1,2,3}{
\foreach \y in {1,2}{
\fill (\x,\y) circle (.05cm);
}}
% \node at (2.15,2.15) {$\scriptstyle v$};
% \node at (2.15,1.75) {$\scriptstyle \ell$};
}
\quad\rightsquigarrow\quad
\tikzmath{
\draw[thick, red, knot, xshift=-.2cm, yshift=.2cm] (.7,.7) grid (3.3,2.3);
\foreach \x in {1,2,3}{
\foreach \y in {1,2}{
\draw[thick, red] (\x,\y) -- ($ (\x,\y) + (-.4,.4) $);
}}
\foreach \x in {.8,1.8,2.8}{
\foreach \y in {1.2,2.2}{
\fill[red] (\x,\y) circle (.05cm);
}}
\draw[rounded corners=5pt, thick, blue] (1.1,1.1) rectangle (2.9,1.85);
\fill[fill=blue!60, opacity=.5, rounded corners=5pt] (1.1,1.1) rectangle (2.9,1.85);
\draw[thick, knot] (.7,1) -- (3.3,1);
\draw[thick, knot] (.7,2) -- (3.3,2);
\draw[thick, knot] (1,.7) -- (1,2.3);
\draw[thick, knot] (3,.7) -- (3,2.3);
\draw[thick, knot] (2,.7) -- (2,1);
\draw[thick, knot] (2,2.3) -- (2,2);
\draw[thick] (2,2) -- (2,1.7);
\foreach \x in {1,2,3}{
\foreach \y in {1,2}{
\fill (\x,\y) circle (.05cm);
}}
\fill[blue] (2,1.85) circle (.05cm);
% \node at (2.15,2.15) {$\scriptstyle v$};
% \node at (2.15,1.75) {$\scriptstyle \ell$};
}
$$
However, we now see that when the $x$-string going around the back of the tube lies in the image of $\cB$ in $\cX$, the existence of the $\cB$-bulk attached to $\cX$ allows us to perform an isotopy which pulls the string back into the bulk.
One may then pull the string across the top $\cB$-plaquette operators.
$$
\tikzmath{
\draw[thick, red, knot, xshift=-.2cm, yshift=.2cm] (.7,.7) grid (3.3,2.3);
\foreach \x in {1,2,3}{
\foreach \y in {1,2}{
\draw[thick, red] (\x,\y) -- ($ (\x,\y) + (-.4,.4) $);
}}
\foreach \x in {.8,1.8,2.8}{
\foreach \y in {1.2,2.2}{
\fill[red] (\x,\y) circle (.05cm);
}}
\draw[rounded corners=5pt, thick, orange] (1.1,1.1) rectangle (2.9,1.85);
\fill[fill=blue!60, opacity=.5, rounded corners=5pt] (1.1,1.1) rectangle (2.9,1.85);
\draw[thick, knot] (.7,1) -- (3.3,1);
\draw[thick, knot] (.7,2) -- (3.3,2);
\draw[thick, knot] (1,.7) -- (1,2.3);
\draw[thick, knot] (3,.7) -- (3,2.3);
\draw[thick, knot] (2,.7) -- (2,1);
\draw[thick, knot] (2,2.3) -- (2,2);
\draw[thick] (2,2) -- (2,1.7);
\foreach \x in {1,2,3}{
\foreach \y in {1,2}{
\fill (\x,\y) circle (.05cm);
}}
\fill[orange] (2,1.85) circle (.05cm);
% \node at (2.15,2.15) {$\scriptstyle v$};
% \node at (2.15,1.75) {$\scriptstyle \ell$};
}
=
\tikzmath{
\draw[thick, red, knot, xshift=-.2cm, yshift=.2cm] (.7,.7) grid (3.3,2.3);
\foreach \x in {1,2,3}{
\foreach \y in {1,2}{
\draw[thick, red] (\x,\y) -- ($ (\x,\y) + (-.4,.4) $);
}}
\foreach \x in {.8,1.8,2.8}{
\foreach \y in {1.2,2.2}{
\fill[red] (\x,\y) circle (.05cm);
}}
\fill[blue!60, opacity=.5, rounded corners=5pt] (.9,1.5) -- (.9,1.3) -- (1.1,1.1) -- (2.9,1.1) -- (2.9,1.9) -- (2.8,2.1) -- (.9,2.1) -- (.9,1.5);
\draw[thick, orange, knot] (2,1.75) circle (.1cm);
\draw[thick, knot] (.7,1) -- (3.3,1);
\draw[thick, knot] (.7,2) -- (3.3,2);
\draw[thick, knot] (1,.7) -- (1,2.3);
\draw[thick, knot] (3,.7) -- (3,2.3);
\draw[thick, knot] (2,.7) -- (2,1);
\draw[thick, knot] (2,2.3) -- (2,1.5);
\foreach \x in {1,2,3}{
\foreach \y in {1,2}{
\fill (\x,\y) circle (.05cm);
\fill[orange] (2,1.85) circle (.05cm);
}}
% \node at (2.15,2.15) {$\scriptstyle v$};
% \node at (2.15,1.75) {$\scriptstyle \ell$};
}
$$
That is, the $\Tube(\cX)$ action descends to an action of the \emph{dome algebra} $\Dome^{\textcolor{red}{\cB}}(\cX)$, the quotient of $\Tube(\cX)$ by the ideal
(recall Warning \ref{warn:Perspective}!)
\begin{equation}
\label{eq:DomeIdeal}
\cI=
\left\langle
\tikzmath{
\draw (-.2,.3) --node[left]{$\scriptstyle X$} (-.2,.7);
\draw[thick, red] (.2,.3) --node[right]{$\scriptstyle \Phi b$} (.2,.7);
\draw[thick, red] (-.2,-.3) --node[left]{$\scriptstyle \Phi b$} (-.2,-.7);
\draw (.2,-.3) --node[right]{$\scriptstyle X$} (.2,-.7);
\roundNbox{}{(0,0)}{.3}{.1}{.1}{$f$}
}
-
\tikzmath{
\draw (-.2,.3) --node[left]{$\scriptstyle X$} (-.2,.7);
\draw (.2,-.3) -- (.2,-.7) --node[right]{$\scriptstyle X$} (.2,-1);
\draw[thick, knot, red] (.2,.3) %node[left, yshift=.2cm]{$\scriptstyle \Phi(a)$} 
arc(180:0:.2cm) -- (.6,-.3) arc(0:-180:.4cm) node[left, yshift=-.2cm]{$\scriptstyle \Phi b$};
\roundNbox{}{(0,0)}{.3}{.1}{.1}{$f$}
}
\right\rangle
=
\left\langle
\tikzmath{
\draw[thick] (.5,-.6) -- (.5,1);
\draw[thick] (-.5,-.6) -- (-.5,1);
\draw[thick] (0,1) ellipse (.5 and .2);
\draw[thick] (.5,-.6) arc(0:-180:.5 and .2);
\draw[thick,dotted] (.5,-.6) arc(0:180:.5 and .2);
\draw (0,-.8) --node[left]{$\scriptstyle X$} (0,-.3);
\draw (0,.8) --node[left]{$\scriptstyle X$} (0,.3);
\draw[thick, red] (.5,.2) arc (0:-180:.5 and .2);
\draw[thick, red, dotted] (.5,.2) arc (0:180:.5 and .2);
\roundNbox{fill=white}{(0,0)}{.3}{0}{0}{$f$}
}
-
\tikzmath{
\draw[thick] (.7,-.8) -- (.7,.8);
\draw[thick] (-.7,-.8) -- (-.7,.81);
\draw[thick] (0,.8) ellipse (.7 and .2);
\draw[thick] (.7,-.8) arc(0:-180:.7 and .2);
\draw[thick, dotted] (.7,-.8) arc(0:180:.7 and .2);
\draw (0,-1) --node[left]{$\scriptstyle X$} (0,-.6) -- (0,-.25);
\draw (0,.6) --node[left]{$\scriptstyle X$} (0,.3);
\draw[thick, red, knot] (.3,0) arc(90:-90:.25cm) -- (-.3,-.5) arc(270:90:.25cm);
\roundNbox{fill=white}{(0,0)}{.3}{0}{0}{$f$}
}
\right\rangle.
\end{equation}
Similar to how $\Mod(\Tube(\cX))\cong Z(\cX)$, $\Mod(\Dome^{\textcolor{red}{\cB}}(\cX))\cong Z^{\textcolor{red}{\cB}}(\cX)$ \cite[Thm.~3.9]{2305.14068}, the \emph{enriched center/M\"uger centralizer} $\cB'\subset Z(\cX)$ \cite{MR3725882}.
We have an elegant description of the unitary adjoint $\Tr^{\textcolor{red}{\cB}}: \cX\to Z^{\textcolor{red}{\cB}}(\cX)$ of the forgetful functor $Z^{\textcolor{red}{\cB}}(\cX)\hookrightarrow Z(\cX)\to \cX$
by splitting an idempotent in $\End_{Z(\cX)}(\Tr(x))$ for each $x\in\cX$, analogous to the explicit description of $\Tr_\cX$ given in \cite[\S{4}]{MR3663592} and the idempotent from \cite[Lem.~6.3]{MR4581741}
(compare with the formula from \cite[Rem.~4.18]{2506.19969}).
We have a simple formula for this idempotent
using the language of strings on tubes afforded by $\Tr_\cX$:
$$
\Tr^{\textcolor{red}{\cB}}(x)
=
\im(p^{\textcolor{red}{\cB}}_x)
\qquad
\text{where}
\qquad
p^{\textcolor{red}{\cB}}_x:=
\frac{1}{D_\cB}
\sum_{
b\in\Irr(\cB)
}
d_b\cdot
\tikzmath{
\draw[thick] (.5,-.6) -- (.5,1);
\draw[thick] (-.5,-.6) -- (-.5,1);
\draw (0,-.8) --node[left]{$\scriptstyle x$} (0,.4) -- (0,.8);
\draw[thick, red, knot] (.5,.2) arc (0:-180:.5 and .2);
\draw[thick, red, dotted] (.5,.2) arc (0:180:.5 and .2);
\node[red] at (.2,.2) {$\scriptstyle \Phi b$};
\draw[thick] (0,1) ellipse (.5 and .2);
\draw[thick] (.5,-.6) arc(0:-180:.5 and .2);
\draw[thick,dotted] (.5,-.6) arc(0:180:.5 and .2);
}
\in \End_{Z(\cX)}(\Tr(X)).
$$
Using the usual argument that the plaquette operator in the Levin-Wen model implements isotopy in the skein module, we see that $p^\cB_X$ implements the defining relation of the ideal $\cI\subset \Tube(\cX)$:
$$
\frac{1}{D_\cB}
\sum_{b\in\Irr(\cB)}
d_b\cdot
\tikzmath{
\draw[thick] (.8,-1) -- (.8,1);
\draw[thick] (-.8,-1) -- (-.8,1);
\draw[thick] (0,1) ellipse (.8 and .3);
\draw[thick] (.8,-1) arc(0:-180:.8 and .3);
\draw (0,-1.3) --node[left]{$\scriptstyle X$} (0,-.7) -- (0,-.3);
\draw (0,.7) --node[left]{$\scriptstyle X$} (0,.3);
\draw[thick, dotted] (.8,-1) arc(0:180:.8 and .3);
\draw[thick, red, knot] (.8,-.5) arc(0:-180:.8 and .3);
\draw[thick, red, dotted] (.8,-.5) arc(0:180:.8 and .3);
\draw[thick, red, knot] (.8,.3) arc (0:-180:.8 and .3);
\draw[thick, red, dotted] (.8,.3) arc (0:180:.8 and .3);
\node[red] at (.5,.3) {$\scriptstyle \Phi a$};
\node[red] at (.5,-1) {$\scriptstyle \Phi b$};
\roundNbox{fill=white}{(0,0)}{.3}{0}{0}{$f$}
}
=
\frac{1}{D_\cB}
\sum_{c\in\Irr(\cB)}
d_c\cdot
\tikzmath{
\draw[thick] (.8,-1) -- (.8,1);
\draw[thick] (-.8,-1) -- (-.8,1);
\draw[thick] (0,1) ellipse (.8 and .3);
\draw[thick] (.8,-1) arc(0:-180:.8 and .3);
\draw[thick, dotted] (.8,-1) arc(0:180:.8 and .3);
\draw (0,-1.3) --node[left]{$\scriptstyle X$} (0,-.7) -- (0,-.3);
\draw (0,.7) --node[left]{$\scriptstyle X$} (0,.3);
\draw[thick, red, knot] (.8,-.5) arc(0:-180:.8 and .3);
\draw[thick, red, dotted] (.8,-.5) arc(0:180:.8 and .3);
\draw[thick, red, knot] (.3,0) arc(90:-90:.25cm) -- (-.3,-.5) arc(270:90:.25cm);
\node[red] at (.5,.2) {$\scriptstyle \Phi a$};
\node[red] at (.5,-1) {$\scriptstyle \Phi c$};
\roundNbox{fill=white}{(0,0)}{.3}{0}{0}{$f$}
}
\,.
$$
That is, under the isomorphism $\Tube(\cX)\cong \End_{Z(\cX)}(\Tr(X))$ from \eqref{eq:IsoTube(X)->End(Tr(X))},
$fp^{\textcolor{red}{\cB}}=0$ for all $f\in \cI$ and $1-p^{\textcolor{red}{\cB}}$ acts as the identity on $\Dome^{\textcolor{red}{\cB}}(\cX)$.
Again, the same argument for the Levin-Wen plaquette operator shows that for every $a\in \cB$ and $x\in\cX$,  $\Tr^{\textcolor{red}{\cB}}(x)\in Z(\cX)$ centralizes $a\in\cB$:
$$
\frac{1}{D_\cB}
\sum_{
b\in\Irr(\cB)
}
d_b\cdot
\tikzmath{
\draw[thick, red, snake] (-.8,-.8) -- (-.6,-.6);
\draw[thick, red, snake] (1.2,1.2) --node[above]{$\scriptstyle a$} (.6,.6);
\draw[thick] (.5,-.6) -- (.5,1);
\draw[thick] (-.5,-.6) -- (-.5,1);
\draw (0,-.8) --node[right]{$\scriptstyle x$} (0,.4) -- (0,.8);
\draw[thick, red, knot] (.5,.2) arc (0:-180:.5 and .2);
\draw[thick, red, dotted] (.5,.2) arc (0:180:.5 and .2);
\node[red] at (-.25,.2) {$\scriptstyle \Phi b$};
\draw[thick] (0,1) ellipse (.5 and .2);
\draw[thick] (.5,-.6) arc(0:-180:.5 and .2);
\draw[thick,dotted] (.5,-.6) arc(0:180:.5 and .2);
}
=
\frac{1}{D_\cB}
\sum_{
b\in\Irr(\cB)
}
d_b\cdot
\tikzmath{
\draw[thick] (.5,-.6) -- (.5,1);
\draw[thick] (-.5,-.6) -- (-.5,1);
\draw (0,-.8) --node[right]{$\scriptstyle x$} (0,.4) -- (0,.8);
\draw[thick, red, knot] (.5,.2) arc (0:-180:.5 and .2);
\draw[thick, red, dotted] (.5,.2) arc (0:180:.5 and .2);
\node[red] at (-.25,.2) {$\scriptstyle \Phi b$};
\draw[thick] (0,1) ellipse (.5 and .2);
\draw[thick] (.5,-.6) arc(0:-180:.5 and .2);
\draw[thick,dotted] (.5,-.6) arc(0:180:.5 and .2);
\draw[thick, red, snake, knot] (1.2,1.2) --node[above]{$\scriptstyle a$} (.6,.6) -- (-.8,-.8);
}
.
$$
It follows that
$
\Dome^{\textcolor{red}{\cB}}(\cX)\cong\End_{Z(\cX)}(\Tr^{\textcolor{red}{\cB}}(X))
$
as $\rmC^*$-algebras.

%%%%%%%%%%%%%%%%%%%%%%%%%%%%%%%%%%%%%%%%%%%%%%%%%%%%
\subsection{Defects for enriched string nets and twist defects}

We now consider a $\cB$-\emph{enriched extension} $\cX\subset \cY$ in the sense of \cite{MR4498161}.
That is, $\cX\subset\cY$ is a full fusion subcategory, both $\cX,\cY$ equipped with central functors from $\cB$, and there is an action coherence morphism satisfying \eqref{eq:EnrichedExtension}.

We introduce a 1D $\cY$-defect line on the 2D boundary of our $\cB$-Walker-Wang model, drawn in orange.
$$
\begin{tikzpicture}
	\pgfmathsetmacro{\lattice}{1};	
	\pgfmathsetmacro{\xoffset}{.4};	
	\pgfmathsetmacro{\yoffset}{.2};	
	\pgfmathsetmacro{\extra}{.15};	
	\pgfmathsetmacro{\zextra}{.35};	
	\pgfmathsetmacro{\zoom}{6};	
	\coordinate (z1) at ($ -.6*(\zoom,0) + (0,\lattice) $);
	\coordinate (z2) at ($ (\zoom,\lattice) $);
	\coordinate (aaa) at (0,0);
	\coordinate (baa) at ($ (aaa) + (\lattice,0) $);
	\coordinate (caa) at ($ (aaa) + 2*(\lattice,0) $);
	\coordinate (aba) at ($ (aaa) + (0,\lattice) $);
	\coordinate (bba) at ($ (aaa) + (\lattice,0) + (0,\lattice) $);
	\coordinate (cba) at ($ (aaa) + 2*(\lattice,0) + (0,\lattice) $);
	\coordinate (aca) at ($ (aaa) + 2*(0,\lattice) $);
	\coordinate (bca) at ($ (aaa) + (\lattice,0) + 2*(0,\lattice) $);
	\coordinate (cca) at ($ (aaa) + 2*(\lattice,0) + 2*(0,\lattice) $);
	\coordinate (aab) at ($ (aaa) + (\xoffset,\yoffset) $);
	\coordinate (bab) at ($ (aaa) + (\lattice,0) + (\xoffset,\yoffset) $);
	\coordinate (cab) at ($ (aaa) + 2*(\lattice,0) + (\xoffset,\yoffset) $);
	\coordinate (abb) at ($ (aaa) + (0,\lattice) + (\xoffset,\yoffset) $);
	\coordinate (bbb) at ($ (aaa) + (\lattice,0) + (0,\lattice) + (\xoffset,\yoffset) $);
	\coordinate (cbb) at ($ (aaa) + 2*(\lattice,0) + (0,\lattice) + (\xoffset,\yoffset) $);
	\coordinate (acb) at ($ (aaa) + 2*(0,\lattice) + (\xoffset,\yoffset) $);
	\coordinate (bcb) at ($ (aaa) + (\lattice,0) + 2*(0,\lattice) + (\xoffset,\yoffset) $);
	\coordinate (ccb) at ($ (aaa) + 2*(\lattice,0) + 2*(0,\lattice) + (\xoffset,\yoffset) $);
	\coordinate (aac) at ($ (aaa) + 2*(\xoffset,\yoffset) $);
	\coordinate (bac) at ($ (aaa) + (\lattice,0) + 2*(\xoffset,\yoffset) $);
	\coordinate (cac) at ($ (aaa) + 2*(\lattice,0) + 2*(\xoffset,\yoffset) $);
	\coordinate (abc) at ($ (aaa) + (0,\lattice) + 2*(\xoffset,\yoffset) $);
	\coordinate (bbc) at ($ (aaa) + (\lattice,0) + (0,\lattice) + 2*(\xoffset,\yoffset) $);
	\coordinate (cbc) at ($ (aaa) + 2*(\lattice,0) + (0,\lattice) + 2*(\xoffset,\yoffset) $);
	\coordinate (acc) at ($ (aaa) + 2*(0,\lattice) + 2*(\xoffset,\yoffset) $);
	\coordinate (bcc) at ($ (aaa) + (\lattice,0) + 2*(0,\lattice) + 2*(\xoffset,\yoffset) $);
	\coordinate (ccc) at ($ (aaa) + 2*(\lattice,0) + 2*(0,\lattice) + 2*(\xoffset,\yoffset) $);
	\draw[thick, red] ($ (aac) - \extra*(\lattice,0) $) -- (cac);
	\draw[thick, red] ($ (abc) - \extra*(\lattice,0) $) -- (cbc);
	\draw[thick, red] ($ (acc) - \extra*(\lattice,0) $) -- (ccc);
	\draw[thick, red] ($ (aac) - \extra*(0,\lattice) $) -- ($ (acc) + \extra*(0,\lattice) $);
	\draw[thick, red] ($ (bac) - \extra*(0,\lattice) $) -- ($ (bcc) + \extra*(0,\lattice) $);
	\draw[thick] ($ (cac) - \extra*(0,\lattice) $) -- ($ (ccc) + \extra*(0,\lattice) $);
	\draw[thick, red, knot] ($ (aab) - \extra*(\lattice,0) $) -- (cab);
	\draw[thick, red, knot] ($ (abb) - \extra*(\lattice,0) $) -- (cbb);
	\draw[thick, red, knot] ($ (acb) - \extra*(\lattice,0) $) -- (ccb);
	\draw[thick, red, knot] ($ (aab) - \extra*(0,\lattice) $) -- ($ (acb) + \extra*(0,\lattice) $);
	\draw[thick, red, knot] ($ (bab) - \extra*(0,\lattice) $) -- ($ (bcb) + \extra*(0,\lattice) $);
	\draw[very thick, white] ($ (cab) - \extra*(0,\lattice) $) -- ($ (ccb) + \extra*(0,\lattice) $);
	\draw[thick] ($ (cab) - \extra*(0,\lattice) $) -- (cbb);
	\draw[thick, orange] (cbb) -- ($ (ccb) + \extra*(0,\lattice) $);
	\draw[thick, red, knot] ($ (aaa) - \extra*(\lattice,0) $) -- (caa);
	\draw[thick, red, knot] ($ (aba) - \extra*(\lattice,0) $) -- (cba);
	\draw[thick, red, knot] ($ (aca) - \extra*(\lattice,0) $) -- (cca);
	\draw[thick, red, knot] ($ (aaa) - \extra*(0,\lattice) $) -- ($ (aca) + \extra*(0,\lattice) $);
	\draw[thick, red, knot] ($ (baa) - \extra*(0,\lattice) $) -- ($ (bca) + \extra*(0,\lattice) $);
	\draw[very thick, white] ($ (caa) - \extra*(0,\lattice) $) -- ($ (cca) + \extra*(0,\lattice) $);
	\draw[thick] ($ (caa) - \extra*(0,\lattice) $) -- ($ (cca) + \extra*(0,\lattice) $);
	\draw[thick, red] ($ (aaa) - \zextra*(\xoffset,\yoffset) $) -- ($ (aac) + \zextra*(\xoffset,\yoffset) $);
	\draw[thick, red] ($ (aba) - \zextra*(\xoffset,\yoffset) $) -- ($ (abc) + \zextra*(\xoffset,\yoffset) $);
	\draw[thick, red] ($ (aca) - \zextra*(\xoffset,\yoffset) $) -- ($ (acc) + \zextra*(\xoffset,\yoffset) $);
	\draw[thick, red] ($ (baa) - \zextra*(\xoffset,\yoffset) $) -- ($ (bac) + \zextra*(\xoffset,\yoffset) $);
	\draw[thick, red] ($ (bba) - \zextra*(\xoffset,\yoffset) $) -- ($ (bbc) + \zextra*(\xoffset,\yoffset) $);
	\draw[thick, red] ($ (bca) - \zextra*(\xoffset,\yoffset) $) -- ($ (bcc) + \zextra*(\xoffset,\yoffset) $);
	\draw[thick] ($ (caa) - \zextra*(\xoffset,\yoffset) $) -- ($ (cac) + \zextra*(\xoffset,\yoffset) $);
	\draw[thick] ($ (cba) - \zextra*(\xoffset,\yoffset) $) -- ($ (cbc) + \zextra*(\xoffset,\yoffset) $);
	\draw[thick] ($ (cca) - \zextra*(\xoffset,\yoffset) $) -- ($ (ccc) + \zextra*(\xoffset,\yoffset) $);
	\filldraw[red] (aaa) circle (.05cm);
	\filldraw[red] (aba) circle (.05cm);
	\filldraw[red] (aca) circle (.05cm);
	\filldraw[red] (aab) circle (.05cm);
	\filldraw[red] (abb) circle (.05cm);
	\filldraw[red] (acb) circle (.05cm);
	\filldraw[red] (aac) circle (.05cm);
	\filldraw[red] (abc) circle (.05cm);
	\filldraw[red] (acc) circle (.05cm);
	\filldraw[red] (baa) circle (.05cm);
	\filldraw[red] (bab) circle (.05cm);
	\filldraw[red] (bac) circle (.05cm);
	\filldraw[red] (bba) circle (.05cm);
	\filldraw[red] (bbb) circle (.05cm);
	\filldraw[red] (bbc) circle (.05cm);
	\filldraw[red] (bca) circle (.05cm);
	\filldraw[red] (bcb) circle (.05cm);
	\filldraw[red] (bcc) circle (.05cm);
	\filldraw (caa) circle (.05cm);
	\filldraw (cba) circle (.05cm);
	\filldraw (cca) circle (.05cm);
	\filldraw (cab) circle (.05cm);
	\filldraw[orange] (cbb) circle (.05cm);
	\filldraw[orange] (ccb) circle (.05cm);
	\filldraw (cac) circle (.05cm);
	\filldraw (cbc) circle (.05cm);
	\filldraw (ccc) circle (.05cm);
	\draw[blue!50, very thin] (cbb) circle (.15cm);
		\foreach \x/\y/\s in {155/100/24, 175/120/23, 185/140/22, 200/160/23, 220/180/25, 240/200/26} 
		{\draw[dotted, blue!50] ($(cbb)+(\x:\extra)$) to[bend left=\s] ($(z2) + (\y:\lattice)$);}
		\foreach \x/\y/\s in {135/70/30,290/225/28}
		{\draw[blue!50, very thin] ($ (cbb) + (\x:\extra) $) to[bend left=\s] ($ (z2) + (\y:\lattice) $);}
			\draw[blue!50, very thin] (z2) circle (\lattice);
			\draw[thick, red] ($ (z2) - .75*(\lattice,0) $) node[above] {\scriptsize{$B$}} -- (z2);
			\draw[thick, orange] (z2) -- ($ (z2) + .75*(0,\lattice) $) node[left] {\scriptsize{$Y$}};
			\draw ($ (z2) - .75*(0,\lattice) $) node[right] {\scriptsize{$X$}} -- (z2);	\draw ($ (z2) - 1.25*(\xoffset,\yoffset) $) node[below] {\scriptsize{$X$}} -- ($ (z2) + 1.25*(\xoffset,\yoffset) $) node[above] {\scriptsize{$X$}};
			\filldraw[orange] (z2) circle (.05cm);
			\node at ($ (z2) -1.3*(0,\lattice) $) {$\cY(\Phi(\textcolor{red}{B})XX \to \textcolor{orange}{Y} X)$};
			\draw[blue!50, very thin] ($ (z2) + (135:\lattice) $) -- ($ (z2) +(-45:\lattice) $);
			\foreach \x in {135,-45}
			{\draw[blue!50, very thin, -stealth] ($ (z2) + (\x:.5*\lattice) - (.1,.1)$) to ($ (z2) + (\x:.5*\lattice) + (.1,.1)$);}
\end{tikzpicture}
$$
Similar to the unenriched case, we change the local Hilbert space along the $\cY$-defect line, which again carry skein module inner products.
By \cite{2305.14068}, the topological boundary excitations which live at the vertex correspond to the \emph{enriched relative center}
$$
\Hom_{\cX-\cX}^{\textcolor{red}{\cB}}(\cX\to \cY)\cong Z^{\textcolor{red}{\cB}}_\cX(\cY).
$$
Similar to the $\Dome^{\textcolor{red}{\cB}}(\cX)$-action to distinguish topological boundary excitations in $Z^{\textcolor{red}{\cB}}(\cX)$, in the presence of the $\cY$-defect, we have an action of the \emph{relative dome algebra} $\Dome^{\textcolor{red}{\cB}}_\cX(\cY)$, which can be defined as either the corner $\Tube_\cX(\cY)p^{\textcolor{red}{\cB}}_Y$ (using that $p^{\textcolor{red}{\cB}}_Y$ is central depicted on the left hand side below)
of the relative tube algebra $\Tube_\cX(\cY)$,
or equivalently as the quotient by a certain ideal $\cI$ depicted on the right hand side below
(recall Warning \ref{warn:Perspective}!).
$$
p^{\textcolor{red}{\cB}}_Y:=
\frac{1}{D_\cB}
\sum_{
b\in\Irr(\cB)
}
d_b\cdot
\tikzmath{
\draw[thick] (.5,-.6) -- (.5,1);
\draw[thick] (-.5,-.6) -- (-.5,1);
\draw[thick, orange] (0,-.8) --node[left]{$\scriptstyle Y$} (0,.4) -- (0,.8);
\draw[thick, red, knot] (.5,.2) arc (0:-180:.5 and .2);
\draw[thick, red, dotted] (.5,.2) arc (0:180:.5 and .2);
\node[red] at (.2,.2) {$\scriptstyle \Phi b$};
\draw[thick] (0,1) ellipse (.5 and .2);
\draw[thick] (.5,-.6) arc(0:-180:.5 and .2);
\draw[thick,dotted] (.5,-.6) arc(0:180:.5 and .2);
}
\qquad\qquad\qquad
\cI_\cX=
\left\langle
\tikzmath{
\draw[thick] (.5,-.6) -- (.5,1);
\draw[thick] (-.5,-.6) -- (-.5,1);
\draw[thick] (0,1) ellipse (.5 and .2);
\draw[thick] (.5,-.6) arc(0:-180:.5 and .2);
\draw[thick,dotted] (.5,-.6) arc(0:180:.5 and .2);
\draw[thick, orange] (0,-.8) --node[left]{$\scriptstyle Y$} (0,-.3);
\draw[thick, orange] (0,.8) --node[left]{$\scriptstyle Y$} (0,.3);
\draw[thick, red] (.5,.2) arc (0:-180:.5 and .2);
\draw[thick, red, dotted] (.5,.2) arc (0:180:.5 and .2);
\roundNbox{fill=white}{(0,0)}{.3}{0}{0}{$f$}
}
-
\tikzmath{
\draw[thick] (.7,-.8) -- (.7,.8);
\draw[thick] (-.7,-.8) -- (-.7,.81);
\draw[thick] (0,.8) ellipse (.7 and .2);
\draw[thick] (.7,-.8) arc(0:-180:.7 and .2);
\draw[thick, dotted] (.7,-.8) arc(0:180:.7 and .2);
\draw[thick, orange] (0,-1) --node[left]{$\scriptstyle Y$} (0,-.6) -- (0,-.25);
\draw[thick, orange] (0,.6) --node[left]{$\scriptstyle Y$} (0,.3);
\draw[thick, red, knot] (.3,0) arc(90:-90:.25cm) -- (-.3,-.5) arc(270:90:.25cm);
\roundNbox{fill=white}{(0,0)}{.3}{0}{0}{$f$}
}
\right\rangle
$$
By \cite[Thm.~3.9]{2305.14068}, 
$\Mod(\Dome^{\textcolor{red}{\cB}}_\cX(\cY))\cong Z^{\textcolor{red}{\cB}}_\cX(\cY)$.

\begin{rem}
The construction of this section can be adapted for any $\cB$-enriched $\cX-\cX$ bimodule category $\cM$, where the twist defects would be given by $Z_\cX^{\textcolor{red}{\cB}}(\cM)\cong\Hom_{\cX-\cX}^{\textcolor{red}{\cB}}(\cX\to \cM)$.
Again as in Remark \ref{rem:ConstructionForBimodule}, we will not have a fusion operation on twist defects without a monoidal structure on $\cM$.
Again, we must require that $\cM$ comes equipped with a unitary trace for the existence of skein module inner products.
$$
\begin{tikzpicture}
	\pgfmathsetmacro{\lattice}{1};	
	\pgfmathsetmacro{\xoffset}{.4};	
	\pgfmathsetmacro{\yoffset}{.2};	
	\pgfmathsetmacro{\extra}{.15};	
	\pgfmathsetmacro{\zextra}{.35};	
	\pgfmathsetmacro{\zoom}{6};	
	\coordinate (z1) at ($ -.6*(\zoom,0) + (0,\lattice) $);
	\coordinate (z2) at ($ (\zoom,\lattice) $);
	\coordinate (aaa) at (0,0);
	\coordinate (baa) at ($ (aaa) + (\lattice,0) $);
	\coordinate (caa) at ($ (aaa) + 2*(\lattice,0) $);
	\coordinate (aba) at ($ (aaa) + (0,\lattice) $);
	\coordinate (bba) at ($ (aaa) + (\lattice,0) + (0,\lattice) $);
	\coordinate (cba) at ($ (aaa) + 2*(\lattice,0) + (0,\lattice) $);
	\coordinate (aca) at ($ (aaa) + 2*(0,\lattice) $);
	\coordinate (bca) at ($ (aaa) + (\lattice,0) + 2*(0,\lattice) $);
	\coordinate (cca) at ($ (aaa) + 2*(\lattice,0) + 2*(0,\lattice) $);
	\coordinate (aab) at ($ (aaa) + (\xoffset,\yoffset) $);
	\coordinate (bab) at ($ (aaa) + (\lattice,0) + (\xoffset,\yoffset) $);
	\coordinate (cab) at ($ (aaa) + 2*(\lattice,0) + (\xoffset,\yoffset) $);
	\coordinate (abb) at ($ (aaa) + (0,\lattice) + (\xoffset,\yoffset) $);
	\coordinate (bbb) at ($ (aaa) + (\lattice,0) + (0,\lattice) + (\xoffset,\yoffset) $);
	\coordinate (cbb) at ($ (aaa) + 2*(\lattice,0) + (0,\lattice) + (\xoffset,\yoffset) $);
	\coordinate (acb) at ($ (aaa) + 2*(0,\lattice) + (\xoffset,\yoffset) $);
	\coordinate (bcb) at ($ (aaa) + (\lattice,0) + 2*(0,\lattice) + (\xoffset,\yoffset) $);
	\coordinate (ccb) at ($ (aaa) + 2*(\lattice,0) + 2*(0,\lattice) + (\xoffset,\yoffset) $);
	\coordinate (aac) at ($ (aaa) + 2*(\xoffset,\yoffset) $);
	\coordinate (bac) at ($ (aaa) + (\lattice,0) + 2*(\xoffset,\yoffset) $);
	\coordinate (cac) at ($ (aaa) + 2*(\lattice,0) + 2*(\xoffset,\yoffset) $);
	\coordinate (abc) at ($ (aaa) + (0,\lattice) + 2*(\xoffset,\yoffset) $);
	\coordinate (bbc) at ($ (aaa) + (\lattice,0) + (0,\lattice) + 2*(\xoffset,\yoffset) $);
	\coordinate (cbc) at ($ (aaa) + 2*(\lattice,0) + (0,\lattice) + 2*(\xoffset,\yoffset) $);
	\coordinate (acc) at ($ (aaa) + 2*(0,\lattice) + 2*(\xoffset,\yoffset) $);
	\coordinate (bcc) at ($ (aaa) + (\lattice,0) + 2*(0,\lattice) + 2*(\xoffset,\yoffset) $);
	\coordinate (ccc) at ($ (aaa) + 2*(\lattice,0) + 2*(0,\lattice) + 2*(\xoffset,\yoffset) $);
	\draw[thick, red] ($ (aac) - \extra*(\lattice,0) $) -- (cac);
	\draw[thick, red] ($ (abc) - \extra*(\lattice,0) $) -- (cbc);
	\draw[thick, red] ($ (acc) - \extra*(\lattice,0) $) -- (ccc);
	\draw[thick, red] ($ (aac) - \extra*(0,\lattice) $) -- ($ (acc) + \extra*(0,\lattice) $);
	\draw[thick, red] ($ (bac) - \extra*(0,\lattice) $) -- ($ (bcc) + \extra*(0,\lattice) $);
	\draw[thick] ($ (cac) - \extra*(0,\lattice) $) -- ($ (ccc) + \extra*(0,\lattice) $);
	\draw[thick, red, knot] ($ (aab) - \extra*(\lattice,0) $) -- (cab);
	\draw[thick, red, knot] ($ (abb) - \extra*(\lattice,0) $) -- (cbb);
	\draw[thick, red, knot] ($ (acb) - \extra*(\lattice,0) $) -- (ccb);
	\draw[thick, red, knot] ($ (aab) - \extra*(0,\lattice) $) -- ($ (acb) + \extra*(0,\lattice) $);
	\draw[thick, red, knot] ($ (bab) - \extra*(0,\lattice) $) -- ($ (bcb) + \extra*(0,\lattice) $);
%	\draw[very thick, white] ($ (cab) - \extra*(0,\lattice) $) -- ($ (ccb) + \extra*(0,\lattice) $);
	\draw[thick] ($ (cab) - \extra*(0,\lattice) $) -- (cab);
	\draw[thick, orange] ($ (cbb) - \zextra*(0,\lattice) $) -- ($ (ccb) + \extra*(0,\lattice) $);
	\draw[thick, red, knot] ($ (aaa) - \extra*(\lattice,0) $) -- (caa);
	\draw[thick, red, knot] ($ (aba) - \extra*(\lattice,0) $) -- (cba);
	\draw[thick, red, knot] ($ (aca) - \extra*(\lattice,0) $) -- (cca);
	\draw[thick, red, knot] ($ (aaa) - \extra*(0,\lattice) $) -- ($ (aca) + \extra*(0,\lattice) $);
	\draw[thick, red, knot] ($ (baa) - \extra*(0,\lattice) $) -- ($ (bca) + \extra*(0,\lattice) $);
	\draw[very thick, white] ($ (caa) - \extra*(0,\lattice) $) -- ($ (cca) + \extra*(0,\lattice) $);
	\draw[thick] ($ (caa) - \extra*(0,\lattice) $) -- ($ (cca) + \extra*(0,\lattice) $);
	\draw[thick, red] ($ (aaa) - \zextra*(\xoffset,\yoffset) $) -- ($ (aac) + \zextra*(\xoffset,\yoffset) $);
	\draw[thick, red] ($ (aba) - \zextra*(\xoffset,\yoffset) $) -- ($ (abc) + \zextra*(\xoffset,\yoffset) $);
	\draw[thick, red] ($ (aca) - \zextra*(\xoffset,\yoffset) $) -- ($ (acc) + \zextra*(\xoffset,\yoffset) $);
	\draw[thick, red] ($ (baa) - \zextra*(\xoffset,\yoffset) $) -- ($ (bac) + \zextra*(\xoffset,\yoffset) $);
	\draw[thick, red] ($ (bba) - \zextra*(\xoffset,\yoffset) $) -- ($ (bbc) + \zextra*(\xoffset,\yoffset) $);
	\draw[thick, red] ($ (bca) - \zextra*(\xoffset,\yoffset) $) -- ($ (bcc) + \zextra*(\xoffset,\yoffset) $);
	\draw[thick] ($ (caa) - \zextra*(\xoffset,\yoffset) $) -- ($ (cac) + \zextra*(\xoffset,\yoffset) $);
	\draw[thick] ($ (cba) - \zextra*(\xoffset,\yoffset) $) -- ($ (cbc) + \zextra*(\xoffset,\yoffset) $);
	\draw[thick] ($ (cca) - \zextra*(\xoffset,\yoffset) $) -- ($ (ccc) + \zextra*(\xoffset,\yoffset) $);
	\filldraw[red] (aaa) circle (.05cm);
	\filldraw[red] (aba) circle (.05cm);
	\filldraw[red] (aca) circle (.05cm);
	\filldraw[red] (aab) circle (.05cm);
	\filldraw[red] (abb) circle (.05cm);
	\filldraw[red] (acb) circle (.05cm);
	\filldraw[red] (aac) circle (.05cm);
	\filldraw[red] (abc) circle (.05cm);
	\filldraw[red] (acc) circle (.05cm);
	\filldraw[red] (baa) circle (.05cm);
	\filldraw[red] (bab) circle (.05cm);
	\filldraw[red] (bac) circle (.05cm);
	\filldraw[red] (bba) circle (.05cm);
	\filldraw[red] (bbb) circle (.05cm);
	\filldraw[red] (bbc) circle (.05cm);
	\filldraw[red] (bca) circle (.05cm);
	\filldraw[red] (bcb) circle (.05cm);
	\filldraw[red] (bcc) circle (.05cm);
	\filldraw (caa) circle (.05cm);
	\filldraw (cba) circle (.05cm);
	\filldraw (cca) circle (.05cm);
	\filldraw (cab) circle (.05cm);
	\filldraw[orange] (cbb) circle (.05cm);
	\filldraw[orange] (ccb) circle (.05cm);
	\filldraw (cac) circle (.05cm);
	\filldraw (cbc) circle (.05cm);
	\filldraw (ccc) circle (.05cm);
	\draw[blue!50, very thin] (cbb) circle (.15cm);
		\foreach \x/\y/\s in {155/100/24, 175/120/23, 185/140/22, 200/160/23, 220/180/25, 240/200/26} 
		{\draw[dotted, blue!50] ($(cbb)+(\x:\extra)$) to[bend left=\s] ($(z2) + (\y:\lattice)$);}
		\foreach \x/\y/\s in {135/70/30,290/225/28}
		{\draw[blue!50, very thin] ($ (cbb) + (\x:\extra) $) to[bend left=\s] ($ (z2) + (\y:\lattice) $);}
			\draw[blue!50, very thin] (z2) circle (\lattice);
			\draw[thick, red] ($ (z2) - .75*(\lattice,0) $) node[above] {\scriptsize{$B$}} -- (z2);
			\draw[thick, orange] (z2) -- ($ (z2) + .75*(0,\lattice) $) node[left] {\scriptsize{$M$}};
			\draw[thick, orange] ($ (z2) - .75*(0,\lattice) $) node[right] {\scriptsize{$M$}} -- (z2);	
            \draw ($ (z2) - 1.25*(\xoffset,\yoffset) $) node[below] {\scriptsize{$X$}} -- ($ (z2) + 1.25*(\xoffset,\yoffset) $) node[above] {\scriptsize{$X$}};
			\filldraw[orange] (z2) circle (.05cm);
			\node at ($ (z2) -1.3*(0,\lattice) $) {$\cM(\Phi(\textcolor{red}{B})X\rhd \textcolor{orange}{M} \to \textcolor{orange}{M} \lhd X)$};
			\draw[blue!50, very thin] ($ (z2) + (135:\lattice) $) -- ($ (z2) +(-45:\lattice) $);
			\foreach \x in {135,-45}
			{\draw[blue!50, very thin, -stealth] ($ (z2) + (\x:.5*\lattice) - (.1,.1)$) to ($ (z2) + (\x:.5*\lattice) + (.1,.1)$);}
\end{tikzpicture}
$$
For general $\cM$, there is no obvious canonical choice of Hilbert space we can put at the end of the defect line, so we always have a boundary point excitation at the defect site.
\end{rem}

\begin{rem}
As in \S\ref{sec:FusionOfTwistDefects} above, we can also describe the fusion of enriched twist defects using an enriched version of the Day convolution product.
Since the functor
$\Tr^{\textcolor{red}{\cB}}_\cX: \cY\to Z^{\textcolor{red}{\cB}}_\cX(\cY)$
is unitary adjoint to the monoidal forgetful functor 
$
Z^{\textcolor{red}{\cB}}_\cX(\cY) \to Z_\cX(\cY) \to \cY,
$
it has a canonical Frobenius structure, and so we get a similar formula for the Day convolution product.
On the enriched lattice, this can be implemented as follows.
$$
\tikzmath{
\draw[thick, red, knot, xshift=-.2cm, yshift=.2cm] (.7,.7) grid (7.3,4.3);
\foreach \x in {1,2,...,7}{
\foreach \y in {1,2,3,4}{
\draw[thick, red] (\x,\y) -- ($ (\x,\y) + (-.4,.4) $);
}}
\foreach \x in {.8,1.8,...,6.8}{
\foreach \y in {1.2,2.2,3.2,4.2}{
\fill[red] (\x,\y) circle (.05cm);
}}
\draw[thick, knot] (2,3) -- (2,4.3);
\draw[thick, knot] (2,.7) -- (2,1);
\draw[thick, knot] (4,.7) -- (4,3);
\draw[thick, knot] (6,3) -- (6,4.3);
\draw[thick, knot] (6,.7) -- (6,1);
\draw[thick, knot, xscale=2, xshift=-.5cm] (.85,.7) grid (4.15,4.3);
\foreach \x in {1,2,...,7}{
\foreach \y in {1,2,3,4}{
\fill (\x,\y) circle (.05cm);
}}
\draw[thick, white, knot] (2,2) -- (2,3);
\draw[thick, white, knot] (2,3) -- (6,3);
\draw[thick, white, knot] (4,3) -- (4,4.3);
\draw[thick, white, knot] (6,2) -- (6,3);
\fill[fill=blue!30, opacity=.5] (1,1)  -- (.8,1.2) -- (.8,4.2) -- (6.8,4.2) -- (7,4) -- (7,1) -- (1,1);
\fill[orange] (2,1.85) circle (.05cm);
\fill[orange] (2,2) circle (.05cm);
\fill[orange] (2,3) circle (.05cm);
\fill[orange] (3,3) circle (.05cm);
\fill[orange] (4,3) circle (.05cm);
\fill[orange] (4,4) circle (.05cm);
\fill[orange] (5,3) circle (.05cm);
\fill[orange] (6,1.85) circle (.05cm);
\fill[orange] (6,2) circle (.05cm);
\fill[orange] (6,3) circle (.05cm);
\draw[thick, orange] (2,2) -- (2,3);
\draw[thick,orange] (2,2) -- (2,1.7);
\draw[thick, orange] (2,3) -- (6,3);
\draw[thick, orange] (4,3) -- (4,4.3);
\draw[thick, orange] (6,2) -- (6,3);
\draw[thick, orange] (6,2) -- (6,1.7);
\draw[rounded corners=5pt, thick, cyan] (1.1,1.1) rectangle (2.9,1.85);
\fill[fill=blue!60, opacity=.5, rounded corners=5pt] (1.1,1.1) rectangle (2.9,1.85);
\draw[rounded corners=5pt, thick, cyan] (5.1,1.1) rectangle (6.9,1.85);
\fill[fill=blue!60, opacity=.5, rounded corners=5pt] (5.1,1.1) rectangle (6.9,1.85);
}
\quad
\rightsquigarrow
\quad
\tikzmath{
\draw[thick, red, knot, xshift=-.2cm, yshift=.2cm] (.7,.7) grid (7.3,4.3);
\foreach \y in {1,2,3,4}{
\draw[thick, red] ($ (1,\y) + (-.2,.2) $) -- (1,\y);
\draw[thick, red] ($ (7,\y) + (-.2,.2) $) -- (7,\y);
\foreach \x in {1,2,...,7}{
\draw[thick, red] ($ (\x,\y) + (-.2,.2) $) -- ($ (\x,\y) + (-.4,.4) $);
}}
\foreach \x in {.8,1.8,...,6.8}{
\foreach \y in {1.2,2.2,3.2,4.2}{
\fill[red] (\x,\y) circle (.05cm);
}}
\foreach \x in {2,3,...,6}{
\draw[thick, red] ($ (\x,1) + (-.2,.2) $) -- (\x,1);
\draw[thick, red] ($ (\x,4) + (-.2,.2) $) -- (\x,4);
}
\draw[thick, knot] (1,.7) -- (1,4.3);
\draw[thick, knot] (7,.7) -- (7,4.3);
\draw[thick, knot] (.7,1) -- (7.3,1);
\draw[thick, knot] (.7,4) -- (7.3,4);
\foreach \x in {1,7}{
\foreach \y in {2,3}{
\draw[thick, knot] (\x,\y) -- ($ (\x,\y) + {\x-4}*(.1,0) $);
}}
\foreach \x in {2,3,...,6}{
\foreach \y in {1,4}{
\draw[thick, knot] (\x,\y) -- ($ (\x,\y) + {\y-2.5}*(0,.2) $);
\fill (\x,\y) circle (.05cm);
}}
\foreach \x in {1,7}{
\foreach \y in {1,2,3,4}{
\fill (\x,\y) circle (.05cm);
}}
\draw[thick, white, knot] (2,2) -- (2,3) -- (6,3) -- (6,2);
\draw[thick, white, knot] (4,3) -- (4,4.3);
\fill[fill=blue!30, opacity=.5] (1,1)  -- (.8,1.2) -- (.8,4.2) -- (6.8,4.2) -- (7,4) -- (7,1) -- (1,1);
\draw[thick, orange] (2,2) -- (2,3) -- (6,3) -- (6,2);
\draw[thick, orange] (2,1.7) -- (2,2);
\draw[thick, orange] (6,2) -- (6,1.7);
\draw[thick, orange] (4,3) -- (4,4.3);
\fill[orange] (2,1.85) circle (.05cm);
\fill[orange] (4,3) circle (.05cm);
\fill[orange] (4,4) circle (.05cm);
\fill[orange] (6,1.85) circle (.05cm);
\draw[rounded corners=5pt, thick, cyan] (1.1,1.1) rectangle (2.9,1.85);
\fill[fill=blue!60, opacity=.5, rounded corners=5pt] (1.1,1.1) rectangle (2.9,1.85);
\draw[rounded corners=5pt, thick, cyan] (5.1,1.1) rectangle (6.9,1.85);
\fill[fill=blue!60, opacity=.5, rounded corners=5pt] (5.1,1.1) rectangle (6.9,1.85);
}
$$
Thinking higher categorically, $\cY \in \Bim^{\textcolor{red}{\cB}}(\cX)$ is a separable rigid $E_1$-algebra which corresponds to the $E_1$-algebra $Z_\cX^{\textcolor{red}{\cB}}(\cY)\in \Mod(Z^{\textcolor{red}{\cB}}(\cX))$ under the equivalence
$\Bim^{\textcolor{red}{\cB}}(\cX)\cong \Mod(Z^{\textcolor{red}{\cB}}(\cX))$ from \cite[Prop.~II.13]{MR4640433}.
\end{rem}

%%%%%%%%%%%%%%%%%%%%%%%%%%%%%%%%%%%%%%%%%%%%%%%%%%%%
\section{Condensation lattice model for string nets and NI-SETOs}
\label{sec:Condensation}

We now show how to implement the above NI-SETO story using anyon condensation \cite{MR3246855}.
We generalize \cite{PhysRevB.94.235136} for the $G$-graded setting in two important ways: passing to non-invertible SETOs, as well as using enriched models to study chiral theories.

The article \cite{MR4642306} gives a lattice model for condensing an \'etale (commutative, separable) connected ($\dim\cY(1\to A)=1$) algebra $A\in Z(\cY)$ in the $\cY$-Levin-Wen model, which was originally suggested by Corey Jones.
We give a rapid review using the skein theoretic approach to the Levin-Wen model from \cite{2305.14068}.
Below, we use the convention that algebras in $Z(\cY)$ come out of the page as in Warning \ref{warn:Perspective}.

The local Hilbert space assigned to a vertex is now
$$
\cH_v
:=
\cY(Y\otimes Y\to F(A)\otimes Y\otimes Y)
=
\left\{\tikzmath{
\draw (-.7,0) -- (-.3,0);
\draw (.7,0) -- (.3,0);
\draw (0,-.7) -- (0,-.3);
\draw (0,.7) -- (0,.3);
\draw[thick, blue] (0,0) -- (-.3,.7);
\draw[thin, cyan] (-.7,.7) -- (.7,-.7);
\draw[->, cyan, thin] (-.6,.4) -- (-.4,.6);
\draw[<-, cyan, thin] (.6,-.4) -- (.4,-.6);
\roundNbox{fill=white}{(0,0)}{.3}{0}{0}{$f$}
}
\right\}
$$
where the blue strand denotes the image of $A\in Z(\cY)$ in $\cY$ under the forgetful functor $F: Z(\cY)\to \cY$.
We endow $\cH_v$ with the skein module inner product, but we only include normalizations for the edges labeled by simple objects in $Y$ as we want convenient formulas for our condensation projectors.
The Hamiltonian again has $A_\ell$ terms gluing along each edge $\ell$, and each $B_p^y$ term for $y\in\Irr(\cY)$ is updated to include the half-braiding for $A\in Z(\cY)$ with $\cY$:
\begin{align}\label{eq:Bxp}
\tikzmath{
\foreach \x in {0,1}{
\foreach \y in {0,1}{
\draw[thick, blue] ($ 1.5*(\x,\y) $) -- ($ 1.5*(\x,\y) + (-.45,.6) $);
\draw ($ (-.6,0) + 1.5*(\x,\y) $) -- ($ (-.3,0) + 1.5*(\x,\y) $);
\draw ($ (.6,0) + 1.5*(\x,\y) $) -- ($ (.3,0) + 1.5*(\x,\y) $);
\draw ($ (0,-.6) + 1.5*(\x,\y) $) -- ($ (0,-.3) + 1.5*(\x,\y) $);
\draw ($ (0,.6) + 1.5*(\x,\y) $) -- ($ (0,.3) + 1.5*(\x,\y) $);
\roundNbox{fill=white}{($ 1.5*(\x,\y)$)}{.3}{0}{0}{$f_{\x\y}$}
}}
}
\longmapsto
\tikzmath{
\draw[thick, cyan, mid<] (.3,.9) arc (180:90:.3cm);
\draw[thick, cyan, far>] (.3,.6) arc (-180:-90:.3cm);
\draw[thick, cyan, mid<] (.9,1.2) arc (90:0:.3cm);
\draw[thick, cyan, far>] (.9,.3) arc (-90:0:.3cm);
\foreach \x in {0,1}{
\foreach \y in {0,1}{
\draw[thick, blue, knot] ($ 1.5*(\x,\y) $) -- ($ 1.5*(\x,\y) + (-.45,.6) $);
\draw ($ (-.6,0) + 1.5*(\x,\y) $) -- ($ (-.3,0) + 1.5*(\x,\y) $);
\draw ($ (.6,0) + 1.5*(\x,\y) $) -- ($ (.3,0) + 1.5*(\x,\y) $);
\draw ($ (0,-.6) + 1.5*(\x,\y) $) -- ($ (0,-.3) + 1.5*(\x,\y) $);
\draw ($ (0,.6) + 1.5*(\x,\y) $) -- ($ (0,.3) + 1.5*(\x,\y) $);
\roundNbox{fill=white}{($ 1.5*(\x,\y)$)}{.3}{0}{0}{$f_{\x\y}$}
}}
}
\end{align}
There are two additional terms for the Hamiltonian that we may impose; a $C_u$ term at each vertex $u$ projects to the copy of $1\subset A$,
\begin{equation}
\label{eq:UnitTerm}
C_u := i\circ i^\dag = 
\tikzmath{
\draw[thick, blue] (0,.2) -- (0,.5);
\draw[thick, blue] (0,-.2) -- (0,-.5);
\filldraw[blue] (0,.2) circle (.05cm);
\filldraw[blue] (0,-.2) circle (.05cm);
}
\qquad\qquad\qquad
\tikzmath{
\draw[thick, blue] (0,.2) -- (0,.5);
\filldraw[blue] (0,.2) circle (.05cm);
}
:=
i:1\to A,
\end{equation}
and the $D_{v,w}$ term for adjacent vertices $v,w$ implements the multiplication and its adjoint:
\begin{equation}
\label{eq:MultiplicationTerm}
D_{v,w}:= m^\dag \circ m = 
\tikzmath{
\draw[thick, blue] (-.3,-.5) arc(180:0:.3cm);
\draw[thick, blue] (-.3,.5) arc(-180:0:.3cm);
\draw[thick, blue] (0,-.2) -- (0,.2);
\filldraw[blue] (0,.2) circle (.05cm);
\filldraw[blue] (0,-.2) circle (.05cm);
}
\qquad\qquad\qquad
\tikzmath{
\draw[thick, blue] (-.3,-.5) arc(180:0:.3cm);
\draw[thick, blue] (0,-.2) -- (0,.1);
\filldraw[blue] (0,-.2) circle (.05cm);
}
:=
m:A\otimes A\to A.
\end{equation}
Each of $C_u, D_{v,w}$ are orthogonal projections which each commute with all the $A_\ell, B_p$ terms and with other $C_{u'}$ and $D_{v',w'}$ respectively (use that $A$ is commutative!), but the $C_u$ and $D_{u,v}$ will not commute when $u\in\{v,w\}$.
The Hamiltonian 
$$
H_{LW} = -\sum A_\ell - \sum B_p - \sum C_u
$$
implements the $\cY$-Levin-Wen Hamiltonian,
and the Hamiltonian 
$$
H_{\rm cond} = -\sum A_\ell - \sum B_p - \sum D_{v,w}
$$
implements the $A$-condensed phase, producing $Z(\cY)_A^{\rm loc}$ topological order.
One can implement the phase transition by tuning a parameter between the $C_u$ and $D_{v,w}$ terms:
$$
H = 
 -\sum A_\ell - \sum B_p - K\left((1-t)\sum C_u + t\sum \sum D_{v,w}\right).
$$
When $K$ is sufficiently large, tuning the parameter $t$ from zero to one forces the system through the condensation phase change from $Z(\cY)$ to $Z(\cY)_A^{\rm loc}$.

%%%%%%%%%%%%%%%%%%%%%%%%%%%%%%%%%%%%%%%%%%%%%%%%%
\subsection{Example: \texorpdfstring{$G$}{G}-graded extensions as condensations}
\label{sec:GGradedAnomalyFree}

When $\cX\subset \cY$ is a $G$-graded extension, $Z(\cY)$ contains a canonical copy of $\Rep(G)$ whose forgetful image in $\cY$ lands in $\langle 1_\cX\rangle\cong \Hilb\subset \cX$ \cite{MR2587410}.
(This was called a $\Rep(G)$-\emph{fibered enrichment} in \cite{MR4498161}.)
We can recover $Z(\cX)$ topological order by condensing the bosons in the condensable algebra
\[
A=\mathbb{C}^G\cong\bigoplus\limits_{\rho\in\Irr(\Rep(G))}\dim(\rho)\cdot\rho.
\]

The string net condensation model above, with degrees of freedom placed on edges rather than vertices, recovers the SETO story from \cite{PhysRevB.94.235136}.
The black edges are labeled by simple objects in $\cY$ and the red links by basis vectors $|g\rangle$ spanning $F(A)\cong\mathbb{C}^G \cdot 1_{\cY}$.
$$
\tikzmath{
\levinHexGrid[thick, blue]{5}{0}{2}{3}{.5}{black}
}
\quad
\mathcal{H}_\ell=\mathbb{C}^{\Irr(\cY)}
=
\bigoplus_{c\in \Irr(\cY)} \bbC |c\rangle, \quad
\mathcal{H}_{{\color{blue}{p}}} = F(A) \cong \mathbb{C}^G \cdot |1_{\cY} \rangle \cong \bigoplus_{g \in G} \mathbb{C} |g\rangle.
$$
The $A_v$ operator in the Hamiltonian is the usual one enforcing the $\cY$ fusion. The plaquette operator is defined as 
\begin{equation}\label{eq:Bpx}
B_p^y=
 \tikzmath{
 \coordinate (center) at ($ .7*(-.333,.333) +.3*(-.5,0)$);
 \coordinate (pointD) at (canvas polar cs:angle=-30,radius=.866*.5);
 \coordinate (pointE) at (canvas polar cs:angle=-30,radius=.288*.5);
  \levinHex[blue, thick]{0}{0}{.5}{black}
  \draw[cyan, thick] (0,0) circle (0.35cm);
  \draw[white, line width=1mm] ($(center)+(pointD)$) -- +($.5*(-.333, .333)$);
  \draw[thick, blue] ($(center)+(pointD)$) -- +($.5*(-.333, .333)$);
  \node[cyan] at (210:.18) {$\scriptstyle{y}$};
 }, \quad
y \in \cY.
\end{equation}
This $B^y_p$ operator uses the half-braiding of the $A$ when the lightblue $y$-link under-crosses the darkblue $A$-leg.
It contains the usual $\Tilde{B}^y_p$ action on the original $Z(\cY)$ string-net, as well as an additional action on the blue leg in state $g \in G$:
if $y$ is in the $h$-graded component of the extension $\cY$,
\[
B^y_p \; |\text{black edges} \rangle |{\color{blue}{g}} \rangle = \Tilde{B}^y_p  \; |\text{black edges} \rangle |{\color{blue}{hg}} \rangle.
\]
The $D_{p,q}$ operator implementing the condensation is defined as follows \cite{MR4642306}:
\begin{equation}
 \label{eq:DTerm}
 D_{p,q}=
 \tikzmath{
 \levinHexGrid[]{0}{0}{1}{2}{1}{black}
 \coordinate (a) at (.93,.15);
 \coordinate (b) at (1.8,.15);
 \coordinate (c) at (2.3,.5);
 \coordinate (m) at (2.1,.7);
 \coordinate (md) at (1.9,.9);
 \coordinate (d) at (1.9,1.2);
 \coordinate (e) at (1.7,.9);
 \coordinate (f) at (1.9,.35);
 \coordinate (g) at (.87,.25);
 \coordinate (h) at (.5,.45);
 \fill[white] (a) circle (.07cm);
 \fill[white] (g) circle (.07cm);
 \draw[thick, blue] (.7,-.5) .. controls +(90:.3) and +(180:.2) .. (a) -- (b) .. controls +(0:.15) and +(-90:.5) .. (m);
 \draw[thick, blue] (c) .. controls +(135:.2) and +(-45:.2) .. (m);
 \draw[thick, blue] (m) -- (md);
 \draw[thick, blue] (md) -- (d);
 \draw[thick, blue] (md) .. controls +(135:.15) and +(90:.15) .. (e) .. controls +(-90:.2) and +(90:.2) .. (f) arc (0:-90:.1cm) -- (g) .. controls +(180:.2) and +(-90:.2) .. (h);
 \fill[blue] (m) circle (.05cm);
 \fill[blue] (md) circle (.05cm);
 \node at (0,0) {$\scriptstyle p$};
 \node at (1.5,.866) {$\scriptstyle q$};
 }
\end{equation}
Because $A \in Z(\cY)$ forgets to $1_\cY \in \cY$, the $D_{p,q}$ operator only acts on the blue legs of the $p$, $q$ plaquettes and the black link between them, not on other black parallel links.
The action of $D_{p,q}$ on two adjacent blue legs in states $g,h\in G$ separated by a black link labeled by $y\in\mathcal{Y}_k$ is then given by
\begin{equation}
 \label{eq:diagonalGroupD}
 D_{p,q}\left|g,y,h\right\rangle=\delta_{kg=h}\left|g,y,h\right\rangle,
\end{equation}
as in \cite{PhysRevB.94.235136}.

%%%%%%%%%%%%%%%%%%%%%%%%%%%%%%%%%%%%%%%%%%%%%%%%%
\subsection{Excitations and the \texorpdfstring{$A$}{A}-tube algebra}
\label{sec:ExcitationsOfATube}

Again, excitations correspond to violations of a single $A_\ell$ term and two adjacent $B_p,B_q$ terms.
Since $A$ is commutative with trivial twist $(\theta_A=\id_A)$, the site hosting an excitation supports an action of the local $A$-\emph{tube algebra} $\Tube_A^{\loc}(\cY)$, whose category of right modules is $Z(\cY)_A^{\rm loc}$.\footnote{The algebra $\Tube_A(\cY)$ and its quotient $\Tube_A^{\loc}(\cY)$ were previously studied in \cite{MR4642306}, where they were instead called $\widetilde{\Tube_A(\cY)}$ and $\Tube_A(\cY)$ respectively. We adjust the notation to better match the representation categories $\Mod(\Tube_A(\cY))\cong Z(\cY)_A$ and $\Mod(\Tube_A^{\loc}(\cY))\cong Z(\cY)_A^{\loc}$, which are both fundamental to the present story.}
\begin{equation}
\label{eq:CondensationExcitation}
\tikzmath{
\foreach \x in {0,1,2}{
\foreach \y in {0,1}{
\draw ($ (-.7,0) + 1.4*(\x,\y) $) -- ($ (-.3,0) + 1.4*(\x,\y) $);
\draw ($ (.7,0) + 1.4*(\x,\y) $) -- ($ (.3,0) + 1.4*(\x,\y) $);
\draw ($ (0,-.7) + 1.4*(\x,\y) $) -- ($ (0,-.3) + 1.4*(\x,\y) $);
\draw ($ (0,.7) + 1.4*(\x,\y) $) -- ($ (0,.3) + 1.4*(\x,\y) $);
}}
\fill[white] ($ 1.4*(.9,.23) $) rectangle ($ 1.4*(1.1,.65) $);
\fill[blue!30, rounded corners=3pt] (.7,.1) -- (.5,.1) -- (.1,.5) -- (.1,.9) -- (.5,1.3) -- (.9,1.3) -- (1.3,.9) -- (1.5,.9) -- (1.9,1.3) -- (2.3,1.3) -- (2.7,.9) -- (2.7,.5) -- (2.3,.1) -- (1.9,.1) -- (1.5,.5) -- (1.3,.5) -- (.9,.1) -- (.7,.1);
\node at (1.4,.7) {\scriptsize{excitation}};
\foreach \x in {0,1,2}{
\foreach \y in {0,1}{
\draw[thick, blue, knot] ($ 1.4*(\x,\y) $) -- ($ 1.4*(\x,\y) + (-.45,.7) $);
\roundNbox{fill=white}{($ 1.4*(\x,\y)$)}{.3}{0}{0}{$f_{\x\y}$}
}}
}
\quad\rightsquigarrow\quad
\tikzmath{
\foreach \x in {0,1,2}{
\foreach \y in {0,1}{
\draw ($ (-.7,0) + 1.4*(\x,\y) $) -- ($ (-.3,0) + 1.4*(\x,\y) $);
\draw ($ (.7,0) + 1.4*(\x,\y) $) -- ($ (.3,0) + 1.4*(\x,\y) $);
\draw ($ (0,-.7) + 1.4*(\x,\y) $) -- ($ (0,-.3) + 1.4*(\x,\y) $);
\draw ($ (0,.7) + 1.4*(\x,\y) $) -- ($ (0,.3) + 1.4*(\x,\y) $);
}}
\fill[white] ($ 1.4*(.9,.23) $) rectangle ($ 1.4*(1.1,.65) $);
\filldraw[thick,cyan,fill=blue!30, rounded corners=3pt] (.7,.1) -- (.5,.1) -- (.1,.5) -- (.1,.9) -- (.5,1.3) -- (.9,1.3) -- (1.3,.9) -- (1.5,.9) -- (1.9,1.3) -- (2.3,1.3) -- (2.7,.9) -- (2.7,.5) -- (2.3,.1) -- (1.9,.1) -- (1.5,.5) -- (1.3,.5) -- (.9,.1) -- (.7,.1);
\draw (1.4,.9) -- (1.4,.7);
\foreach \x in {0,1,2}{
\foreach \y in {0,1}{
\draw[thick, blue, knot] ($ 1.4*(\x,\y) $) -- ($ 1.4*(\x,\y) + (-.45,.7) $);
\roundNbox{fill=white}{($ 1.4*(\x,\y)$)}{.3}{0}{0}{$f_{\x\y}$}
}}
\draw[thick, blue, knot] (1.4,.9) to[out=135,in=-90] ($ (1.4,1.4) + (-.45,.7) $);
\fill[blue] (1.4,.9) circle (.05cm);
\fill[blue] ($ (1.4,1.4) + (-.45,.7) $) circle (.05cm);
\draw[thick, blue] ($ (1.4,1.4) + (-.45,.7) $) -- ($ (1.4,1.4) + (-.45,1) $);
}
\end{equation}
Formally, we define $\Tube_A^{\loc}(\cY)$ as a quotient of
$$
\Tube_A(\cY):=
\bigoplus_{x\in\Irr(\cY)}
\cY(x\otimes Y\to F(A)\otimes Y\otimes x)
=
\operatorname{span}\set{
\tikzmath{
\draw[thick, blue] (0,0) -- (-.45,.7);
\draw (0,.3) --node[right]{$\scriptstyle Y$} (0,.7);
\draw (0,-.3) --node[right]{$\scriptstyle Y$} (0,-.7);
\draw (.3,0) --node[above]{$\scriptstyle x$} (.7,0);
\draw (-.3,0) --node[above]{$\scriptstyle x$} (-.7,0);
\roundNbox{fill=white}{(0,0)}{.3}{0}{0}{$f$}
} \;
}
{x\in\Irr(\cY)},
$$
which is a finite dimensional $\rmC^*$-algebra with multiplication and adjoint given by
\begin{align}\label{eq:condensationmultiplication}
\tikzmath{
\draw[thick, blue] (0,0) -- (-.45,.7);
\draw (0,.3) --node[right]{$\scriptstyle Y$} (0,.7);
\draw (0,-.3) --node[right]{$\scriptstyle Y$} (0,-.7);
\draw (.3,0) --node[above]{$\scriptstyle x$} (.7,0);
\draw (-.3,0) --node[above]{$\scriptstyle x$} (-.7,0);
\roundNbox{fill=white}{(0,0)}{.3}{0}{0}{$f$}
}
\cdot
\tikzmath{
\draw[thick, blue] (0,0) -- (-.45,.7);
\draw (0,.3) --node[right]{$\scriptstyle Y$} (0,.7);
\draw (0,-.3) --node[right]{$\scriptstyle Y$} (0,-.7);
\draw (.3,0) --node[above]{$\scriptstyle y$} (.7,0);
\draw (-.3,0) --node[above]{$\scriptstyle y$} (-.7,0);
\roundNbox{fill=white}{(0,0)}{.3}{0}{0}{$g$}
}
:=
\sum_{z\in\Irr(\cY)}
\tikzmath{
\draw[thick, blue] (0,1) -- (-.45,1.7);
\draw (0,1.3) --node[right]{$\scriptstyle Y$} (0,1.7);
\draw (0,.3) --node[right]{$\scriptstyle Y$} (0,.7);
\draw (0,-.3) --node[right]{$\scriptstyle Y$} (0,-.7);
\draw (.3,0) to[out=0,in=-135] node[right]{$\scriptstyle y$} (.7,.5);
\draw (-.3,0) to[out=180,in=-45] node[left]{$\scriptstyle y$} (-.7,.5) ;
\draw (.3,1)  to[out=0,in=135] node[right]{$\scriptstyle x$} (.7,.5) -- (1,.5) node[right]{$\scriptstyle z$};
\draw (-.3,1) to[out=180,in=45] node[left]{$\scriptstyle x$} (-.7,.5) -- (-1,.5) node[left]{$\scriptstyle z$};
\roundNbox{fill=white}{(0,1)}{.3}{0}{0}{$f$}
\draw[thick, blue, knot] (0,0) to[out=135,in=-90] (-.45,1.7);
\roundNbox{fill=white}{(0,0)}{.3}{0}{0}{$g$}
\filldraw[blue] (-.45,1.7) circle (.05cm);
\draw[thick, blue] (-.45,1.7) -- (-.45,2);
\filldraw[fill=yellow] (-.7,.5) circle (.05cm);
\filldraw[fill=yellow] (.7,.5) circle (.05cm);
}
\qquad\qquad
\left(\hspace*{-.1cm}
\tikzmath{
\draw[thick, blue] (-.4,.2) to[out=135,in=-90] node[left]{$\scriptstyle A$} (-.7,.7);
\draw (-.2,.3) --node[left,xshift=.1cm]{$\scriptstyle Y$} (-.2,.7);
\draw (.2,.3) --node[right,xshift=-.1cm]{$\scriptstyle x$} (.2,.7);
\draw (-.2,-.3) --node[left,xshift=.1cm]{$\scriptstyle x$} (-.2,-.7);
\draw (.2,-.3) --node[right,xshift=-.1cm]{$\scriptstyle Y$} (.2,-.7);
\roundNbox{fill=white}{(0,0)}{.3}{.1}{.1}{$f$}
}
\right)^\dag
:=
\tikzmath{
\draw[thick, blue] (-.4,-.2)  -- (-.5,-.3);
\draw (-.2,.3) arc(0:180:.2cm) -- (-.6,-.3) --node[left,xshift=.1cm]{$\scriptstyle \overline{x}$} (-.6,-.7);
\draw (.2,.3) --node[right,xshift=-.1cm]{$\scriptstyle Y$} (.2,.7);
\draw (-.2,-.3) --node[left,xshift=.1cm]{$\scriptstyle Y$} (-.2,-.7);
\draw (.2,-.3) arc(-180:0:.2cm) -- (.6,.3) --node[right,xshift=-.1cm]{$\scriptstyle \overline{x}$} (.6,.7);
\draw[thick, blue, knot] (-.5,-.3) to[out=-135,in=-90] node[left]{$\scriptstyle A$} (-.8,.7);
\roundNbox{fill=white}{(0,0)}{.3}{.1}{.1}{$f^\dag$}
}\,.
\end{align}
The shaded dots in the multiplication denote summing over an appropriate ONB of the trivalent vertex space $\cY(x\otimes y\to z)$ and its adjoint (here, we suppress scalars for simplicity), which is independent of the choice,
and the blue $A$-cup is $\coev_A=m^\dag\circ i$.
In the graphical calculus of strings on tubes, one can denote this multiplication and adjoint as 
$$
\tikzmath{
\draw[thick] (.5,-.6) -- (.5,1);
\draw[thick] (-.5,-.6) -- (-.5,1);
\draw[thick] (0,1) ellipse (.5 and .2);
\draw[thick] (.5,-.6) arc(0:-180:.5 and .2);
\draw[thick,dotted] (.5,-.6) arc(0:180:.5 and .2);
\draw[] (0,-.8) --node[left]{$\scriptstyle Y$} (0,-.3);
\draw[] (0,.8) --node[left]{$\scriptstyle Y$} (0,.3);
\draw[thick, cyan] (.5,.2) arc (0:-180:.5 and .2);
\draw[thick, cyan, dotted] (.5,.2) arc (0:180:.5 and .2);
\node[cyan] at (.4,-.1) {$\scriptstyle x$};
\node[cyan] at (-.4,-.1) {$\scriptstyle x$};
\draw[thick, blue, knot] (0,0) -- (-.3,.3) to[out=135, in=-90] (-.8,1.2);
\roundNbox{fill=white}{(0,0)}{.3}{0}{0}{$f$}
}
\,\,
\cdot
\tikzmath{
\draw[thick] (.5,-.6) -- (.5,1);
\draw[thick] (-.5,-.6) -- (-.5,1);
\draw[thick] (0,1) ellipse (.5 and .2);
\draw[thick] (.5,-.6) arc(0:-180:.5 and .2);
\draw[thick,dotted] (.5,-.6) arc(0:180:.5 and .2);
\draw[] (0,-.8) --node[left]{$\scriptstyle Y$} (0,-.3);
\draw[] (0,.8) --node[left]{$\scriptstyle Y$} (0,.3);
\draw[thick, purple] (.5,.2) arc (0:-180:.5 and .2);
\draw[thick, purple, dotted] (.5,.2) arc (0:180:.5 and .2);
\node[purple] at (.4,-.1) {$\scriptstyle y$};
\node[purple] at (-.4,-.1) {$\scriptstyle y$};
\draw[thick, blue, knot] (0,0) -- (-.3,.3) to[out=135, in=-90] (-.8,1.2);
\roundNbox{fill=white}{(0,0)}{.3}{0}{0}{$g$}
}
=
\tikzmath{
\draw[thick] (.5,-.6) -- (.5,2);
\draw[thick] (-.5,-.6) -- (-.5,2);
\draw[thick] (0,2) ellipse (.5 and .2);
\draw[thick] (.5,-.6) arc(0:-180:.5 and .2);
\draw[thick,dotted] (.5,-.6) arc(0:180:.5 and .2);
\draw[] (0,-.8) --node[left]{$\scriptstyle Y$} (0,-.3);
\draw[] (0,.8) --node[left]{$\scriptstyle Y$} (0,.3);
\draw[] (0,1.3) --node[left]{$\scriptstyle Y$} (0,1.8);
\draw[thick, purple] (.5,.2) arc (0:-180:.5 and .2);
\draw[thick, purple, dotted] (.5,.2) arc (0:180:.5 and .2);
\draw[thick, cyan] (.5,1.2) arc (0:-180:.5 and .2);
\draw[thick, cyan, dotted] (.5,1.2) arc (0:180:.5 and .2);
\draw[thick, blue, knot] (0,0) -- (-.3,.3) to[out=135, in=-90] (-.8,1.2) -- (-.8,1.8);
\draw[thick, blue, knot] (0,1) -- (-.3,1.3) to[out=135, in=-45] (-.8,1.8) -- (-.8,2.2);
\filldraw[blue] (-.8,1.8) circle (.05cm);
\roundNbox{fill=white}{(0,0)}{.3}{0}{0}{$g$}
\roundNbox{fill=white}{(0,1)}{.3}{0}{0}{$f$}
}
=
\sum_{z\in\Irr(\cY)}
\tikzmath{
\draw[thick] (1,-.6) -- (1,2);
\draw[thick] (-1,-.6) -- (-1,2);
\draw[thick] (0,2) ellipse (1 and .2);
\draw[thick] (1,-.6) arc(0:-180:1 and .2);
\draw[thick,dotted] (1,-.6) arc(0:180:1 and .2);
\draw[] (0,-.8) --node[left]{$\scriptstyle Y$} (0,-.3);
\draw[] (0,.8) --node[left]{$\scriptstyle Y$} (0,.3);
\draw[] (0,1.3) --node[left]{$\scriptstyle Y$} (0,1.8);
\draw[thick, purple] (.3,0) to[out=0,in=-135] node[right]{$\scriptstyle y$} (.7,.5);
\draw[thick, purple] (-.3,0) to[out=180,in=-45] node[left]{$\scriptstyle y$} (-.7,.5) ;
\draw[thick, cyan] (.3,1)  to[out=0,in=135] node[right]{$\scriptstyle x$} (.7,.5);
\draw[thick, cyan] (-.3,1) to[out=180,in=45] node[left]{$\scriptstyle x$} (-.7,.5);
\draw[thick, green] (-.7,.5) arc (260:240:1);
\draw[thick, green] (.7,.5) arc (-80:-60:1);
\draw[thick, green, dotted] (1,.6) arc (0:180:1 and .2);
\node[green] at (.9,.4) {$\scriptstyle z$};
\node[green] at (-.9,.4) {$\scriptstyle z$};
\filldraw[fill=yellow] (-.7,.5) circle (.05cm);
\filldraw[fill=yellow] (.7,.5) circle (.05cm);
\draw[thick, blue, knot] (0,0) -- (-.3,.3) to[out=135, in=-90] (-1.2,1.2) -- (-1.2,1.8);
\draw[thick, blue, knot] (0,1) -- (-.3,1.3) to[out=135, in=-45] (-1.2,1.8) -- (-1.2,2.2);
\filldraw[blue] (-1.2,1.8) circle (.05cm);
\roundNbox{fill=white}{(0,0)}{.3}{0}{0}{$g$}
\roundNbox{fill=white}{(0,1)}{.3}{0}{0}{$f$}
}
\qquad\text{and}\qquad
\tikzmath{
\draw[thick] (.5,-.6) -- (.5,1);
\draw[thick] (-.5,-.6) -- (-.5,1);
\draw[thick] (0,1) ellipse (.5 and .2);
\draw[thick] (.5,-.6) arc(0:-180:.5 and .2);
\draw[thick,dotted] (.5,-.6) arc(0:180:.5 and .2);
\draw[] (0,-.8) --node[left]{$\scriptstyle Y$} (0,-.3);
\draw[] (0,.8) --node[left]{$\scriptstyle Y$} (0,.3);
\draw[thick, cyan] (.5,.2) arc (0:-180:.5 and .2);
\draw[thick, cyan, dotted] (.5,.2) arc (0:180:.5 and .2);
\node[cyan] at (-.4,-.1) {$\scriptstyle x$};
\node[cyan] at (.4,-.1) {$\scriptstyle x$};
\draw[thick, blue, knot] (0,0) -- (-.3,.3) to[out=135, in=-90] (-.8,1.2);
\roundNbox{fill=white}{(0,0)}{.3}{0}{0}{$f$}
}
^\dag
\,\,
=
\tikzmath{
\draw[thick] (.5,-.6) -- (.5,1);
\draw[thick] (-.5,-.6) -- (-.5,1);
\draw[thick] (0,1) ellipse (.5 and .2);
\draw[thick] (.5,-.6) arc(0:-180:.5 and .2);
\draw[thick,dotted] (.5,-.6) arc(0:180:.5 and .2);
\draw[] (0,-.8) --node[left]{$\scriptstyle Y$} (0,-.3);
\draw[] (0,.8) --node[left]{$\scriptstyle Y$} (0,.3);
\draw[thick, cyan] (.5,.2) arc (0:-180:.5 and .2);
\draw[thick, cyan, dotted] (.5,.2) arc (0:180:.5 and .2);
\node[cyan] at (-.4,-.1) {$\scriptstyle \overline{x}$};
\node[cyan] at (.4,-.1) {$\scriptstyle \overline{x}$};
\draw[thick, blue, knot] (0,0) -- (-.3,-.3) to[out=-135, in=-90] (-.8,1.2);
\roundNbox{fill=white}{(0,0)}{.3}{0}{0}{$f^\dag$}
}
$$
where one uses the (adjoint of the) unit of the unitary adjunction $\Forget\dashv^\dag \Tr$ to pull the $A$-string off the tube as in \cite[\S{4.3}]{MR3578212}.

Before quotienting, the representation category of $\Tube_A(\cY)$ algebra is $Z(\cY)_A$, the fusion category of $A$-modules in $Z(\cY)$.\footnote{Since $A\in Z(\cY)$ is commutative, there are canonical equivalences between the categories of left and of right $A$-modules.}
To see this, one bootstraps the equivalence $Z(\cY)\cong \Mod(\Tube(\cY))$ described in \cite[\S{2.3}]{2305.14068}.
Given a $K\in Z(\cY)_A$, the underlying object in $Z(\cY)$ can be identified with a right $\Tube(\cY)$-module via
$$
K
\longmapsto
Z(\cY)(\Tr(Y) \to K)\cong \cY(Y\to F(K)),
$$
where $\Tube(\cY)\cong \End_{Z(\cY)}(\Tr(Y))$ acts via precomposition.
Now since $K$ has a right $A$-action, 
this $\Tube(\cY)$ induces a right $\Tube_A(\cY)$-action by post-composing with the following $A$-module action.
\begin{equation}
\label{eq:TubeAY-RightAction}
\tikzmath{
\draw[thick] (.5,-.6) -- (.5,0);
\draw[thick] (-.5,-.6) -- (-.5,0);
\draw[thick] (.5,-.6) arc(0:-180:.5 and .2);
\draw[thick,dotted] (.5,-.6) arc(0:180:.5 and .2);
\fill[lightgray] (.5,0) arc(0:-180:.5 and .2) -- (-.5,.2) arc(180:0:.5cm) -- (.5,0);
\draw[thick] (.5,0) arc(0:-180:.5 and .2);
\draw[thick] (.5,0) -- (.50,.2) arc(0:180:.5cm) -- (-.5,0);
\draw[thick,dotted] (.5,0) arc(0:180:.5 and .2);
\draw[thick, orange, snake] (0,.7) --node[right]{$\scriptstyle K$} (0,1.1);
\node at (0,.4) {$\eta$};
\draw[thick] (0,-.8) -- (0,-.2);
}
\lhd
\tikzmath{
\draw[thick] (.5,-.6) -- (.5,1);
\draw[thick] (-.5,-.6) -- (-.5,1);
\draw[thick] (0,1) ellipse (.5 and .2);
\draw[thick] (.5,-.6) arc(0:-180:.5 and .2);
\draw[thick,dotted] (.5,-.6) arc(0:180:.5 and .2);
\draw[thick] (0,-.8) --node[left]{$\scriptstyle Y$} (0,-.3);
\draw[thick] (0,.8) --node[left]{$\scriptstyle Y$} (0,.3);
\draw[thick, cyan] (.5,.2) arc (0:-180:.5 and .2);
\draw[thick, cyan, dotted] (.5,.2) arc (0:180:.5 and .2);
\node[cyan] at (.4,-.1) {$\scriptstyle x$};
\node[cyan] at (-.4,-.1) {$\scriptstyle x$};
\draw[thick, blue, knot] (0,0) -- (-.3,.3) to[out=135, in=-90] (-.8,1.2);
\roundNbox{fill=white}{(0,0)}{.3}{0}{0}{$f$}
}
:=
\tikzmath{
\fill[lightgray] (.5,1) arc(0:-180:.5 and .2) -- (-.5,1.2) arc(180:0:.5cm) -- (.5,1);
\draw[thick] (.5,1) arc(0:-180:.5 and .2);
\draw[thick] (.5,1) -- (.50,1.2) arc(0:180:.5cm) -- (-.5,1);
\draw[thick,dotted] (.5,1) arc(0:180:.5 and .2);
\draw[thick, orange, snake] (0,1.7) --node[right]{$\scriptstyle K$} (0,2.5);
\node at (0,1.4) {$\eta$};
\draw[thick] (.5,-.6) -- (.5,1);
\draw[thick] (-.5,-.6) -- (-.5,1);
\draw[thick] (.5,-.6) arc(0:-180:.5 and .2);
\draw[thick,dotted] (.5,-.6) arc(0:180:.5 and .2);
\draw[thick] (0,-.8) --node[left]{$\scriptstyle Y$} (0,-.3);
\draw[thick] (0,.8) --node[left]{$\scriptstyle Y$} (0,.3);
\draw[thick, cyan] (.5,.2) arc (0:-180:.5 and .2);
\draw[thick, cyan, dotted] (.5,.2) arc (0:180:.5 and .2);
\node[cyan] at (.4,-.1) {$\scriptstyle x$};
\node[cyan] at (-.4,-.1) {$\scriptstyle x$};
\draw[thick, blue, knot] (0,0) -- (-.3,.3) to[out=135, in=-90] (-.8,1.1) to[out=90, in=-135] (0,2);
\fill[blue] (0,2) circle (.05cm);
\roundNbox{fill=white}{(0,0)}{.3}{0}{0}{$f$}
}
\end{equation}
This action is easily seen to be a $*$-action via the identity
$$
\langle \xi | \eta \lhd f \rangle
=
\frac{1}{d_K}
\tr^\cY_{K}
\left(
\tikzmath{
\draw[thick] (-.3,.7) -- node[left]{$\scriptstyle Y$} (-.3,.3);
\draw[thick] (.3,-.3) -- node[right]{$\scriptstyle Y$} (.3,-.7);
\draw[thick, cyan] (-.3,-.3) node[right, yshift=-.15cm]{$\scriptstyle x$} arc (0:-180:.3cm) -- node[left]{$\scriptstyle \overline{x}$} (-.9,1) arc (180:0:.6cm) -- node[right]{$\scriptstyle x$} (.3,.3);
\draw[thick, orange, snake, knot] (-.3,2.3) -- node[right]{$\scriptstyle K$} (-.3,1.6) --(-.3,1.3);
\draw[thick, orange, snake] (.3,-1.7) -- node[right]{$\scriptstyle K$} (.3,-1.3);
\draw[thick, blue, knot] (-.3,0) -- (-.6,.3) to[out=135, in=-90] (-1.1,1.1) to[out=90, in=-135] (-.3,2);
\fill[red] (-.3,2) circle (.05cm);
\roundNbox{fill=white}{(0,0)}{.3}{.3}{.3}{$f$}
\roundNbox{fill=white}{(-.3,1)}{.3}{0}{0}{$\eta$}
\roundNbox{fill=white}{(.3,-1)}{.3}{0}{0}{$\xi^\dag$}
}
\right)
=
\sum_{y\in \Irr(\cY)}\sum_{i=1}^{n_y}
\frac{1}{d_K}\tr^\cY_y\left(
\tikzmath{
\draw[thick, cyan] (.2,-2.4) -- (.2,-.45) .. controls ++(90:.3cm) and ++(-90:.3cm) .. (-.2,.45) --node[left]{$\scriptstyle x$} (-.2,1.4);
\draw[thick] (.3,.95) --node[right]{$\scriptstyle Y$} (.3,1.4);
\draw[thick] (-.3,-1.95) --node[left]{$\scriptstyle Y$} (-.3,-2.4);
\draw (0,2) --node[right]{$\scriptstyle y$} (0,2.4);
\draw (0,-3.05) --node[right]{$\scriptstyle y$} (0,-3.45); % a little lower due to taller ticket
\draw[thick, orange, snake, knot] (-.3,-1.45) -- (-.3,-.45) node[left, yshift=.2cm]{$\scriptstyle K$} .. controls ++(90:.3cm) and ++(-90:.3cm) .. (.3,.45);
\draw[thick, blue, knot] (-.3,-2.7) -- (-.6,-2.4) to[out=135, in=-90] (-.8,-1.6) to[out=90, in=-135] (-.3,-.7);
\fill[red] (-.3,-.7) circle (.05cm);
\roundNbox{fill=white}{(0,1.7)}{.3}{.2}{.2}{$g_{y,i}$}
\roundNbox{fill=white}{(.3,.7)}{.25}{0}{0}{$\xi$}
\roundNbox{fill=white}{(-.3,-1.7)}{.25}{0}{0}{$\eta$}
\roundNbox{fill=white}{(0,-2.7)}{.35}{.2}{.2}{$h^\dag_{y,i}$} 
}
\right)
$$
where on the right hand side, we have factored $f=\sum_{y\in\Irr(\cY)}\sum_{i=1}^{n_y} h_{y,i}^\dag g_{y,i}$ through $\Irr(\cY)$ as in \cite[Eq.~(8)]{2305.14068} using semisimplicity and used the tracial property of $\tr^\cY$.

One can use the above formula to give matrix elements for both the half-braiding for $K$ with $x$ as in \cite{2305.14068} and the module action $\rho: K\otimes A\to K$ as in \cite[\S{3.4.3}]{MR4642306} by respectively considering $f$ of the forms
$$
\tikzmath{
\draw[] (-.2,.7) -- node[left]{$\scriptstyle Y$} (-.2,.3);
\draw[] (.2,-.3) -- node[right]{$\scriptstyle Y$} (.2,-.7);
\draw[thick, cyan] (-.2,-.3) --node[left]{$\scriptstyle x$} (-.2,-.7);
\draw[thick, cyan] (.2,.3) --node[right]{$\scriptstyle x$} (.2,.7);
\draw[thick, blue, knot] (0,0) to[out=135, in=-90] (-.8,.7);
\roundNbox{fill=white}{(0,0)}{.3}{.1}{.1}{$f$}
}
\qquad\qquad
\text{ or }
\qquad\qquad
\tikzmath{
\draw[] (0,.7) -- node[right]{$\scriptstyle Y$} (0,.3);
\draw[] (0,-.3) -- node[right]{$\scriptstyle Y$} (0,-.7);
\draw[thick, blue, knot] (0,0) to[out=135, in=-90] (-.7,.7);
\roundNbox{fill=white}{(0,0)}{.3}{.0}{0}{$f$}
}\,.
$$

Conversely, given a right $\Tube_A(\cY)$-module $\cK$,
we may restrict it to $\Tube(\cY)$, which sits inside $\Tube_A(\cY)$ as a subalgebra by
\begin{equation}
\label{eq:IncludeTubeIntoATube}
\tikzmath{
\draw[thick] (.5,-.6) -- (.5,1);
\draw[thick] (-.5,-.6) -- (-.5,1);
\draw[thick] (0,1) ellipse (.5 and .2);
\draw[thick] (.5,-.6) arc(0:-180:.5 and .2);
\draw[thick,dotted] (.5,-.6) arc(0:180:.5 and .2);
\draw[] (0,-.8) --node[left]{$\scriptstyle Y$} (0,-.3);
\draw[] (0,.8) --node[left]{$\scriptstyle Y$} (0,.3);
\draw[thick, cyan] (.5,.2) arc (0:-180:.5 and .2);
\draw[thick, cyan, dotted] (.5,.2) arc (0:180:.5 and .2);
\node[cyan] at (.4,-.1) {$\scriptstyle x$};
\node[cyan] at (-.4,-.1) {$\scriptstyle x$};
\roundNbox{fill=white}{(0,0)}{.3}{0}{0}{$f$}
}
\longmapsto
\tikzmath{
\draw[thick] (.5,-.6) -- (.5,1);
\draw[thick] (-.5,-.6) -- (-.5,1);
\draw[thick] (0,1) ellipse (.5 and .2);
\draw[thick] (.5,-.6) arc(0:-180:.5 and .2);
\draw[thick,dotted] (.5,-.6) arc(0:180:.5 and .2);
\draw[] (0,-.8) --node[left]{$\scriptstyle Y$} (0,-.3);
\draw[] (0,.8) --node[left]{$\scriptstyle Y$} (0,.3);
\draw[thick, cyan] (.5,.2) arc (0:-180:.5 and .2);
\draw[thick, cyan, dotted] (.5,.2) arc (0:180:.5 and .2);
\node[cyan] at (.4,-.1) {$\scriptstyle x$};
\node[cyan] at (-.4,-.1) {$\scriptstyle x$};
\draw[thick, blue] (-.8,.6) -- (-.8,1.2);
\fill[blue] (-.8,.6) circle (.05cm);
\roundNbox{fill=white}{(0,0)}{.3}{0}{0}{$f$}
}\,.
\end{equation}
The projections $p_y:=\Tr(\id_y)\otimes i_A$ for $y\in\Irr(\cY)$ endow $\cK$ with an $\Irr(\cY)$-grading: $\cK_y := p_y\cK$.
We get an object in $K\in\cY$ by the formula 
$$
K:=\bigoplus_{y\in\cY} \cK_y\otimes y.
$$
We can then endow $K$ with both a half-braiding with all $y\in \Irr(\cY)$ and with a module action $K\otimes A\to K$ via the matrix element formula discussed above.

To get only local modules, we quotient out by the ideal
$$
\cI_A(\cY):=
\left\langle
\tikzmath{
\draw[thick, blue] (0,0) -- (-.45,.7);
\draw (0,.3) --node[right]{$\scriptstyle Y$} (0,.9);
\draw (0,-.3) --node[right]{$\scriptstyle Y$} (0,-.7);
\draw (.3,0) --node[above]{$\scriptstyle y$} (.7,0);
\draw (-.3,0) --node[above]{$\scriptstyle x$} (-.7,0);
\draw (1.3,0) --node[above]{$\scriptstyle x$} (1.7,0);
\draw[thick, blue,knot] (1,0) to[out=135,in=0] (-.45,.7);
\roundNbox{fill=white}{(0,0)}{.3}{0}{0}{$f$}
\roundNbox{fill=white}{(1,0)}{.3}{0}{0}{$h$}
\filldraw[blue] (-.45,.7) circle (.05cm);
\draw[thick, blue] (-.45,.7) -- (-.45,1);
}
-
\tikzmath{
\draw[thick, blue] (0,0) -- (-.45,.7);
\draw (0,.3) --node[right]{$\scriptstyle Y$} (0,.9);
\draw (0,-.3) --node[right]{$\scriptstyle Y$} (0,-.7);
\draw (.3,0) --node[above]{$\scriptstyle y$} (.7,0);
\draw (-.3,0) --node[above]{$\scriptstyle x$} (-.7,0);
\draw (-1.3,0) --node[above]{$\scriptstyle y$} (-1.7,0);
\draw[thick, blue,knot] (-1,0) -- (-1.3,.3) to[out=135,in=-135] (-.45,.7);
\roundNbox{fill=white}{(0,0)}{.3}{0}{0}{$f$}
\roundNbox{fill=white}{(-1,0)}{.3}{0}{0}{$h$}
\filldraw[blue] (-.45,.7) circle (.05cm);
\draw[thick, blue] (-.45,.7) -- (-.45,1);
}
\right\rangle
=
\left\langle\,
\tikzmath{
\draw[thick] (1,-.6) -- (1,.8);
\draw[thick] (-1,-.6) -- (-1,.8);
\draw[thick] (0,.8) ellipse (1 and .2);
\draw[thick] (1,-.6) arc(0:-180:1 and .2);
\draw[thick,dotted] (1,-.6) arc(0:180:1 and .2);
\draw[] (-.5,-.78) --node[right]{$\scriptstyle Y$} (-.5,-.3);
\draw[] (-.5,.64) --node[right]{$\scriptstyle Y$} (-.5,.3);
\draw[thick, purple] (-.5,0) --node[below,yshift=.1cm]{$\scriptstyle y$} (.5,0);
\draw[thick, cyan] (-.7,0) arc (260:240:1);
\draw[thick, cyan] (.7,0) arc (-80:-60:1);
\draw[thick, cyan, dotted] (1,.1) arc (0:180:1 and .2);
\node[cyan] at (.9,-.1) {$\scriptstyle x$};
\node[cyan] at (-.9,-.1) {$\scriptstyle x$};
\draw[thick, blue, knot] (-.5,0) -- (-.8,.3) to[out=135, in=-90] (-1.2,.8) -- (-1.2,1.1);
\draw[thick, blue, knot] (.5,0) -- (.2,.3) to[out=135, in=-45] (-1.2,.8);
\filldraw[blue] (-1.2,.8) circle (.05cm);
\roundNbox{fill=white}{(-.5,0)}{.3}{0}{0}{$f$}
\roundNbox{fill=white}{(.5,0)}{.3}{0}{0}{$h$}
}
-
\tikzmath{
\draw[thick] (1,-.6) -- (1,.8);
\draw[thick] (-1,-.6) -- (-1,.8);
\draw[thick] (0,.8) ellipse (1 and .2);
\draw[thick] (1,-.6) arc(0:-180:1 and .2);
\draw[thick,dotted] (1,-.6) arc(0:180:1 and .2);
\draw[] (.5,-.78) --node[left]{$\scriptstyle Y$} (.5,-.3);
\draw[] (.5,.64) --node[right]{$\scriptstyle Y$} (.5,.3);
\draw[thick, cyan] (-.5,0) --node[below,yshift=.1cm]{$\scriptstyle x$} (.5,0);
\draw[thick, purple] (-.7,0) arc (260:240:1);
\draw[thick, purple] (.7,0) arc (-80:-60:1);
\draw[thick, purple, dotted] (1,.1) arc (0:180:1 and .2);
\node[purple] at (.9,-.1) {$\scriptstyle y$};
\node[purple] at (-.9,-.1) {$\scriptstyle y$};
\draw[thick, blue, knot] (-.5,0) -- (-.8,.3) to[out=135, in=-90] (-1.2,.8) -- (-1.2,1.1);
\draw[thick, blue, knot] (.5,0) -- (.2,.3) to[out=135, in=-45] (-1.2,.8);
\filldraw[blue] (-1.2,.8) circle (.05cm);
\roundNbox{fill=white}{(-.5,0)}{.3}{0}{0}{$h$}
\roundNbox{fill=white}{(.5,0)}{.3}{0}{0}{$f$}
}
\,\right\rangle
$$
so that $\Tube_A^{\loc}(\cY):=\Tube_A(\cY)/\mathcal{I}_A(\cY)$.
Observe that the ideal $\cI_A(\cY)$ acts as zero in \eqref{eq:CondensationExcitation} as the $D_{p,q}$ terms in the Hamiltonian are applied on the six edges along the boundary of the 2 plaquettes where the excitation lives, showing that the $\Tube_A(\cY)$ action really descends to $\Tube_A^{\loc}(\cY)$.
Alternatively, 
$\Tube_A^{\loc}(\cY)=\Tube_A(\cY)p_A$ for the central idempotent
$$
p_A:=
\tikzmath{
\draw[thick] (.5,-.6) -- (.5,1);
\draw[thick] (-.5,-.6) -- (-.5,1);
\draw[thick] (0,1) ellipse (.5 and .2);
\draw[thick] (.5,-.6) arc(0:-180:.5 and .2);
\draw[thick,dotted] (.5,-.6) arc(0:180:.5 and .2);
\draw[] (0,-.8) --node[left]{$\scriptstyle Y$} (0,-.3) -- (0,.3) --node[left]{$\scriptstyle Y$} (0,.8);
\draw[thick, blue, knot] (.5,.2) arc (0:-180:.5 and .2);
\draw[thick, blue, dotted] (.5,.2) arc (0:180:.5 and .2);
\draw[thick, blue, knot] (-.25,.02) to[out=135, in=-90] (-.8,1.2);
\fill[blue] (-.25,.02) circle (.05cm);
}\,.
$$

By \cite{MR4642306},
$\Mod(\Tube_A^{\rm loc}(\cY))\cong Z(\cY)_A^{\rm loc}$.
To see this, a module $\cK$ for $\Tube_A^{\loc}(\cY)$ gives a module for $\Tube_A(\cY)$ by precomposing with the canonical surjection 
$\Tube_A(\cY) \to \Tube_A(\cY)/\cI_A(\cY)=\Tube_A^{\loc}(\cY)$,
and thus we get an object $K\in Z(\cY)_A$.
However, since $\cI_A(\cY)$ must act as zero on $\cK$ and $A$ has trivial twist as it is condensable, we must have the relation
$$
\tikzmath{
\fill[lightgray]  (1,.8) -- (1,1) arc(0:180:1cm and .5cm) -- (-1,.8) arc(-180:0:1 and .2);
\draw[thick] (1,.8) -- (1,1) arc(0:180:1cm and .5cm) -- (-1,.8);
\draw[thick] (1,-.6) -- (1,.8);
\draw[thick] (-1,-.6) -- (-1,.8);
\draw[thick] (1,.8) arc(0:-180:1 and .2);
\draw[thick,dotted] (1,.8) arc(0:180:1 and .2);
\draw[thick] (1,-.6) arc(0:-180:1 and .2);
\draw[thick,dotted] (1,-.6) arc(0:180:1 and .2);
\node at (0,1.2) {$\eta$};
\draw[thick, orange, snake] (0,1.5) --node[right]{$\scriptstyle K$} (0,2) -- (0,2.4);
\draw[] (-.5,-.78) --node[right]{$\scriptstyle Y$} (-.5,-.3);
\draw[] (-.5,.64) --node[right]{$\scriptstyle Y$} (-.5,.3);
\draw[thick, purple] (-.5,0) --node[below,yshift=.1cm]{$\scriptstyle y$} (.5,0);
\draw[thick, cyan] (-.7,0) arc (260:240:1);
\draw[thick, cyan] (.7,0) arc (-80:-60:1);
\draw[thick, cyan, dotted] (1,.1) arc (0:180:1 and .2);
\node[cyan] at (.9,-.1) {$\scriptstyle x$};
\node[cyan] at (-.9,-.1) {$\scriptstyle x$};
\draw[thick, blue, knot] (-.5,0) -- (-.8,.3) to[out=135, in=-90] (-1.2,.8) to[out=90,in=-135] (0,2);
\draw[thick, blue, knot] (.5,0) -- (.2,.3) to[out=135, in=-45] (-1.2,.8);
\filldraw[blue] (-1.2,.8) circle (.05cm);
\filldraw[blue] (0,2) circle (.05cm);
\roundNbox{fill=white}{(-.5,0)}{.3}{0}{0}{$f$}
\roundNbox{fill=white}{(.5,0)}{.3}{0}{0}{$h$}
}
\underset{(\cI_A(\cY))}{=}
\tikzmath{
\fill[lightgray]  (1,.8) -- (1,1) arc(0:180:1cm and .5cm) -- (-1,.8) arc(-180:0:1 and .2);
\draw[thick] (1,.8) -- (1,1) arc(0:180:1cm and .5cm) -- (-1,.8);
\draw[thick] (1,-.6) -- (1,.8);
\draw[thick] (-1,-.6) -- (-1,.8);
\draw[thick] (1,.8) arc(0:-180:1 and .2);
\draw[thick,dotted] (1,.8) arc(0:180:1 and .2);
\draw[thick] (1,-.6) arc(0:-180:1 and .2);
\draw[thick,dotted] (1,-.6) arc(0:180:1 and .2);
\node at (0,1.2) {$\eta$};
\draw[thick, orange, snake] (0,1.5) --node[right]{$\scriptstyle K$} (0,2) -- (0,2.4);
\draw[] (.5,-.78) --node[left]{$\scriptstyle Y$} (.5,-.3);
\draw[] (.5,.64) --node[right]{$\scriptstyle Y$} (.5,.3);
\draw[thick, cyan] (-.5,0) --node[below,yshift=.1cm]{$\scriptstyle x$} (.5,0);
\draw[thick, purple] (-.7,0) arc (260:240:1);
\draw[thick, purple] (.7,0) arc (-80:-60:1);
\draw[thick, purple, dotted] (1,.1) arc (0:180:1 and .2);
\node[purple] at (.9,-.1) {$\scriptstyle y$};
\node[purple] at (-.9,-.1) {$\scriptstyle y$};
\draw[thick, blue, knot] (-.5,0) -- (-.8,.3) to[out=135, in=-90] (-1.2,.8) to[out=90,in=-135] (0,2);
\draw[thick, blue, knot] (.5,0) -- (.2,.3) to[out=135, in=-45] (-1.2,.8);
\filldraw[blue] (-1.2,.8) circle (.05cm);
\filldraw[blue] (0,2) circle (.05cm);
\roundNbox{fill=white}{(-.5,0)}{.3}{0}{0}{$h$}
\roundNbox{fill=white}{(.5,0)}{.3}{0}{0}{$f$}
}
\underset{\left(
\tikzmath{\draw[thin, densely dotted, <-] (-40:.3 and .15) arc(-40:220:.3 and .15) node[right, yshift=.05cm,xshift=-.1cm]{\tiny $h$};}
\right)
}{=}
\tikzmath{
\fill[lightgray]  (1,.8) -- (1,1) arc(0:180:1cm and .5cm) -- (-1,.8) arc(-180:0:1 and .2);
\draw[thick] (1,.8) -- (1,1) arc(0:180:1cm and .5cm) -- (-1,.8);
\draw[thick] (1,-.6) -- (1,.8);
\draw[thick] (-1,-.6) -- (-1,.8);
\draw[thick] (1,.8) arc(0:-180:1 and .2);
\draw[thick,dotted] (1,.8) arc(0:180:1 and .2);
\draw[thick] (1,-.6) arc(0:-180:1 and .2);
\draw[thick,dotted] (1,-.6) arc(0:180:1 and .2);
\node at (0,1.2) {$\eta$};
\draw[] (-.5,-.78) --node[right]{$\scriptstyle Y$} (-.5,-.3);
\draw[] (-.5,.64) --node[right]{$\scriptstyle Y$} (-.5,.3);
\draw[thick, purple] (-.5,0) --node[below,yshift=.1cm]{$\scriptstyle y$} (.5,0);
\draw[thick, cyan] (-.7,0) arc (260:240:1);
\draw[thick, cyan] (.7,0) arc (-80:-60:1);
\draw[thick, cyan, dotted] (1,.1) arc (0:180:1 and .2);
\node[cyan] at (.9,-.1) {$\scriptstyle x$};
\node[cyan] at (-.9,-.1) {$\scriptstyle x$};
\draw[thick, blue, knot] (.5,0) -- (.2,.3) to[out=135, in=-90] (.2,1.5) to[out=90,in=-45] (-.2,1.8);
\draw[thick, orange, snake, knot] (0,1.53) -- (0,2) --node[right]{$\scriptstyle K$} (0,2.4);
\draw[thick, blue, knot] (-.5,0) -- (-.8,.3) to[out=135, in=-90] (-1.2,.8) to[out=90,in=-135] (0,2);
\filldraw[blue] (-.22,1.82) circle (.05cm);
\filldraw[blue] (0,2) circle (.05cm);
\roundNbox{fill=white}{(-.5,0)}{.3}{0}{0}{$f$}
\roundNbox{fill=white}{(.5,0)}{.3}{0}{0}{$h$}
}
$$
By a Yoneda Lemma style argument, we infer from the above relation that $K\in Z(\cY)_A^{\loc}$.

We can summarize the above conceptually by another equivalent definition of $\Tube_A(\cY)$, parallel to \eqref{eq:tubeViaTrace}, by observing that the adjunction $F\dashv \Tr$ yields the following natural isomorphisms:
\begin{align}
\Tube_A(\cY) &:=
 \bigoplus_{y\in\Irr(\cY)}\cY(y\otimes Y\to F(A)\otimes Y\otimes y) 
 \notag
 \\&\cong
 \cY(F\Tr(Y)\to F(A)\otimes Y)
 \notag
 \\&
 \cong Z(\cY)(\Tr(Y)\to \Tr(F(A)Y))
  \label{eq:tubeAViaTrace}
\\
 &\cong Z(\cY)(\Tr(Y)\to A\Tr(Y)) && \text{(\cite[Lem.~4.11]{MR3578212})}
  \notag
 \\
 &\cong Z(\cY)_A(\Tr(Y)A\to \Tr(Y)A)
 && \text{(\cite[Fig.~4]{MR1936496})}
 \notag
\end{align}
One checks that the algebra structure on $\Tube_A(\cY)$ is the one obtained by transporting the algebra structure on $\End_{Z(\cY)_A}(\Tr(Y)A)$ across the equivalences \eqref{eq:tubeAViaTrace}.
Finally, the fully faithful monoidal inclusion $Z(\cY)_A^{\loc}\hookrightarrow Z(\cY)_A$ has a $2$-sided adjoint $\ell$, which restricts to the identity functor on the full subcategory $Z(\cY)_A^{\loc}$ and sends non-local simple modules to $0$.
One computes that the corner $\Tube_A^{\loc}(\cY)\subseteq\Tube_A(\cY)$ corresponds to the corner
$\End_{Z(\cY)_A^{\loc}}(\operatorname{Loc}(\Tr(Y)A))\subseteq \End_{Z(\cY)}(\Tr(Y)A)$ under the equivalences \eqref{eq:tubeAViaTrace}, where $\operatorname{Loc}$ is the functor $Z(\cY)_A\to Z(\cY)_A^{\loc}$ which takes the local part, sending nonlocal simple modules to $0$.

We end this section with the following facts which we record for later use.

\begin{lem}
\label{lem:FreeModuleFunctorViaTubes}
The inclusion map $i_A\otimes-:\Tube(\cY)\hookrightarrow \Tube_A(\cY)$ implements the free module functor.
That is, if $q_z\in \Tube(\cY)$ is a minimal projection corresponding to the simple object $z\in Z(\cY)$, then $i_A\otimes q_z$ is a projection in $\Tube_A(\cY)$, and the number of copies of the simple module $M\in Z(\cY)_A$ in $Az$ is given by
\[
\dim\left((i_A\otimes q_z)\rhd\Tube_A(\cY)\lhd p_M\right)
\]
where $p_M\in \Tube_A(\cY)$ is a minimal projection corresponding to $M$.
\end{lem}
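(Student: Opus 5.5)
We will work throughout with the categorical description \eqref{eq:tubeAViaTrace},
$$
\Tube_A(\cY)\cong \End_{Z(\cY)_A}(\Tr(Y)A),
$$
together with $\Tube(\cY)\cong\End_{Z(\cY)}(\Tr(Y))$ from \eqref{eq:tubeViaTrace}.

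The first step is to identify the inclusion \eqref{eq:IncludeTubeIntoATube} with the free module functor. We claim that, under these identifications, it is the map
$$
\End_{Z(\cY)}(\Tr(Y))\to \End_{Z(\cY)}(\Tr(Y)A)\cong\End_{Z(\cY)_A}(\Tr(Y)A),\qquad g\mapsto g\otimes\id_A .
$$
To prove this, trace an element $f\in\Tube(\cY)$ through the chain of adjunctions in \eqref{eq:tubeAViaTrace}. Inserting the unit $i_A$ as a disconnected $A$-strand off the tube corresponds, after the isomorphism $\Tr(F(A)Y)\cong A\Tr(Y)$ of \cite[Lem.~4.11]{MR3578212}, to $i_A\otimes f$. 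Under the free-module adjunction of \cite[Fig.~4]{MR1936496}, this becomes $f\otimes \id_A$, using the unit axiom $m\circ(i_A\otimes\id_A)=\id_A$.

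Once this identification holds, most of the conclusion is immediate. The map $g\mapsto g\otimes\id_A$ is a unital $*$-homomorphism, so $i_A\otimes q_z$ is a projection. Moreover, its image $\im(q_z\otimes\id_A)\subseteq\Tr(Y)A$ is the free module $zA$ on the summand $z=\im(q_z)\subseteq \Tr(Y)$.

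We expect this first step to be the main obstacle, since it requires care with the conventions involved: on which side $A$ acts, the half-braiding used to move $A$ off the tube, and the normalizations of the skein inner product. If the direct diagrammatic check becomes messy, we would instead argue by Yoneda. It suffices to check that the right $\Tube_A(\cY)$-module attached to $K\in Z(\cY)_A$ in \eqref{eq:TubeAY-RightAction} restricts, along $i_A\otimes-$, to the $\Tube(\cY)$-module of the underlying object of $K$. This holds because post-composing with $\eta$ and multiplying by $i_A$ gives back $\eta$, again by the unit axiom. Consequently, restriction along $i_A\otimes-$ is the forgetful functor $Z(\cY)_A\to Z(\cY)$, and its left adjoint, induction, is the free module functor.

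For the multiplicity count, use that for a finite-dimensional C$^*$-algebra $\End(P)$ and projections $e,f$, the corner $e\End(P)f$ is $\Hom(\im f\to\im e)$. Taking $e=i_A\otimes q_z$ and $f=p_M$, we get
$$
(i_A\otimes q_z)\rhd\Tube_A(\cY)\lhd p_M\cong \Hom_{Z(\cY)_A}(M\to zA),
$$
since $\im(p_M)\cong M$ is simple, being the image of a minimal projection. (If the paper's left/right action conventions reverse the direction, we get $\Hom(zA\to M)$ instead, which has the same dimension by semisimplicity.) Because $Z(\cY)_A$ is semisimple, the dimension of this Hom space is the multiplicity of $M$ in $Az\cong zA$, where we use commutativity of $A$ to identify left and right free modules.
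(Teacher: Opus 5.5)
Your proposal is correct and follows essentially the same route as the paper. Both arguments use the identification $\Tube_A(\cY)\cong\End_{Z(\cY)_A}(\Tr(Y)A)$ from \eqref{eq:tubeAViaTrace}, recognize the corner cut out by $i_A\otimes q_z$ as the free module $Az$, and then cut down by $p_M$ to get $Z(\cY)_A(M\to Az)$. The one difference is that you justify directly that $i_A\otimes-$ corresponds to $g\mapsto g\otimes\id_A$, via the unit axiom or the restriction/induction adjunction, whereas the paper takes this point from \cite{MR4642306}.
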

\begin{rem}
In the graphical calculus of strings on tubes, 
$
i_A\otimes q_z=\tikzmath{
\draw[thick] (.5,-.6) -- (.5,1);
\draw[thick] (-.5,-.6) -- (-.5,1);
\draw[thick] (0,1) ellipse (.5 and .2);
\draw[thick] (.5,-.6) arc(0:-180:.5 and .2);
\draw[thick,dotted] (.5,-.6) arc(0:180:.5 and .2);
\draw[] (0,-.8) --node[left]{$\scriptstyle Y$} (0,-.3);
\draw[] (0,-.3) -- (0,.3);
\draw[] (0,.8) --node[left]{$\scriptstyle Y$} (0,.3); 
\draw[thick, blue] (-.8,.6) -- (-.8,1.2);
\fill[blue] (-.8,.6) circle (.05cm);
\draw[thick, cyan] (.5,.2) arc (0:-180:.5 and .2);
\draw[thick, cyan, dotted] (.5,.2) arc (0:180:.5 and .2);
\node[cyan] at (.4,-.1) {$\scriptstyle y$};
\node[cyan] at (-.4,-.1) {$\scriptstyle y$};
\roundNbox{fill=white}{(0,0)}{.3}{0}{0}{$q_z$}
}
$\,.
\end{rem}
\begin{proof}
By \cite[\S2.3 and 3.4.3]{MR4642306}, the canonical equivalence $\Mod(\Tube_A(\cY))\cong Z(\cY)_A$ sends $(i_A\otimes q_z)\Tube_A(\cY)$ to $Az$, and thus we have an equivalence of right $\Tube_A(\cY)$ modules
\begin{align*}
(i_A\otimes q_z)\rhd\Tube_A(\cY)
&\underset{\text{\eqref{eq:tubeAViaTrace}}}{\cong}
(1_A\otimes q_z) Z(\cY)(\Tr(Y)\to A\Tr(Y))
\\&\cong
Z(\cY)(\Tr(Y)\to Az)
\\&\cong
Z(\cY)_A(A\Tr(Y)\to Az).
\end{align*}
Now we precompose with $p_M\in \Tube_A(\cY)\cong Z(\cY)_A(A\Tr(Y)\to A\Tr(Y))$ to see
\[
(i_A\otimes q_z)\rhd\Tube_A(\cY)\lhd p_M
\cong
Z(\cY)_A(M\to Az)
=
\Hom_{\Mod(\Tube_A(\cY))}(M\to Az)
.
\qedhere
\]
\end{proof}

We get the following immediate corollaries which will be helpful in our computations in \S\ref{sssec:chiralExample} below.

\begin{cor}
    \label{cor:freeModuleProjRanks}
    For any simple $M\in Z(\cY)_A$,
    \[\operatorname{rank}(P_M(i_A\otimes q_z))=\dim(Z(\cY)_A(M\to Az))\]
\end{cor}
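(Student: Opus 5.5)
The plan is to deduce the corollary from Lemma~\ref{lem:FreeModuleFunctorViaTubes} with elementary finite-dimensional $\rmC^*$-algebra facts. Here $P_M$ is the minimal central projection of $\Tube_A(\cY)$ corresponding to the simple module $M\in Z(\cY)_A\cong\Mod(\Tube_A(\cY))$, and $p_M\le P_M$ is a minimal projection. By the Wedderburn decomposition, $P_M\Tube_A(\cY)\cong M_{n_M}(\bbC)$, where $n_M=\dim(\cK_M)$ and $\cK_M$ is the irreducible right module corresponding to $M$.

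First, since $i_A\otimes q_z$ is a projection in $\Tube_A(\cY)$ (Lemma~\ref{lem:FreeModuleFunctorViaTubes}) and $P_M$ is central, $P_M(i_A\otimes q_z)$ is a projection in the matrix block $P_M\Tube_A(\cY)\cong M_{n_M}(\bbC)$. Its rank is the rank of this matrix.

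Second, I will show this rank equals $\dim\big((i_A\otimes q_z)\Tube_A(\cY)p_M\big)$. Since $p_M=P_Mp_M$ and $P_M$ is central,
\[
(i_A\otimes q_z)\Tube_A(\cY)p_M = P_M(i_A\otimes q_z)\,\big(P_M\Tube_A(\cY)\big)\,p_M .
\]
In $M_{n}(\bbC)$, if $e$ is a projection of rank $r$ and $p$ is a rank-one projection, then $eM_n(\bbC)p$ is the set of matrices whose range lies in $\operatorname{ran}(e)$ and which vanish on $\operatorname{ran}(p)^\perp$. This space is isomorphic to $\Hom(\operatorname{ran}p\to\operatorname{ran}e)$, which has dimension $r$.

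Third, Lemma~\ref{lem:FreeModuleFunctorViaTubes} identifies this dimension with $\dim Z(\cY)_A(M\to Az)$, which gives the claim.

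The only point needing care is bookkeeping: the paper writes $(i_A\otimes q_z)\rhd\Tube_A(\cY)\lhd p_M$ for left multiplication by $i_A\otimes q_z$ and right multiplication by $p_M$. One must also check that the identification of simple modules with central projections matches the module convention used in the lemma. Since both constructions use right modules and minimal projections in the same block, this is consistent, and there is no real obstacle.
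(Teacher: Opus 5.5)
Your proposal is correct and follows essentially the same route as the paper, whose proof just observes that both sides equal $\dim\big((i_A\otimes q_z)\rhd\Tube_A(\cY)\lhd p_M\big)$, invoking Lemma~\ref{lem:FreeModuleFunctorViaTubes} for the right-hand side. You additionally supply the elementary matrix-block argument, that $eM_n(\mathbb{C})p$ has dimension $\operatorname{rank}(e)$ when $p$ has rank one, which the paper leaves implicit for the left-hand side.
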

\begin{proof}
    Both are equal to $\dim\left((i_A\otimes q_z)\rhd\Tube_A(\cY)\lhd p_M\right)$.
\end{proof}

\begin{cor}
\label{cor:freeModuleRank1}
If $z\in \Irr(Z(\cY))$ such that $Az\in \Irr(Z(\cY)_A)$ is simple,
then for a rank one idempotent $q_z\in P_z\Tube(\cY)$, $i_A\otimes q_z\in \Tube_A(\cY)$ has rank one.
\end{cor}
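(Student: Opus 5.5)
The plan is to reduce the statement to Lemma~\ref{lem:FreeModuleFunctorViaTubes} and the structure of projections in a finite dimensional $\rmC^*$-algebra. First, I will note that $i_A\otimes q_z$ is a projection in $\Tube_A(\cY)$. This holds because \eqref{eq:IncludeTubeIntoATube} is an inclusion of $*$-algebras: the unit $i_A$ composed with the multiplication is the identity, and it is compatible with the adjoint through $\coev_A=m^\dag\circ i$. In a finite dimensional $\rmC^*$-algebra, the rank of a projection $e$ is the dimension of the right module $e\Tube_A(\cY)$ measured in simple summands. Equivalently, it is $\sum_M \dim\big(e\,\Tube_A(\cY)\,p_M\big)$, summed over the isomorphism classes of simple modules $M$, with $p_M$ a minimal projection for $M$. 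So rank one means exactly one simple summand, appearing with multiplicity one.

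Second, I will run the computation from the proof of Lemma~\ref{lem:FreeModuleFunctorViaTubes} with $q_z$ now a rank one idempotent in $P_z\Tube(\cY)$. Under \eqref{eq:tubeAViaTrace}, $(i_A\otimes q_z)\Tube_A(\cY)$ is identified with $Z(\cY)(\Tr(Y)\to Az)$, where $q_z$ is viewed as the projection of $\Tr(Y)$ onto one copy of $z$. Via the free/forget adjunction this becomes $Z(\cY)_A(A\Tr(Y)\to Az)$. As a right $\Tube_A(\cY)\cong\End_{Z(\cY)_A}(A\Tr(Y))$-module, this is the module corresponding to the object $Az\in Z(\cY)_A$ under $\Mod(\Tube_A(\cY))\cong Z(\cY)_A$. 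Therefore, for each simple $M$,
\[
\dim\big((i_A\otimes q_z)\Tube_A(\cY)p_M\big)=\dim Z(\cY)_A(M\to Az).
\]
This is the content of Lemma~\ref{lem:FreeModuleFunctorViaTubes}, or equivalently Corollary~\ref{cor:freeModuleProjRanks}.

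Third, I will use the hypothesis that $Az$ is simple. Then $\dim Z(\cY)_A(M\to Az)$ equals $1$ when $M\cong Az$ and $0$ otherwise. Summing over $M$ gives rank one. A cleaner way to say the same thing: $(i_A\otimes q_z)\Tube_A(\cY)(i_A\otimes q_z)\cong\End_{Z(\cY)_A}(Az)\cong\bbC$, so the projection is minimal.

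The main obstacle is making sure that the lemma, which is phrased for a minimal projection $q_z$ corresponding to $z$, really applies to an arbitrary rank one idempotent in $P_z\Tube(\cY)$. The point to check is that any rank one subprojection of the central projection $P_z$ is minimal in $\Tube(\cY)$ and corresponds to a single copy of $z$ inside $\Tr(Y)$. This follows from $\Tube(\cY)\cong\End_{Z(\cY)}(\Tr(Y))$, in which $P_z\Tube(\cY)$ is a full matrix block. I also need to confirm that the identification is independent of which copy of $z$ is chosen, which holds because any two such projections are Murray--von Neumann equivalent in $\Tube(\cY)$, and hence also after applying $i_A\otimes-$.
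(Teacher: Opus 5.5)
Your proposal is correct and follows the paper's route. The paper gives no separate proof and treats this as an immediate consequence of Lemma~\ref{lem:FreeModuleFunctorViaTubes} via Corollary~\ref{cor:freeModuleProjRanks}, summing $\dim Z(\cY)_A(M\to Az)$ over simple $M$ and using simplicity of $Az$, exactly as you do. Your additional check that any rank one projection in the block $P_z\Tube(\cY)$ is a minimal projection for $z$ is bookkeeping the paper leaves implicit.
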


%%%%%%%%%%%%%%%%%%%%%%%%%%%%%%%%%%%%%%%%%%%%%%%%%
\subsection{Defects}
\label{sec:CondensationDefects}

As in \S\ref{sec:GeneralizedTwistDefects} above, we can modify our $A$-condensation Hamiltonian along a 1D defect line as follows.
If $\ell$ is an edge in the 1D defect connecting vertices $u,v$, we replace the $D_{u,v}$ term \eqref{eq:MultiplicationTerm} with unit terms $C_u,C_v$ \eqref{eq:UnitTerm}.
Passing into the ground state space of these $C_u$ terms effectively removes the blue $A$-edges from the 1D defect line.
$$
\tikzmath{
\draw[step=.5] (-.25,-.25) grid (2.25,2.25);
\foreach \x in {0,1,3,4}{
\foreach \y in {0,1,2,3,4}{
\draw[thick, blue] ($ .5*(\x,\y) $) -- ($ .5*(\x,\y) + (-.15,.15)$);
}}
\draw[thick, blue] ($ .5*(2,0) $) -- ($ .5*(2,0) + (-.15,.15)$);
\draw[thick, blue] ($ .5*(2,1) $) -- ($ .5*(2,1) + (-.15,.15)$);
\draw[thick, orange, dashed, knot, rounded corners] (.75,2.25) -- (.75,.75) -- (1.25,.75) -- (1.25,2.25);
\draw[thick, orange, knot, ->] (3.35,1.25) node[below]{\scriptsize defect} to[out=135,in=0] (2.35,1.75) to[out=180,in=45] (1.35,1.25); 
%\draw[thick, orange] (1,2.25) -- (1,1);
%\filldraw[orange] (1,1) circle (.05cm);
}
\qquad\qquad
\tikzmath{
\fill[blue!20] (.5,.5) rectangle (1.5,1);
\draw[step=.5] (-.25,-.25) grid (2.25,2.25);
\fill[blue!20] (.98,.52) rectangle (1.02,.8);
\foreach \x in {0,1,3,4}{
\foreach \y in {0,1,2,3,4}{
\draw[thick, blue] ($ .5*(\x,\y) $) -- ($ .5*(\x,\y) + (-.15,.15)$);
}}
\draw[thick, blue] ($ .5*(2,0) $) -- ($ .5*(2,0) + (-.15,.15)$);
\draw[thick, blue] ($ .5*(2,1) $) -- ($ .5*(2,1) + (-.15,.15)$);
%\draw[thick, orange, dashed, knot, rounded corners] (.75,2.25) -- (.75,.75) -- (1.25,.75) -- (1.25,2.25);
\draw[thick, orange, knot, ->] (3.35,.75) node[below]{\scriptsize twist defects} to[out=135,in=0] (2.35,1.25) to[out=180,in=45] (1.35,.75); 
}
$$
We can measure twist defects via an action of $\Tube_A(\cY)$ (not $\Tube_A^{\loc}(\cY)$!) by modifying \eqref{eq:CondensationExcitation} as follows:
$$
\tikzmath{
\foreach \x in {0,1,2}{
\foreach \y in {0,1}{
\draw ($ (-.7,0) + 1.4*(\x,\y) $) -- ($ (-.3,0) + 1.4*(\x,\y) $);
\draw ($ (.7,0) + 1.4*(\x,\y) $) -- ($ (.3,0) + 1.4*(\x,\y) $);
\draw ($ (0,-.7) + 1.4*(\x,\y) $) -- ($ (0,-.3) + 1.4*(\x,\y) $);
\draw ($ (0,.7) + 1.4*(\x,\y) $) -- ($ (0,.3) + 1.4*(\x,\y) $);
}}
\fill[white] ($ 1.4*(.9,.23) $) rectangle ($ 1.4*(1.1,.65) $);
\filldraw[thick,cyan,fill=blue!30, rounded corners=3pt] (.7,.1) -- (.5,.1) -- (.1,.5) -- (.1,.9) -- (.5,1.3) -- (.9,1.3) -- (1.3,.9) -- (1.5,.9) -- (1.9,1.3) -- (2.3,1.3) -- (2.7,.9) -- (2.7,.5) -- (2.3,.1) -- (1.9,.1) -- (1.5,.5) -- (1.3,.5) -- (.9,.1) -- (.7,.1);
\draw (1.4,.9) -- (1.4,.7);
\foreach \x in {0,2}{
\foreach \y in {0,1}{
\draw[thick, blue, knot] ($ 1.4*(\x,\y) $) -- ($ 1.4*(\x,\y) + (-.45,.7) $);
\roundNbox{fill=white}{($ 1.4*(\x,\y)$)}{.3}{0}{0}{$f_{\x\y}$}
}}
\draw[thick, blue, knot] (1.4,.9) to[out=135,in=-90] ($ (0,1.4) + (-.45,.7) $);
\fill[blue] (1.4,.9) circle (.05cm);
\fill[blue] ($ (0,1.4) + (-.45,.7) $) circle (.05cm);
\draw[thick, blue] ($ (0,1.4) + (-.45,.7) $) -- ($ (0,1.4) + (-.45,1) $);
\draw[thick, blue, knot] ($ 1.4*(1,0) $) -- ($ 1.4*(1,0) + (-.45,.7) $);
\roundNbox{fill=white}{($ 1.4*(1,0)$)}{.3}{0}{0}{$f_{10}$}
\roundNbox{fill=white}{($ 1.4*(1,1)$)}{.3}{0}{0}{$f_{11}$}
}
$$
The twist defects in the condensation model correspond to the relative center $Z_{\cY_A}(\cY)$:
\begin{align*}
\Rep(\Tube_A(\cY))
&\cong
Z(\cY)_A
&&\text{(\cite[\S{3.4.3}]{MR4642306})}
\\&\cong
\End_{\cY-\cY_A}(\cY_A)
&&
\text{(\cite[Ex.~III.2]{MR4640433})}
\\&\cong
\Hom_{\cY_A-\cY_A}(\cY_A\to \cY_A\boxtimes_\cY\cY_A) 
&& 
\text{(dualizability in $\UmFC$)}
\\&=:
Z_{\cY_A}({}_A\cY_A).
\end{align*}
(The second equivalence depends on results from \cite[\S{3}]{MR3039775}.)
Fusion for twist defects, which is the monoidal product on $\End_{\cY-\cY_A}(\cY_A)\cong Z(\cY)_A$,
is defined analogously to \S\ref{sec:FusionOfTwistDefects}.

Arguing as in \S\ref{ssec:indRes}, we may compute the non-invertible symmetry action of sweeping a defect across a anyon in the condensed theory by computing induction and restriction.
The correct functors were described in the discussion at the start of \cite[\S{V}]{MR4640433}: 
\begin{equation}
\label{eq:condensationIndRes}
Z(\cY_A)
=
Z(\cY)_A^{\loc}
\hookrightarrow
Z(\cY)_A
\twoheadrightarrow
{}_AZ(\cY)_A
\to
Z(\cY)_A
=
Z_{\cY_A}(\cY),
\end{equation}
where the right dominant tensor functor ${}_AZ(\cY)_A\to Z(\cY)_A$ is the one which forgets the left $A$ action, and the left full inclusion $Z(\cY)_A\to {}_A Z(\cY)_A$ is adjoint to the functor which forgets the right $A$-action.
Here, the category ${}_AZ(\cY)_A$ of $A-A$ bimodules arises as $(Z(\cY)_A)\boxtimes_{Z(\cY)}(Z(\cY)_A)$.

As in \S\ref{subsec:domainwallwrap}, each of these categories is realized as the representations of one of our tube algebras, and each functor between representation categories is given by tensoring with a bimodule obtained from the dual inclusions and projections between them, analogous to \eqref{eq:fullInclusionRelativeTensor}:
\[
\Tube_{A}^{\loc}(\cY)\overset{p_A}{\twoheadleftarrow}\Tube_A(\cY)\hookrightarrow\Tube_{AA}(\cY)\hookleftarrow\Tube_A(\cY)\]
In particular, ${}_AZ(\cY)_A$ is the category of representations of
\[
\Tube_A(\cY)\otimes_{\Tube(\cY)}\Tube_A(\cY)\cong\Tube_{AA}(\cY):=
\operatorname{span}\set{
\tikzmath{
\draw[thick] (.5,-.6) -- (.5,1);
\draw[thick] (-.5,-.6) -- (-.5,1);
\draw[thick] (0,1) ellipse (.5 and .2);
\draw[thick] (.5,-.6) arc(0:-180:.5 and .2);
\draw[thick,dotted] (.5,-.6) arc(0:180:.5 and .2);
\draw[] (0,-.8) --node[left]{$\scriptstyle Y$} (0,-.3);
\draw[] (0,.8) --node[left]{$\scriptstyle Y$} (0,.3);
\draw[thick, cyan] (.5,.2) arc (0:-180:.5 and .2);
\draw[thick, cyan, dotted] (.5,.2) arc (0:180:.5 and .2);
\node[cyan] at (-.4,-.1) {$\scriptstyle y$};
\node[cyan] at (.4,-.1) {$\scriptstyle y$};
\draw[thick, blue, knot] (0,0) -- (-.3,.3) to[out=135, in=-90] (-.8,1.2);
\draw[thick, blue, knot] (0,0) -- (-.3,-.2) to[out=-135, in=90] (-.8,-.8);
\roundNbox{fill=white}{(0,0)}{.3}{0}{0}{$f$}
}
}
{y\in\Irr(\cY)},
\]
which has the structure of both a $\Tube_A(\cY)$-bimodule category
as well as a
finite dimensional $\rmC^*$-algebra in its own right, where the multiplication\footnote{While the vector space $\Tube_{AA}(\cY)$ is the underlying space of the \emph{V.~Jones basic construction} \cite{MR0696688} of the inclusion of algebras $\Tube(\cY)\subset \Tube_A(\cY)$ from \eqref{eq:IncludeTubeIntoATube}, this multiplication is not V.~Jones' multiplication.} 
and $*$-structure are given by 
\[
\tikzmath{
\draw[thick] (.5,-.6) -- (.5,1);
\draw[thick] (-.5,-.6) -- (-.5,1);
\draw[thick] (0,1) ellipse (.5 and .2);
\draw[thick] (.5,-.6) arc(0:-180:.5 and .2);
\draw[thick,dotted] (.5,-.6) arc(0:180:.5 and .2);
\draw[] (0,-.8) --node[left]{$\scriptstyle Y$} (0,-.3);
\draw[] (0,.8) --node[left]{$\scriptstyle Y$} (0,.3);
\draw[thick, cyan] (.5,.2) arc (0:-180:.5 and .2);
\draw[thick, cyan, dotted] (.5,.2) arc (0:180:.5 and .2);
\node[cyan] at (.4,-.1) {$\scriptstyle x$};
\node[cyan] at (-.4,-.1) {$\scriptstyle x$};
\draw[thick, blue, knot] (0,0) -- (-.3,.3) to[out=135, in=-90] (-.8,1.2);
\draw[thick, blue, knot] (0,0) -- (-.3,-.2) to[out=-135, in=90] (-.8,-.8);
\roundNbox{fill=white}{(0,0)}{.3}{0}{0}{$f$}
}
\,\,
\cdot
\tikzmath{
\draw[thick] (.5,-.6) -- (.5,1);
\draw[thick] (-.5,-.6) -- (-.5,1);
\draw[thick] (0,1) ellipse (.5 and .2);
\draw[thick] (.5,-.6) arc(0:-180:.5 and .2);
\draw[thick,dotted] (.5,-.6) arc(0:180:.5 and .2);
\draw[] (0,-.8) --node[left]{$\scriptstyle Y$} (0,-.3);
\draw[] (0,.8) --node[left]{$\scriptstyle Y$} (0,.3);
\draw[thick, purple] (.5,.2) arc (0:-180:.5 and .2);
\draw[thick, purple, dotted] (.5,.2) arc (0:180:.5 and .2);
\node[purple] at (.4,-.1) {$\scriptstyle y$};
\node[purple] at (-.4,-.1) {$\scriptstyle y$};
\draw[thick, blue, knot] (0,0) -- (-.3,.3) to[out=135, in=-90] (-.8,1.2);
\draw[thick, blue, knot] (0,0) -- (-.3,-.2) to[out=-135, in=90] (-.8,-.8);
\roundNbox{fill=white}{(0,0)}{.3}{0}{0}{$g$}
}
=
\tikzmath{
\draw[thick] (.5,-.6) -- (.5,2);
\draw[thick] (-.5,-.6) -- (-.5,2);
\draw[thick] (0,2) ellipse (.5 and .2);
\draw[thick] (.5,-.6) arc(0:-180:.5 and .2);
\draw[thick,dotted] (.5,-.6) arc(0:180:.5 and .2);
\draw[] (0,-.8) --node[left]{$\scriptstyle Y$} (0,-.3);
\draw[] (0,.8) --node[left]{$\scriptstyle Y$} (0,.3);
\draw[] (0,1.3) --node[left]{$\scriptstyle Y$} (0,1.8);
\draw[thick, purple] (.5,.2) arc (0:-180:.5 and .2);
\draw[thick, purple, dotted] (.5,.2) arc (0:180:.5 and .2);
\draw[thick, cyan] (.5,1.2) arc (0:-180:.5 and .2);
\draw[thick, cyan, dotted] (.5,1.2) arc (0:180:.5 and .2);
\draw[thick, blue, knot] (0,1) -- (-.3,.8) to[out=-135, in=90] (-.8,.2) -- (-.8,-.8);
\draw[thick, blue, knot] (0,0) -- (-.3,-.2) to[out=-135, in=45] (-.8,-.6);
\filldraw[blue] (-.8,-.6) circle (.05cm);
\draw[thick, blue, knot] (0,0) -- (-.3,.3) to[out=135, in=-90] (-.8,1.2) -- (-.8,1.8);
\draw[thick, blue, knot] (0,1) -- (-.3,1.3) to[out=135, in=-45] (-.8,1.8) -- (-.8,2.2);
\filldraw[blue] (-.8,1.8) circle (.05cm);
\roundNbox{fill=white}{(0,0)}{.3}{0}{0}{$g$}
\roundNbox{fill=white}{(0,1)}{.3}{0}{0}{$f$}
}
=
\sum_{z\in\Irr(\cY)}
\tikzmath{
\draw[thick] (1,-.6) -- (1,2);
\draw[thick] (-1,-.6) -- (-1,2);
\draw[thick] (0,2) ellipse (1 and .2);
\draw[thick] (1,-.6) arc(0:-180:1 and .2);
\draw[thick,dotted] (1,-.6) arc(0:180:1 and .2);
\draw[] (0,-.8) --node[left]{$\scriptstyle Y$} (0,-.3);
\draw[] (0,.8) --node[left]{$\scriptstyle Y$} (0,.3);
\draw[] (0,1.3) --node[left]{$\scriptstyle Y$} (0,1.8);
\draw[thick, purple] (.3,0) to[out=0,in=-135] node[right]{$\scriptstyle y$} (.7,.5);
\draw[thick, purple] (-.3,0) to[out=180,in=-45] node[left]{$\scriptstyle y$} (-.7,.5) ;
\draw[thick, cyan] (.3,1)  to[out=0,in=135] node[right]{$\scriptstyle x$} (.7,.5);
\draw[thick, cyan] (-.3,1) to[out=180,in=45] node[left]{$\scriptstyle x$} (-.7,.5);
\draw[thick, green] (-.7,.5) arc (260:240:1);
\draw[thick, green] (.7,.5) arc (-80:-60:1);
\draw[thick, green, dotted] (1,.6) arc (0:180:1 and .2);
\node[green] at (.9,.4) {$\scriptstyle z$};
\node[green] at (-.9,.4) {$\scriptstyle z$};
\filldraw[fill=yellow] (-.7,.5) circle (.05cm);
\filldraw[fill=yellow] (.7,.5) circle (.05cm);
\draw[thick, blue, knot] (0,1) -- (-.3,.8) to[out=-135, in=90] (-1.2,.2) -- (-1.2,-.8);
\draw[thick, blue, knot] (0,0) -- (-.3,-.2) to[out=-135, in=45] (-1.2,-.6);
\filldraw[blue] (-1.2,-.6) circle (.05cm);
\draw[thick, blue, knot] (0,0) -- (-.3,.3) to[out=135, in=-90] (-1.2,1.2) -- (-1.2,1.8);
\draw[thick, blue, knot] (0,1) -- (-.3,1.3) to[out=135, in=-45] (-1.2,1.8) -- (-1.2,2.2);
\filldraw[blue] (-1.2,1.8) circle (.05cm);
\roundNbox{fill=white}{(0,0)}{.3}{0}{0}{$g$}
\roundNbox{fill=white}{(0,1)}{.3}{0}{0}{$f$}
}
\qquad\qquad
\tikzmath{
\draw[thick] (.5,-.6) -- (.5,1);
\draw[thick] (-.5,-.6) -- (-.5,1);
\draw[thick] (0,1) ellipse (.5 and .2);
\draw[thick] (.5,-.6) arc(0:-180:.5 and .2);
\draw[thick,dotted] (.5,-.6) arc(0:180:.5 and .2);
\draw[] (0,-.8) --node[left]{$\scriptstyle Y$} (0,-.3);
\draw[] (0,.8) --node[left]{$\scriptstyle Y$} (0,.3);
\draw[thick, cyan] (.5,.2) arc (0:-180:.5 and .2);
\draw[thick, cyan, dotted] (.5,.2) arc (0:180:.5 and .2);
\node[cyan] at (-.4,-.1) {$\scriptstyle x$};
\node[cyan] at (.4,-.1) {$\scriptstyle x$};
\draw[thick, blue, knot] (0,0) -- (-.3,.3) to[out=135, in=-90] (-.8,1.2);
\draw[thick, blue, knot] (0,0) -- (-.3,-.2) to[out=-135, in=90] (-.8,-.8);
\roundNbox{fill=white}{(0,0)}{.3}{0}{0}{$f$}
}
^\dag
\,\,
=
\tikzmath{
\draw[thick] (.5,-.6) -- (.5,1);
\draw[thick] (-.5,-.6) -- (-.5,1);
\draw[thick] (0,1) ellipse (.5 and .2);
\draw[thick] (.5,-.6) arc(0:-180:.5 and .2);
\draw[thick,dotted] (.5,-.6) arc(0:180:.5 and .2);
\draw[] (0,-.8) --node[left]{$\scriptstyle Y$} (0,-.3);
\draw[] (0,.8) --node[left]{$\scriptstyle Y$} (0,.3);
\draw[thick, cyan] (.5,.2) arc (0:-180:.5 and .2);
\draw[thick, cyan, dotted] (.5,.2) arc (0:180:.5 and .2);
\node[cyan] at (-.4,-.1) {$\scriptstyle \overline{x}$};
\node[cyan] at (.4,-.1) {$\scriptstyle \overline{x}$};
\draw[thick, blue, knot] (0,0) -- (-.3,.3) to[out=135, in=90] (-.8,-.8);
\draw[thick, blue, knot] (0,0) -- (-.3,-.3) to[out=-135, in=-90] (-.8,1.2);
\roundNbox{fill=white}{(0,0)}{.3}{0}{0}{$f^\dag$}
}\,.
\]
The left and right actions of $\Tube_A(\cY)$ are given by the inclusions
\[
\Tube_A(\cY)\cong\Tube_{A1}(\cY)\hookrightarrow\Tube_{AA}(\cY)\hookleftarrow\Tube_{1A}(\cY)\cong\Tube_A(\cY).
\]
Graphically, these inclusions resemble \eqref{eq:IncludeTubeIntoATube}, using the unit to add the missing blue string.
The right (left) action of $\Tube_A(\cY)$ therefore involves stacking the $\Tube_A(\cY)$ element on the top (bottom) and joining its \textcolor{blue}{$A$} string with the top (bottom) \textcolor{blue}{$A$} string of the $\Tube_{AA}(\cY)$ element.

We can use these stacking operations to compute the symmetry action, analogous to \eqref{eq:fullInclusionTubeStack}.
An anyon type $a\in\Irr(Z(\cY_A))$ corresponds to a minimal central idempotent $P_a\in\Tube_A^{\loc}(\cY)$, and a twist defect of type $b\in\Irr(Z(\cY)_A)$ corresponds to a minimal central idempotent $P_b\in\Tube_A(\cY)$.
As before, we may choose rank $1$ projectors $p_a\in\Tube_A^{\loc}(\cY)P_a$ and $p_b\in P_b\Tube_A(\cY)$.
We may then compute 
the multiplicities for the symmetry action as we did in \eqref{eq:symMultRank1}:
\[
N_{a,b}=\dim(p_b\Tube_{AA}(\cY) p_a).
\]
Alternatively, we have
\[N_{a,b}=\frac{\dim(P_b\Tube_{AA}(\cY)P_a)}{\sqrt{\dim(P_b(\Tube_A(\cY))}\sqrt{\dim(\Tube_A^{\loc}(\cY))P_a)}}\]
as in \eqref{eq:symMultCentral}.

The following corollary is similar to Corollary \ref{cor:freeModuleRank1} using Lemma \ref{lem:FreeModuleFunctorViaTubes}.

\begin{cor}
\label{cor:freeModuleMult}
If $x,y\in \Irr(Z(\cY))$ such that the free modules $Ax, Ay\in Z(\cY)_A$ are again simple in $\Irr(Z(\cY)_A)$, then 
$$
p_y\Tube_{AA}(\cY) p_x \cong  Z(\cY)(Ax\to Ay).
$$
\end{cor}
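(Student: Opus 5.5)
We interpret the statement in the same way as Lemma~\ref{lem:FreeModuleFunctorViaTubes} and Corollary~\ref{cor:freeModuleRank1}. Take $p_x:=i_A\otimes q_x$ and $p_y:=i_A\otimes q_y$, where $q_x,q_y\in\Tube(\cY)$ are rank one projections for the simple objects $x,y\in Z(\cY)$. By Corollary~\ref{cor:freeModuleRank1}, these are rank one projections in $\Tube_A(\cY)$, and they correspond to the simple modules $Ax,Ay\in Z(\cY)_A$.

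The plan is to compute $p_y\Tube_{AA}(\cY)p_x$ by rewriting $\Tube_{AA}(\cY)$ as a relative tensor product. We will use $\Tube_{AA}(\cY)\cong\Tube_A(\cY)\otimes_{\Tube(\cY)}\Tube_A(\cY)$, with the left and right $\Tube_A(\cY)$-actions given by the two inclusions $\Tube_{A1}(\cY)\hookrightarrow\Tube_{AA}(\cY)\hookleftarrow\Tube_{1A}(\cY)$. Compressing on both sides then gives
\[
p_y\Tube_{AA}(\cY)p_x\cong \bigl(p_y\Tube_A(\cY)\bigr)\otimes_{\Tube(\cY)}\bigl(\Tube_A(\cY)p_x\bigr).
\]
Each tensor factor is a free-module object of the kind appearing in Lemma~\ref{lem:FreeModuleFunctorViaTubes}. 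Concretely, $p_y\Tube_A(\cY)$ restricted to $\Tube(\cY)$ is the $\Tube(\cY)$-module underlying $Ay$, which is $F(Ay)$ viewed as a $\Tube(\cY)$-module. Likewise, $\Tube_A(\cY)p_x$ is the dual (left) module corresponding to $Ax$ viewed in $Z(\cY)$.

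The next step is to identify the relative tensor product over $\Tube(\cY)$ of a right module $\cK_w$ and a left module $\cK_{w'}^*$. By the Morita equivalence $\Mod(\Tube(\cY))\cong Z(\cY)$, the unitary Yoneda argument used in the proof of Lemma~\ref{lem:FreeModuleFunctorViaTubes} gives
\[
\cK_w\otimes_{\Tube(\cY)}\cK_{w'}^*\cong Z(\cY)(w'\to w).
\]
Both sides are built from $Z(\cY)(\Tr(Y)\to -)$, and they are glued along $\End(\Tr(Y))\cong\Tube(\cY)$. Applying this with $w=Ay$ and $w'=Ax$, where these are regarded as objects of $Z(\cY)$ by forgetting the $A$-action, yields
\[
p_y\Tube_{AA}(\cY)p_x\cong Z(\cY)(Ax\to Ay),
\]
which is the claimed isomorphism. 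As a consistency check, we can compare with induction and restriction in \eqref{eq:condensationIndRes}. Forgetting the left $A$-action and then freely inducing gives $Z(\cY)_A(A\otimes Ax\to Ay)\cong Z(\cY)(Ax\to Ay)$ by the free/forget adjunction. So the answer agrees with that picture.

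The main obstacle is the first step: justifying that $\Tube_{AA}(\cY)$, with the stated multiplication, really is the relative tensor product $\Tube_A(\cY)\otimes_{\Tube(\cY)}\Tube_A(\cY)$ as a bimodule. We also need that compression by the projections $i_A\otimes q$ commutes with this identification. To handle this, we will write the isomorphism diagrammatically. A simple tensor $f\otimes g$ is sent to the tube obtained by stacking $f$ on top of $g$, with the $A$-strands kept as separate top and bottom legs, and with the middle $\Tube(\cY)$ factor absorbed using the $\Tube(\cY)$-balancing relation. Surjectivity follows from the fusion relation, which lets us split any $\Tube_{AA}$ tube into such a product. Injectivity can be checked by comparing dimensions via \eqref{eq:tubeAViaTrace}, since $\Tube_{AA}(\cY)\cong Z(\cY)(A\Tr(Y)\to A\Tr(Y))$ as a vector space.
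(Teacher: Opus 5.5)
Your argument is correct, but it is organized differently from what the paper indicates. The paper only says the corollary follows like Corollary~\ref{cor:freeModuleRank1} from Lemma~\ref{lem:FreeModuleFunctorViaTubes}, which amounts to compressing the trace-adjunction model directly, as in the Lemma's proof.

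Concretely, $p_x=i_A\otimes q_x$ and $p_y=i_A\otimes q_y$ carry the unit on their $A$-strings. Joining them to the $A$-strings of an element of $\Tube_{AA}(\cY)$ therefore changes nothing, by the unit and Frobenius axioms. So $p_y\Tube_{AA}(\cY)p_x$ is just the compression of $\Tube_{AA}(\cY)$ by $q_y,q_x\in\Tube(\cY)$. Running the adjunction chain \eqref{eq:tubeAViaTrace} with the extra $A$-leg gives $\Tube_{AA}(\cY)\cong Z(\cY)(A\Tr(Y)\to A\Tr(Y))$, with $\Tube(\cY)$ acting through $\id_A\otimes q$. This is the same isomorphism you already use for your dimension count. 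Compressing then yields $Z(\cY)(Ax\to Ay)$ at once.

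You instead pass through the bimodule isomorphism $\Tube_{AA}(\cY)\cong\Tube_A(\cY)\otimes_{\Tube(\cY)}\Tube_A(\cY)$ and apply the Lemma on each side. You then glue with the Morita isomorphism $Z(\cY)(\Tr(Y)\to w)\otimes_{\End(\Tr(Y))}Z(\cY)(w'\to\Tr(Y))\cong Z(\cY)(w'\to w)$, which holds because $\Tr(Y)$ contains every simple of $Z(\cY)$. Your route gives a transparent explanation of why the answer is the forget-then-free composite, which is your consistency check. The cost is that you must establish the relative tensor product identification and its compatibility with compressions. That is exactly the obstacle you flag, and the direct compression sidesteps it.

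Two small precision points:
\begin{itemize}
\item $p_y\Tube_A(\cY)$ is the $\Tube(\cY)$-module of $Ay$ regarded as an object of $Z(\cY)$, not of $F(Ay)\in\cY$.
\item In your stacking map, one of the two $A$-strings must be bent downward using the self-duality of $A$. Elements of $\Tube_{AA}(\cY)$ have one upward and one downward $A$-leg.
\end{itemize}
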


%%%%%%%%%%%%%%%%%%%%%%%%%%%%%%%%%%%%%%%
\subsubsection{Individual symmetries in the condensation model}
As discussed in \S\ref{sssec:individualFull}, we view the simple $\cY_A-\cY_A$ bimodule summands of ${}_A\cY_A$ as playing the role of individual symmetry sectors in our overall non-invertible symmetry action.
By \cite[Thm.~III.22]{MR4640433}, indecomposable $\cY_A-\cY_A$ bimodule summands of ${}_A\cY_A$ are in bijection with minimal idempotents in the abelian algebra $(Z(\cY)(A\to A),\ast)$, where $\ast$ is the convolution operation of \cite[Defn.~III.14]{MR4640433}:
\begin{equation}
\label{eq:ConvolutionMultiplication}
\tikzmath{
\draw[thick, blue] (0,.3) -- (0,1);
\draw[thick, blue] (0,-.3) -- (0,-1);
\roundNbox{}{(0,0)}{.3}{0}{0}{$f$}
}
*
\tikzmath{
\draw[thick, blue] (0,.3) -- (0,1);
\draw[thick, blue] (0,-.3) -- (0,-1);
\roundNbox{}{(0,0)}{.3}{0}{0}{$g$}
}
:=
\tikzmath{
\draw[thick, blue] (0,.7) -- (0,1);
\draw[thick, blue] (0,-.7) -- (0,-1);
\draw[thick, blue] (-.4,.3) arc(180:0:.4cm);
\draw[thick, blue] (-.4,-.3) arc(-180:0:.4cm);
\roundNbox{}{(-.4,0)}{.3}{0}{0}{$f$}
\roundNbox{}{(.4,0)}{.3}{0}{0}{$g$}
}
\end{equation}

In particular, as discussed in \cite[\S{V}]{MR4640433}, the identity morphism $\id_A\in Z(\cY)(A\to A)$ becomes a \textit{minimal} central idempotent for the $\ast$ product, corresponding to the trivial summand $\cY_A\subseteq{}_A\cY_A$.

As in \S\ref{sssec:individualFull}, decompose 
\[{}_A\cY_A\cong\bigoplus_{j=0}^{n-1}\cM_j\]
where each $\cM_j$ is an indecomposable $\cY_A-\cY_A$ bimodule category, and let $\omega_j\in Z(\cY)(A\to A)$ be the central projector corresponding to $\cM_j$, where $\omega_0=\id_A$ so that $\cM_0\cong\cY_A$.
One may check that this decomposition corresponds to the decomposition of $\Tube_{AA}(\cY)$ as a direct sum of $\Tube_A(\cY)-\Tube_A(\cY)$ bimodules
\[
\Tube_{AA}(\cY)=\bigoplus_{j=0}^{n-1}\Tube_{\cM_j}(\cY)
\]
where
\[
\Tube_{\cM_j}(\cY):=
\operatorname{span}
\set{
\tikzmath{
\draw[thick] (.5,-.6) -- (.5,1);
\draw[thick] (-.5,-.6) -- (-.5,1);
\draw[thick] (0,1) ellipse (.5 and .2);
\draw[thick] (.5,-.6) arc(0:-180:.5 and .2);
\draw[thick,dotted] (.5,-.6) arc(0:180:.5 and .2);
\draw[] (0,-.8) --node[left]{$\scriptstyle Y$} (0,-.3);
\draw[] (0,.8) --node[left]{$\scriptstyle Y$} (0,.3);
\draw[thick, cyan] (.5,.2) arc (0:-180:.5 and .2);
\draw[thick, cyan, dotted] (.5,.2) arc (0:180:.5 and .2);
\node[cyan] at (-.4,-.1) {$\scriptstyle x$};
\node[cyan] at (.4,-.1) {$\scriptstyle x$};
\draw[thick, blue, knot] (0,0) -- (-.3,.3) to[out=135, in=90] (-1,.3);
\draw[thick, blue, knot] (0,0) -- (-.3,-.3) to[out=-135, in=-90] (-1,-.3);
\draw[thick, blue] (-.75,.47) -- (-.75,1.2);
\draw[thick, blue] (-.75,-.47) -- (-.75,-.8);
\roundNbox{fill=white}{(-1,-0)}{.3}{0}{0}{$\omega_j$}
\roundNbox{fill=white}{(0,0)}{.3}{0}{0}{$f$}
}
}{f\in\Tube_{AA}(\cY)}.
\]
Similar to \eqref{eq:indivSymMultRank1}, we may calculate how wrapping the domain wall $\cM_j$ about an anyon $a\in Z(\cY)_A^{\loc}$ decomposes into anyons 
in $Z_{\cY_A}(\cM_0)=Z(\cY_A)\cong Z(\cY)_A^{\loc}$
and defects
in $Z_{\cY_A}(\cM_k)$ for $k>0$.
For rank $1$ projectors $p_a\in\Tube_A^{\loc}(\cY)P_A$ and $p_b\in P_b\Tube_A(\cY)$, we compute that 
\[a\mapsto\bigoplus_k N^j_{a,b_k}b_k
\qquad\text{ with }
\qquad
N^j_{a,b_k}=\dim(p_{b_k}\Tube_{\cM_j}(\cY)p_a).
\]
Alternatively, we may use the central projections and compute
\begin{align}\label{eq:dimensioncount_condensed}
N^j_{a,b_k}=\frac{\dim(P_{b_k}\Tube_{\cM_j}(\cY)P_a)}{\sqrt{\dim(P_{b_k}\Tube_A(\cY))}\sqrt{\dim(\Tube_A^{\loc}(\cY)P_a)}}.
\end{align}

\begin{rem}
 \label{rem:cheatingConvolution}
 In the remaining examples, we will not need to explicitly diagonalize the algebra $(Z(\cY)(A\to A),\ast)$, because these examples will only have a single nontrivial symmetry sector, {i.e.} ${}_A\cY_A\cong\cY_A\oplus\cM$ where $\cM$ is an indecomposable $\cY_A-\cY_A$ bimodule category.
 Since we know the action of the trivial symmetry and can compute the full symmetry action, we can work out the action of the nontrivial symmetry sector by direct subtraction.
 When working with examples involving multiple nontrivial symmetry sectors, diagonalizing the convolution multiplication and stacking with the bimodules $\Tube_{\cM_j}(\cY)$ will be much harder to avoid.
\end{rem}

%%%%%%%%%%%%%%%%%%%%%%%%%%%%%%%%%%%%%%%%%%%%%%%%%%%%%%%%%%%
\subsubsection{Example: redoing the \texorpdfstring{$S_3$}{S3} example via condensation}

Here we redo the example based on $\cY = \Hilb(S_3)$ and $\cX=\Hilb(\mathbb{Z}_2)$ from \S\ref{subsec:beyondG} in the condensation model. The condensable algebra $A \in D(S_3)$ is then $A = (1,1) + (1,\rho)$, with $\rho$ being the 2d irrep of $S_3$.
Elements in $\Tube_A(\cY)$ are given by
$$
    T_{g,h,v} = 
     \tikzmath[scale=0.8]{
    \draw[thick] (.5,-.6) -- (.5,1);
    \draw[thick] (-.5,-.6) -- (-.5,1);
    \draw[thick] (0,1) ellipse (.5 and .2);
    \draw[thick] (.5,-.6) arc(0:-180:.5 and .2);
    \draw[thick, dotted] (.5,-.6) arc(0:180:.5 and .2);
    \draw[thick, DarkGreen,
  postaction={decorate},decoration={markings, mark=at position 0.7 with {\arrow{>}}  }] (-.5,.2) arc (-180:-90:.5 and .2);
  \draw[thick, DarkGreen,
  postaction={decorate},decoration={markings, mark=at position 0.7 with {\arrow{>}}  }] (0.,0.) arc (-90:0:.5 and .2);
    \node[DarkGreen, left] at (-0.5,.2) {$\scriptstyle g$};
    \draw[thick, violet, postaction={decorate},decoration={markings, mark=at position 0.8 with {\arrow{>}}}] (0,0) -- (0,0.8);
    \node[violet, right] at (0.5,0.6) {$\scriptstyle g^{-1} h g$};
    \draw[thick, gray,postaction={decorate},decoration={markings, mark=at position 0.4 with {\arrow{>}}}] (0,-0.8) -- (0,-0.);
    \node[gray, right] at (0.5,-0.3) {$\scriptstyle h$};
    \draw[thick, DarkGreen, dotted] (.5,.2) arc (0:180:.5 and .2);
    \draw[thick, blue, knot] (0,0) -- (-.3,.3) to[out=135, in=-90] (-.8,1.2) node[left] {$\scriptstyle A$};
    \fill[black] (0,0) circle (0.8mm);
    \node[right] at (0.5,0.2) {$\scriptstyle v$};
} \in \Tube_A(\cY) \quad \text{with } g,h \in S_3 \text{ and } v \in \{ \phi_0, \phi_1, \phi_{2} \}.
$$
Because $F(A) \cong 3 \cdot 1_\cY$, the black dot in the center of the tube is labeled by a basis vector of this three-dimensional space. We choose a basis with $\phi_j$, on which the $S_3$ elements act as
$$
r^j \ket{\phi_k} = \ket{\phi_{j+k}}, \quad s \ket{\phi_k} = \ket{\phi_{-k}}, \quad j, k \in \{0,1,2\} \mod 3.
$$
The multiplication of two tubes is given by
\begin{align*}
    T_{g,h,\phi_j} \cdot T_{g',h',\phi_{k}} = \delta_{h'=g^{-1}hg}\delta_{\phi_j=g'(\phi_k)}T_{gg',h,\phi_k}
\end{align*}
Here we used the half-braiding of $g'$ with $A$. 

For the toric code idempotents, we take the vertical strands to be $F(M)$, where $M$ is a local $A$-module. For the toric code anyons $1$ and $e$, this means the vertical strand is labeled by $1$. We claim that the corresponding minimal central idempotents are given by
\begin{align*}
    P_1 &= \frac{1}{2} \sum_{k} \sum_{g \in \Stab(\phi_k)} T_{g,1,\phi_k} = \frac{1}{2} \left( T_{1,1,\phi_0} + T_{s,1,\phi_0} + T_{1,1,\phi_1} + T_{r^2s,1,\phi_1} + T_{1,1,\phi_2} + T_{rs,1,\phi_2} \right) \\ 
    P_e &= \frac{1}{2} \sum_{k} \sum_{g \in \Stab(\phi_k)} \hspace{-2mm} \sigma(g) T_{g,1,\phi_k} =  \frac{1}{2} \left( T_{1,1,\phi_0} - T_{s,1,\phi_0} + T_{1,1,\phi_1} - T_{r^2s,1,\phi_1} + T_{1,1,\phi_2} - T_{rs,1,\phi_2} \right), 
\end{align*}
where $\sigma(g)$ denotes the sign irrep of $S_3$. 
The restriction to elements $g \in \Stab(\phi_k)$, defined as 
$$
    \Stab(\phi_k):=\{g\in S_3:\; g\phi_k=\phi_k\},
$$
is forced by centrality.

For the anyons $m$ and $f$, we expect the vertical strand to be labeled by group elements $\{s,rs, r^2s\}$. Due to centrality, the top and bottom vertical strand should be labeled by the same object $h$. This requires the label $g$ going around the tube to be in the centralizer of $h$, defined as 
$$
    Z(h) := \{ g \in S_3: \; gh=hg \}.
$$
The $m$ and $f$ central idempotents should then be given by
\begin{align*}
    P_m &= \frac{1}{2} \sum_{k} \sum_{g \in Z(r^ks)} T_{g,r^ks,\phi_{-k}} = \frac{1}{2} \left( T_{1,s,\phi_0} + T_{s,s,\phi_0} + T_{1,rs,\phi_2} + T_{rs,rs,\phi_2} + T_{1,r^2s,\phi_1} + T_{r^2s,r^2s,\phi_1} \right), \\ 
    P_f &= \frac{1}{2} \sum_{k} \sum_{g \in Z(r^ks)} \sigma(g) T_{g,r^ks,\phi_{-k}} \\& = \frac{1}{2} \left( T_{1,s,\phi_0} - T_{s,s,\phi_0} + T_{1,rs,\phi_2} - T_{rs,rs,\phi_2} + T_{1,r^2s,\phi_1} - T_{r^2s,r^2s,\phi_1} \right)
\end{align*}
The same three-dimensional anyon in $D(S_3)$ that produces $m$ and $f$ also produces a two-dimensional defect $\chi$, whose idempotent should be given by
$$
    P_\chi = \frac{1}{2} \sum_{j, k \ne -j} T_{1,r^js, \phi_k}.
$$

Next, we need to find the bimodules implementing the symmetry action, which are linear combinations of basis elements $T_{g,h,\phi_j,\phi_k}$ with two blue $A$ strands.
Their left and right actions on the $\Tube_A(\cY)$ elements are given by
\begin{align*}
    T_{g,h,\phi_j,\phi_k} \cdot T_{g',h',\phi_l} = \delta_{h'=g^{-1}hg} \, \delta_{\phi_k=g'(\phi_l)} \, T_{gg',h,\phi_j,\phi_l} \\ 
    T_{g',h',\phi_l} \cdot T_{g,h,\phi_j,\phi_k} = \delta_{h=g'^{-1}h'g'} \, \delta_{\phi_l=\phi_j} \, T_{g'g,h',g' \phi_j,\phi_k}. 
\end{align*}

We can divide the full $\Tube(\cY)_{AA}$ space into two subspaces:
\[
T_0:=\operatorname{span}\{T_{g,h,\phi_j,\phi_k} \mid \phi_j = g \phi_k \}
\qquad\text{and}\qquad
T_1:=\operatorname{span}\{T_{g,h,\phi_j,\phi_k} \mid \phi_j \ne g \phi_k \}.
\]
which are manifestly invariant under the $\Tube(\cY)_A$ bimodule actions. 

Stacking the $1$ and $e$ idempotents with $T_1$ gives the space
\begin{align*}
    P_b T_1 P_a &= \operatorname{span}\Big\{ \sum_{k,k'} \sum_{\substack{g \in \Stab(\phi_k) \\ g' \in \Stab(\phi_{k'})}} c^a_g c^b_{g'} \, T_{g,1,\phi_k} \cdot T_{m,n,\phi_i,\phi_j} \cdot T_{g',1,\phi_{k'}} \mid \phi_i \ne m \phi_j \Big\} \\ 
    %&= \operatorname{span}\Big\{ \sum_{k,k'} \sum_{\substack{g \in \Stab(\phi_k) \\ g' \in \Stab(\phi_{k'})}} c^a_g c^b_{g'} \, \delta_{\phi_i,\phi_k} \delta_{\phi_j = g'(\phi_{k'})} T_{gmg',1,g\phi_i,\phi_{k'}} \mid \phi_i \ne m \phi_j \Big\} \\ 
    %&= \operatorname{span}\Big\{\sum_{\substack{g \in \Stab(\phi_i) \\ g' \in \Stab(\phi_{j})}} c^a_g c^b_{g'} \, T_{gmg',1,g\phi_i,g'^{-1}\phi_{j}} \mid \phi_i \ne m \phi_j  \Big\} \\
    &= \operatorname{span}\Big\{\sum_{\substack{g \in \Stab(\phi_i)\\ g' \in \Stab(\phi_{j})}} c^a_g c^b_{g'} \, T_{gmg',1,\phi_i,\phi_{j}} \mid \phi_i \ne m \phi_j  \Big\}, \quad c^a_g = \begin{cases}
        1 & a = 1 \\ 
        \sigma(g) & a = e
    \end{cases},
\end{align*}
where $a,b \in \{1,e\}$.
The dimension of this space is $9$, as there are three choices for $i$ and $j$ each (there are also $4$ choices for $m$ satisfying $\phi_i \ne m \phi_j$ at fixed $i,j$, but all of these give the same vector, as the coefficients are independent of $m$).

We also need to compute the dimension of the space $P_a \Tube_A(\cY)$, which is given by
\begin{align*}
    P_a \Tube_A(\cY) &= \operatorname{span}\Big\{ \sum_k \sum_{g \in \Stab(\phi_k)} c^a_g \, T_{g,1,\phi_k} T_{g',h',\phi_l} \Big\} \\
    %&=  \operatorname{span}\Big\{ \sum_k \sum_{g \in \Stab(\phi_k)} c^a_g \delta_{\phi_k=g'(\phi_l)} T_{gg',1,\phi_l} \Big\} \\ 
    &=  \operatorname{span}\Big\{ \sum_{g \in \Stab(g'(\phi_l))} c^a_g  \, T_{gg',1,\phi_l} \Big\} 
\end{align*}
Its dimension should be 9, as there are three choices for $l$, and three independent vectors coming from a choice of $g'$ at each fixed $l$. Then according to \eqref{eq:dimensioncount_condensed}, the multiplicities are 
$$
    N_{b,a} = \frac{\dim(P_b T_1 P_1)}{\dim(P_b \Tube_A(\cY))^{1/2} \dim(P_a \Tube_A(\cY))^{1/2}} = \frac{9}{3^2} = 1.
$$

To check that $m,f\mapsto \chi$, we compute $P_a T_1 P_\chi$ for $a \in \{m,f\}$:
\begin{align*}
    P_\chi T_1 P_a &= \operatorname{span}\left\{ \sum_{k,k', j' \ne -k'} \sum_{g \in Z(r^k s)} c^a_g T_{g,r^k s, \phi_{-k}} T_{m,n,\phi_i, \phi_j} T_{1,r^{j'}s, \phi_{k'}} \mid \phi_i \ne m \phi_j \right\} \\ 
    &= \operatorname{span}\left\{
\sum_{g\in Z(r^{-i}s)} c_g^a\, T_{gm,r^{-i}s,\phi_i,\phi_j} \mid r^{j'}s=(gm)^{-1}r^{-i}s\,(gm) \;
\text{ for a } j'\neq -j \right\}. 
\end{align*}
For each fixed pair $(i,j)$, there are two independent vectors, corresponding to the two allowed values of $j'\neq -j$. Since $i$ and $j$ each have three choices, this makes the space $18$-dimensional.
Then we need to compute the dimension of the space $P_\chi \Tube_\cA(\cY)$, which is spanned by
\begin{align*}
    P_\chi\Tube_\cA(\cY) &= \operatorname{span}\left\{ \sum_{k \ne -j} T_{1,r^j s, \phi_k} T_{g,h,\phi_l}\right\} = \operatorname{span}\left\{ T_{g,r^js,\phi_l} \mid j \ne -k \text{ where } \phi_k=g\phi_l  \right\}
\end{align*}
For each pair $(g,j)$, the two allowed choices for $l$ give two independent vectors; therefore, this space has dimension $36$.
Consequently, for $a\in\{m,f\}$, we have
\[N^1_{\chi,a}=\frac{\dim(P_\chi T_1 P_{a})}{\dim(P_a\Tube_A(\cY))^{1/2}\dim(T_1P_{\chi})^{1/2}}=\frac{18}{\sqrt{9}\sqrt{36}}=1.\]

%%%%%%%%%%%%%%%%%%%%%%%%%%%%%%%%%%%%%%%%%%%%%%%%%%%%%%%
\subsection{Enrichment and condensation together}
\label{sec:chiralexample}

We may also implement condensation for string nets and the twist defect story above in the presence of an enrichment in order to recover chiral models.
Here, our braided-enrichment $\textcolor{red}{\cB}$ will always be assumed to be a UMTC so that the center of a $\textcolor{red}{\cB}$-enriched UFC $\cY$ always factorizes as 
$$
Z(\cY)\cong \textcolor{red}{\cB} \boxtimes Z^{\textcolor{red}{\cB}}(\cY).
$$
Now if $\textcolor{blue}{A} \in Z^{\textcolor{red}{\cB}}(\cY)$ is a condensable algebra, then since it clearly centralizes $\textcolor{red}{\cB}$, one may think of it as acting on the `other side' of the 2D $\cY$ domain wall between $\textcolor{red}{\cB}$ and $Z^{\textcolor{red}{\cB}}(\cY)^{\rev}$.
We will thus think of the blue $\textcolor{blue}{A}$-strings as coming out of the page, while the $\textcolor{red}{\cB}$-enriching strings go into the page.

Recall that $p^{\textcolor{red}{\cB}} \in \Tube(\cY)$
and $p_{\textcolor{blue}{A}}\in \Tube_A(\cY)$
are central projectors, and the corresponding corners are used to detect the surface anyons in the enriched model and condensation model respectively.
Under the inclusion $\Tube(\cY) \hookrightarrow \Tube_A(\cY)$ given by tensoring with the unit map $i: 1_\cY\to A$ from \eqref{eq:IncludeTubeIntoATube}, the image of $p^{\textcolor{red}{\cB}}$ commutes with $p_{\textcolor{blue}{A}}$ since $A\in Z^{\textcolor{red}{\cB}}(\cY)$.
Thus the product $p_{\textcolor{blue}{A}}p^{\textcolor{red}{\cB}}$ is again central in $\Tube_A(\cY)$, and we have a commuting square of finite dimensional algebras related by compression:
\[
\begin{tikzcd}
    \Tube_A(\cY) \arrow[r,twoheadrightarrow,"p_A"] \arrow[d,twoheadrightarrow,"p^{\textcolor{red}{\cB}}"] &
    \Tube_A^{\loc}(\cY) \arrow[d,twoheadrightarrow,"p^{\textcolor{red}{\cB}}"] \\
    \Dome_A^{\textcolor{red}{\cB}}(\cY) \arrow[r,twoheadrightarrow,"p_A"] &
    \Dome_{A,\loc}^{\textcolor{red}{\cB}}(\cY)
\end{tikzcd}
\]

\begin{lem}
\label{lem:RepDomeAB==ZBYA}
    If $\cY$ is a $\cB$-enriched multifusion category and $\textcolor{blue}{A}\in Z^{\textcolor{red}{\cB}}(\cY)$ is a condensable algebra, then the twist defects in the enriched condensation model correspond to
    $\Rep(\Dome^{\textcolor{red}{\cB}}_{\textcolor{blue}{A}}(\cY))\cong Z^{\textcolor{red}{\cB}}(\cY)_{\textcolor{blue}{A}}\cong Z_{\cY_{\textcolor{blue}{A}}}^{\textcolor{red}{\cB}}(\cY)$.
\end{lem}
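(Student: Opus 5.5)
The plan is to prove the two equivalences separately. The first, $\Rep(\Dome^{\cB}_{A}(\cY))\cong Z^{\cB}(\cY)_A$, is the algebraic heart and will be obtained by compressing the unenriched statement $\Mod(\Tube_A(\cY))\cong Z(\cY)_A$ from \S\ref{sec:ExcitationsOfATube}. The second, $Z^{\cB}(\cY)_A\cong Z^{\cB}_{\cY_A}(\cY)$, is the enriched analogue of the chain of equivalences in \S\ref{sec:CondensationDefects} and will be obtained by running that chain verbatim in the $\cB$-enriched 2-category.

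\textbf{Step 1: restating the dome algebra.} By definition, $\Dome^{\cB}_A(\cY)$ is the corner $\Tube_A(\cY)\,p^{\cB}$, where $p^{\cB}$ is the image of $p^{\cB}_Y$ under the inclusion \eqref{eq:IncludeTubeIntoATube}. This element is a central projection of $\Tube_A(\cY)$. It commutes with the tube generators by the plaquette-isotopy argument used for $\Dome^{\cB}(\cX)$. It commutes with the $A$-leg because $A\in Z^{\cB}(\cY)=\cB'$ centralizes $\cB$, so the $\Phi b$ loop can be passed across the $A$ string.

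\textbf{Step 2: identifying the modules of the corner.} For a central projection, modules over the corner $\Tube_A(\cY)p^{\cB}$ are exactly the $\Tube_A(\cY)$-modules on which $p^{\cB}$ acts by $1$. Under $\Mod(\Tube_A(\cY))\cong Z(\cY)_A$, I need to show these are precisely the $A$-modules whose underlying object of $Z(\cY)$ lies in $\cB'$.

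The argument runs as follows.
\begin{itemize}
\item Restrict along $\Tube(\cY)\hookrightarrow\Tube_A(\cY)$. By construction, the $\Tube(\cY)$-module underlying $K$ is $Z(\cY)(\Tr(Y)\to K)$.
\item On this module, $p^{\cB}$ acts as the idempotent whose image is $\Tr^{\cB}(Y)$, as in \cite[Thm.~3.9]{2305.14068}.
\item Hence $p^{\cB}$ acts by $1$ exactly when every map $\Tr(Y)\to K$ factors through $\Tr^{\cB}(Y)$. Since $\Tr(Y)$ generates $Z(\cY)$, this holds iff $K\in\cB'=Z^{\cB}(\cY)$.
\end{itemize}
Because $Z^{\cB}(\cY)$ is a full subcategory of $Z(\cY)$ and $A$ lies in it, $A$-modules in $Z(\cY)$ with underlying object in $Z^{\cB}(\cY)$ are the same as $Z^{\cB}(\cY)_A$. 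This gives the first equivalence. Equivalently, I could transport $p^{\cB}$ through \eqref{eq:tubeAViaTrace} to the idempotent on $\Tr(Y)A$ whose image is $\Tr^{\cB}(Y)A$, and read off $\End_{Z^{\cB}(\cY)_A}(\Tr^{\cB}(Y)A)$ directly.

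\textbf{Step 3: the enriched chain.} I will repeat the chain from \S\ref{sec:CondensationDefects} in $\UmFC^{\cB}$:
\[
Z^{\cB}(\cY)_A\cong\End^{\cB}_{\cY-\cY_A}(\cY_A)\cong\Hom^{\cB}_{\cY_A-\cY_A}(\cY_A\to{}_A\cY_A)=Z^{\cB}_{\cY_A}(\cY).
\]
The first equivalence is the enriched version of \cite[Ex.~III.2]{MR4640433}, in which $\cB$-enriched bimodule functors are the ones compatible with the $\cB$-action and so correspond to modules over $A$ in $\cB'$ rather than in $Z(\cY)$. The second is dualizability in $\UmFC^{\cB}$, as already used in \S\ref{sec:Implementing}.

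\textbf{Main obstacle.} The step I expect to be hardest is the first equivalence of Step 3. The unenriched version relies on \cite[\S3]{MR3039775}, so I would check that the monoidal structure on $\cY_A$ and its $\cB$-enrichment are compatible, that is, that $\cY_A$ is again $\cB$-enriched via $\cB\to Z^{\cB}(\cY)_A^{\loc}\to Z(\cY_A)$. This works because $A$ centralizes $\cB$, so $\cB$ lifts to local modules via free modules $bA$. I would then verify that the enriched version of \cite[Ex.~III.2]{MR4640433} cuts $Z(\cY)_A$ down exactly to $Z^{\cB}(\cY)_A$.
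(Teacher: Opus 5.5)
Your proposal is correct and follows the paper's argument. In both, $\Dome^{\cB}_A(\cY)$ is the corner of $\Tube_A(\cY)$ cut out by the central projection $p^{\cB}$, the equivalence $\Mod(\Tube_A(\cY))\cong Z(\cY)_A$ shows this corner sees exactly $Z^{\cB}(\cY)_A$, and the second equivalence is the same chain through \cite[Ex.~III.2]{MR4640433} and dualizability in $\UmFC^{\cB}$. The only difference is minor: the paper reads off the surviving part from the factorization $Z(\cY)_A\cong Z^{\cB}(\cY)_A\boxtimes\cB$, which uses modularity of $\cB$. Your killing-ring argument instead identifies the $p^{\cB}$-fixed modules directly as those whose underlying object lies in $\cB'$, and so does not need that factorization.
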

\begin{proof}
    By definition, $Z(\cY)=Z^{\textcolor{red}{\cB}}(\cY)\boxtimes\textcolor{red}{\cB}$.
    Since $\textcolor{blue}{A}$ centralizes $\textcolor{red}{\cB}$,
    \[Z(\cY)_{\textcolor{blue}{A}}\cong Z^{\textcolor{red}{\cB}}(\cY)_{\textcolor{blue}{A}}\boxtimes\cB\]
    We know already that $\Rep(\Tube_{\textcolor{blue}{A}}(\cY))\cong Z(\cY)_{\textcolor{blue}{A}}$ by \cite[\S4.3.3]{MR4642306},
    and applying the projector $p_{\textcolor{blue}{A}}$ onto the corner $\Dome^{\textcolor{red}{\cB}}_{\textcolor{blue}{A}}(\cY)$ leaves only $Z^{\textcolor{red}{\cB}}(\cY)_{\textcolor{blue}{A}}$.

    The rest follows as in the end of \S\ref{sec:ExcitationsOfATube}:
    \begin{align*}
    \Rep(\Dome_{\textcolor{blue}{A}}^{\textcolor{red}{\cB}}(\cY))
    &\cong
    Z^{\textcolor{red}{\cB}}(\cY)_{\textcolor{blue}{A}}
    \\&\cong
    \End^{\textcolor{red}{\cB}}_{\cY-\cY_{\textcolor{blue}{A}}}(\cY_{\textcolor{blue}{A}})
    &&
    \text{(\cite[Ex.~III.2]{MR4640433})}
    \\&\cong
    \Hom_{\cY_{\textcolor{blue}{A}}-\cY_{\textcolor{blue}{A}}}^{\textcolor{red}{\cB}}(\cY_{\textcolor{blue}{A}}\to \cY) 
    && 
    \text{(dualizability in $\UmFC^{\textcolor{red}{\cB}}$)}
    \\&=:
    Z_{\cY_{\textcolor{blue}{A}}}^{\textcolor{red}{\cB}}(\cY).
    &&\qedhere
\end{align*}
\end{proof}

In the special case that $\cY=\cB^{\rev}$, which is the typical way one realizes chiral models as 2D boundaries of 3D models as in \cite{1104.2632}, we can further simplify $\Dome_A^{\textcolor{red}{\cB}}(\cY)$ as follows.
Compressing by $p^{\textcolor{red}{\cB}}$ identifies each element of $\Dome_A^{\cB}(\cY)$ with a tube that has the trivial string running around the back.
Therefore, the elements of $\Dome^{\textcolor{red}{\cB}}_A(\cY)$ corresponding to the twist defects can be pictured as only a vertical spine and a blue $A$-string:
\[
\Dome_{{\color{blue}{A}}}^{\textcolor{red}{\cB}}(\cY)
\ni
\hspace*{-.3cm}
\tikzmath{
\draw[thick] (.5,-.6) -- (.5,1);
\draw[thick] (-.5,-.6) -- (-.5,1);
\draw[thick] (0,1) ellipse (.5 and .2);
\draw[thick] (.5,-.6) arc(0:-180:.5 and .2);
\draw[thick,dotted] (.5,-.6) arc(0:180:.5 and .2);
\draw[] (0,-.8) --node[left]{$\scriptstyle Y$} (0,-.3);
\draw[] (0,.8) --node[left]{$\scriptstyle Y$} (0,.3);
\draw[thick, blue, knot] (0,0) -- (-.3,.3) to[out=135, in=-90] (-.8,1.2) node[left, yshift=-.3cm] {$\scriptstyle A$};
\roundNbox{fill=white}{(0,0)}{.3}{0}{0}{$f$}
} 
\qquad\longleftrightarrow\quad
\tikzmath{
\draw[] (0,-.8) --node[left]{$\scriptstyle Y$} (0,-.3);
\draw[] (0,.8) --node[left]{$\scriptstyle Y$} (0,.3);
\draw[thick, blue, knot] (0,0) -- (-.3,.3) to[out=135, in=-90] (-.8,.8) node[left, yshift=-.3cm] {$\scriptstyle A$};
\roundNbox{fill=white}{(0,0)}{.3}{0}{0}{$f$}
}
\in \cY(Y\to A\otimes Y).
\]
The usual argument from \cite[Fig.~4]{MR1936496} shows that this algebra is isomorphic to $D_A^\cB:=\End_{\cY_A}(A\otimes Y)$.\footnote{
In fact, the equivalence $\Dome_A^{\cB}(\cY)\cong\End_{Z^{\cB}(\cY)_A}(A\otimes Y)$ holds even when $\cY\not\cong\overline{\cB}$ via \eqref{eq:tubeAViaTrace}.
}
Observe that any $A$-module $M_A\in \cY_A$ admits a non-zero $A$-module map to $A\otimes Y$ for $Y=\bigoplus_{y\in \Irr(\cY)} y$.

To get the algebra $\Dome_{A,\loc}^\cB(\cY)$ whose central idempotents characterize the anyons, we may either quotient out by an ideal (left hand side below) or look at the image of a central idempotent (right hand side below).
(This second description uses that $Z(\cY)\cong \cY\boxtimes \cB$, and so forgetting $A\in \cY$ down to $\cY$ means we get the reverse braiding as in $\cB$.)
$$
\cI=
\left\langle
    \tikzmath{
\draw[] (0,-.8) --node[left]{$\scriptstyle Y$} (0,-.3);
\draw[] (0,.8) --node[left]{$\scriptstyle Y$} (0,.3);
\draw[thick, blue, knot] (0,0) -- (-.3,.3) to[out=135, in=-90] (-.8,.8) node[left, yshift=-.3cm] {$\scriptstyle A$};
\roundNbox{fill=white}{(0,0)}{.3}{0}{0}{$f$}
} 
-
\tikzmath{
\draw[] (0,-.8) --node[left]{$\scriptstyle Y$} (0,-.3);
\draw[] (0,.8) --node[left]{$\scriptstyle Y$} (0,.3);
\draw[thick, blue, knot] 
plot[smooth] coordinates {(0,0) (-0.2,0.4) (0,0.6) (0.2,0.8) (0,1.0) (-0.2,1.2)};
\roundNbox{fill=white}{(0,0)}{.3}{0}{0}{$f$}
\draw[black, knot] (0,0.8) -- (0,1.2);
}
\right\rangle
\qquad\qquad\qquad\qquad
p
=
\tikzmath{
\draw (0,-.7) --node[right]{$\scriptstyle Y$} (0,-.3)-- (0,0);
\draw[thick, blue, knot] (0,0) circle (.2cm);
\draw[knot] (0,.7) --node[right]{$\scriptstyle Y$} (0,.3)--(0,0);
\draw[thick, blue] (135:.2cm) to[out=135,in=-90] (-.5,.7) node[left, yshift=-.2cm]{$\scriptstyle A$};
\filldraw[blue] (135:.2cm) circle (.05cm);
}\,.
$$

\subsection{Symmetry action in enriched condensation model}
\label{ssec:condensationSymmetryAction}

The story in \S\ref{sec:CondensationDefects} easily extends to the $\cB$-enriched setting.
We compute the non-invertible symmetry action of sweeping a defect across a anyon in the condensed theory by computing induction and restriction:
\begin{equation}
\label{eq:EnrichedCondensationIndRes}
Z^{\textcolor{red}{\cB}}(\cY_A)
=
Z^{\textcolor{red}{\cB}}(\cY)_A^{\loc}
\hookrightarrow
Z^{\textcolor{red}{\cB}}(\cY)_A
\twoheadrightarrow
Z^{\textcolor{red}{\cB}}(\cY)
\to
Z^{\textcolor{red}{\cB}}(\cY)_A
=
Z^{\textcolor{red}{\cB}}_{\cY_A}(\cY).
\end{equation}
Each of these categories is realized as the representations of one of our $\Dome$ algebras, and each functor between representation categories is given by tensoring with a bimodule obtained from the dual inclusions and projections between them:
\[\Dome^{\textcolor{red}{\cB}}_{A,\loc}(\cY)\overset{p_A}{\twoheadleftarrow}\Dome^{\textcolor{red}{\cB}}_A(\cY)\hookleftarrow\Dome^{\textcolor{red}{\cB}}(\cY)\hookrightarrow\Dome^{\textcolor{red}{\cB}}_A(\cY)\]
In particular, we may compute the relative tensor product
\[\Dome^{\textcolor{red}{\cB}}_A(\cY)\boxtimes_{\Dome^{\textcolor{red}{\cB}}(\cY)}\Dome^{\textcolor{red}{\cB}}_A(\cY)
\cong
\Dome^{\textcolor{red}{\cB}}_{AA}(\cY)
\]
where the left and right actions of $\Dome_A^{\cB}(\cY)$ are given by the inclusions
\[
\Dome^{\textcolor{red}{\cB}}_{A}(\cY)
\cong
\Dome^{\textcolor{red}{\cB}}_{A1}(\cY)
\hookrightarrow
\Dome^{\textcolor{red}{\cB}}_{AA}(\cY)
\hookleftarrow
\Dome^{\textcolor{red}{\cB}}_{1A}(\cY)
\cong
\Dome^{\textcolor{red}{\cB}}_{A}(\cY)
\]
The functor \eqref{eq:EnrichedCondensationIndRes} is the equivalent to the composition
\[Z^{\textcolor{red}{\cB}}(\cY)_A^{\loc}\hookrightarrow Z^{\textcolor{red}{\cB}}(\cY)_A\cong Z^{\textcolor{red}{\cB}}(\cY)_{A1}\hookrightarrow Z^{\textcolor{red}{\cB}}(\cY)_{AA}\twoheadrightarrow Z^{\textcolor{red}{\cB}}(\cY)_{1A}\cong Z^{\textcolor{red}{\cB}}(\cY)_A\]
Thus, we can use stacking of tube algebras to compute the symmetry action, analogous to \eqref{eq:fullInclusionTubeStack}.
An anyon type $a\in\Irr(Z^{\textcolor{red}{\cB}}(\cY_A))$ corresponds to a minimal central idempotent $P_a\in\Dome^{\textcolor{red}{\cB}}_{A,\loc}(\cY)$, and a twist defect type $b\in\Irr(Z^{\textcolor{red}{\cB}}(\cY)_A)$ corresponds to a minimal central idempotent $P_b\in\Dome^{\textcolor{red}{\cB}}_A(\cY)$.
We then compute 
the space
\[P_b\Dome^{\textcolor{red}{\cB}}_{AA}(\cY) P_a;
\]
if the result is nonzero, then the anyon or twist defect $b$ appears in the wrapped domain wall excitation.

%%%%%%%%%%%%%%%%%%%%%%%%%%%%%%%%%%%%%%%%%%%%%%%%%%%%%%%%%%%
\subsubsection{Example: chiral and condensation example with \texorpdfstring{$\cY=SU(2)_{10}$}{Y=SU(2)10}}
\label{sssec:chiralExample}
For an example with chiral topological order and condensation, we take 
$$
\cY=SU(2)_{10} \quad \text{and} \quad \cX=\cY_A = E_6 \quad \text{for } A = 0 \oplus 6,
$$
using the convention that the simple objects in $SU(2)_{10}$ are labeled by the integers $0,1, \dots,10$. Also, we will attach a $\cB = \overline{SU(2)}_{10}$ bulk to the string-net, so that the anyons live in the chiral theory $Z^\cB(\cX) = \mathsf{Ising}$.
The twist defects live in the relative enriched center $Z^{\cB}_{\cX}(\cY)$.
In this case, $\cY$ can be equipped with a nondegenerate braiding, so we have $Z(\cY)\cong\cY\boxtimes\overline{\cY}$, and setting $\cB=\overline{\cY}$ gives $Z^{\cB}(\cY)\cong\cY$.
Since $Z^{\cB}_{\cY_A}(\cY)\cong Z^{\cB}(\cY)_A$, in this case we have $Z^{\cB}_{\cY_A}(\cY)\cong\cY_A$, meaning that the twist defects are described by $\cY_A\cong E_6$.
In particular, the anyons are given by the modular tensor category $\cY_A^{\loc}\cong\mathsf{Ising}$ \cite[Cor.~4.9]{MR1815993} cf.~\cite[Cor.~3.30]{MR3039775}.

In this particular example, we are in the fortunate case that every simple module admits an isometric embedding into $A\otimes Y$.
Also, since simple objects in $Z(\cY)$ forget to simple objects in $\cY$, we have $\Dome^{\cB}(\cY)\cong\operatorname{span}\{\id_y:y\in\Irr(\cY)\}$.
Each $\id_y$ is therefore a projector which is both central and rank $1$. 
Moreover, all simple twist defects except for the non-abelian anyon $\sigma$ are free modules.
Data about $E_6$ twist defects and the corresponding $A$-modules were computed in \cite[pg.~33]{MR1936496}, and is reproduced in Figure~\ref{fig:cinclusionTable}.
\begin{figure}[!ht]
    \centering
    \[\begin{array}{|c|c|c|}
     \hline
     c\in\Irr(\cY_A) & xA\cong c & U(c)\\
     \hline
     % \rowcolor{blue!10}
     1 & 0A & 0\oplus 6 \\\hline
     % \rowcolor{blue!10}
     \psi & (10)A & 4\oplus 10 \\\hline
     \rowcolor{orange!30} \sigma & - & 3\oplus 7\\\hline
     a & 1A & 1\oplus 5\oplus 7 \\\hline
     x & 2A & 2\oplus 4\oplus 6\oplus 8 \\\hline
     b & 9A & 3\oplus 5\oplus 9\\\hline
    \end{array}\]
    \caption{\label{fig:cinclusionTable} $E_6$ twist defects (including anyons) written as free modules on $SU(2)_{10}$ anyons, and the underlying direct sum of $SU(2)_{10}$ anyons. The anyon which is not a free module is highlighted in orange. }
\end{figure}

Therefore, by Corollary \ref{cor:freeModuleRank1},
we can find rank $1$ projectors $p_b\in P_b\Dome^{\cB}_A(\cY)$ for every $b\in\Irr(Z^{\cB}(\cY)_A)$ except $\sigma$.
\begin{itemize}
\item 
The other two Ising anyons correspond to the local simple free $A$-modules, with rank $1$ idempotents given by:
\[
    p_1 = \tikzmath{
    \draw[thick, dotted] (0,0) node[below] {$\scriptstyle 0$}-- (0,1);
    \draw[thick, blue] (-.4,0.5) -- (-0.4,1) node[left] {$\scriptstyle A$};
    \fill[blue] (-.4,0.5) circle (0.6mm);
    }, \qquad p_\psi = \tikzmath{
    \draw[thick] (0,0) node[below] {$\scriptstyle 10$}-- (0,1);
    \draw[thick, blue] (-.4,0.5) -- (-0.4,1) node[left] {$\scriptstyle A$};
    \fill[blue] (-.4,0.5) circle (0.6mm);
    }
\]
\item
Similarly, the twist defects correspond to the nonlocal simple free $A$-modules, so some rank $1$ idempotents are given by:
\begin{align*}
    p_a &=  \tikzmath{
    \draw[thick] (0,0) node[below] {$\scriptstyle 1$}-- (0,1);
    \draw[thick, blue] (-.4,0.5) -- (-0.4,1) node[left] {$\scriptstyle A$};
    \fill[blue] (-.4,0.5) circle (0.6mm);
    },
    % M_a=1 \oplus 5 \oplus 7, 
    \quad p_x =  \tikzmath{
    \draw[thick] (0,0) node[below] {$\scriptstyle 2$}-- (0,1);
    \draw[thick, blue] (-.4,0.5) -- (-0.4,1) node[left] {$\scriptstyle A$};
    \fill[blue] (-.4,0.5) circle (0.6mm);
    }, 
    % M_x=2 \oplus 4 \oplus 6 \oplus 8,
    \quad p_b =  \tikzmath{
    \draw[thick] (0,0) node[below] {$\scriptstyle 9$}-- (0,1);
    \draw[thick, blue] (-.4,0.5) -- (-0.4,1) node[left] {$\scriptstyle A$};
    \fill[blue] (-.4,0.5) circle (0.6mm);
    }
    % , M_b=3 \oplus 5 \oplus 9
    .
\end{align*}
\end{itemize}
Computing the minimal central idempotent $P_\sigma$ or a rank $1$ idempotent in $P_\sigma\Dome^{\cB}_{A,\loc}(\cY)$ is more difficult, since $\sigma$ is the only non-free module.
However, careful dimension-counting and the use of Corollary~\ref{cor:freeModuleProjRanks} will let us compute the symmetry action coefficients while bypassing this detail.

Here we again have a trivial symmetry generator $\pi_+$ and a nontrivial one $\pi_-$.
Since $\pi_+$ corresponds to the identity symmetry, we know that $N^+_{a,b}=\delta_{a=b}$.
With the above choices of $p_{xA}$ for $xA\in\Irr(\cY_A)\setminus\{\sigma\}$, it is straightforward to see that
\[\dim(p_{yA}\Dome^{\cB}_{AA}(\cY)p_{xA})=\dim(\cY(Ax\to Ay))\]
We can therefore compute all the symmetry action coefficients $N^-_{j,k}$ for $j,k\notin\{\sigma\}$:
\[N^-_{1,x}=N^-_{x,1}=1,\, N^-_{\psi,x}=N^-_{x,\psi}=1,\, N^-_{a,b}=N^-_{b,a}=1,\]
with all other $N^-_{j,k}$ equal to $0$.

Finally, to find the coefficients $N^-_{\sigma,\cdot}$, we take advantage of the fact that $3A\cong b\oplus\sigma$ (or symmetrically, that $7A\cong a\oplus\sigma$).
Consider the projection
\[q:=\tikzmath{
    \draw[thick] (0,0) node[below] {$\scriptstyle 3$}-- (0,1);
    \draw[thick, blue] (-.4,0.5) -- (-0.4,1) node[left] {$\scriptstyle A$};
    \fill[blue] (-.4,0.5) circle (0.6mm);
    }\]
Although the free module $3A$ is not simple, Lemma~\ref{lem:FreeModuleFunctorViaTubes} says that $q$ is still a projection, and that $q=q_{\sigma}+q_b$, where
$q_{\sigma}:=qP_{\sigma}$ and $q_b:=qP_b$ are rank $1$ projections in their respective blocks of $\Tube_A(\cY)$ by Corollary~\ref{cor:freeModuleProjRanks}.
Therefore, recalling that $N^+_{a,\sigma}=N^+_{a,b}=0$, we obtain
$$
    \dim(q \Dome^\cB_{AA}(\cY) p_a) = N^-_{a,\sigma} + N^-_{a,b}.
$$
The left hand side is equal to $\dim(\cY(A1 \to A3)) = 2$. Since we already know that $N^-_{a,b}=1$, it follows that we must have $N^-_{a,\sigma} = N^-_{\sigma,a} = 1$. An analogous argument determines $N^-_{\sigma,b} = N^-_{b,\sigma} = 1$. Thus the non-invertible symmetry maps all Ising anyons to twist defects:
$$
    1 \mapsto x, \qquad \psi \mapsto x, \qquad \sigma \mapsto a \oplus b.
$$

\bibliographystyle{alpha}
{\footnotesize{
\bibliography{bibliography}
}}
\end{document}